\def\namedlabel#1#2{\begingroup
    #2%
    \def\@currentlabel{#2}%
    \phantomsection\label{#1}\endgroup
}
\newcolumntype{L}[1]{>{\raggedright\arraybackslash}p{#1}} 
\newcolumntype{C}[1]{>{\centering\arraybackslash}p{#1}} 
\newcolumntype{R}[1]{>{\raggedleft\arraybackslash}p{#1}} 
\newcommand{\refer}[1]{
\sbox0{#1}%
  \ifdim\wd0=0pt
    {\textcolor{CadetBlue}{[ref needed]}}
  \else
    {\textcolor{CadetBlue}{[ref: #1]}}
  \fi
}
\theoremstyle{definition}
\newtheorem*{deferredproof}{Proof}
\newlength{\TOarg} \newlength{\TOunit}
{\catcode`p=12 \catcode`t=12 \gdef\TOnum#1pt{#1}}
\newcommand\TOop[2]{\setlength{\TOarg}{#2}%
   \FPdiv\TOres{\expandafter\TOnum\the\TOarg}{\expandafter\TOnum\the\TOunit}%
   \FPround\TOres\TOres{#1}}
\begin{document}

	\newcommand{\deltaOut}[1]{\delta^+_{#1}}
	\newcommand{\deltaIn}[1]{\delta^-_{#1}}
	\newcommand{\head}{\textnormal{head}}
	\newcommand{\tail}{\textnormal{tail}}
	\newcommand{\negate}[1]{\overline{#1}}

	\newcommand{\fontMacro}[1]{\mathsf{#1}}
	\newcommand{\fontVariable}[1]{\mathrm{#1}}
	\newcommand{\fontParameter}[1]{\mathbf{#1}}
	\newcommand{\fontParameterElement}[1]{\textnormal{#1}}
	\newcommand{\fontConstraintName}[1]{\textsc{#1}}
	\newcommand{\foM}[1]{\fontMacro{#1}}
	\newcommand{\foV}[1]{\fontVariable{#1}}
	\newcommand{\foP}[1]{\fontParameter{#1}}
	\newcommand{\foPE}[1]{\fontParameterElement{#1}}
	\newcommand{\foCN}[1]{\fontConstraintName{#1}}

\newcommand{\nat}{\mathbb{N}}
\newcommand{\psnat}{\mathbb{N}_{> 0}}
\newcommand{\reals}{\mathbb{R}}
\newcommand{\preals}{\mathbb{R}_{\geq 0}}
\newcommand{\nreals}{\mathbb{R}_{\leq 0}}

\newcommand{\psreals}{\mathbb{R}_{> 0}}
\newcommand{\nsreals}{\mathbb{R}_{< 0}}

\newcommand{\integers}{\mathbb{Z}}
\newcommand{\pint}{\mathbb{Z}_{\geq 0}}
\newcommand{\nint}{\mathbb{Z}_{\leq 0}}

\newcommand{\psint}{\mathbb{Z}_{> 0}}
\newcommand{\nsint}{\mathbb{Z}_{< 0}}

\newcommand{\G}{G}
\newcommand{\VG}{V_{G}}
\newcommand{\EG}{E_{G}}
\newcommand{\uE}{u_{E}}
\newcommand{\cE}{c_{E}}

\newcommand{\Tree}{\mathcal{T}_{\G}}
\newcommand{\Vtree}{V_{\mathcal{T}}}
\newcommand{\Etree}{E_{\mathcal{T}}}
\newcommand{\fPath}{\pi}

\newcommand{\Gext}{G_{\textnormal{ext}}}
\newcommand{\Vext}{V_{\textnormal{ext}}}
\newcommand{\Eext}{E_{\textnormal{ext}}}
\newcommand{\EextMinus}{E^{-}_{\textnormal{ext}}}

\newcommand{\Enagg}{E^{R}_\textnormal{ext}}
\newcommand{\EextSm}{E^{S^-}_\textnormal{ext}}
\newcommand{\EextSp}{E^{S^+}_\textnormal{ext}}
\newcommand{\EextTp}{E^{T^+}_\textnormal{ext}}
\newcommand{\EextR}{E^{r}_\textnormal{ext}}

\newcommand{\cEext}{c_{\Eext}}
\newcommand{\uEext}{u_{\Eext}}
\newcommand{\sSource}{\textnormal{o}^+}
\newcommand{\sSink}{\textnormal{o}^-}
\newcommand{\sSinkS}{\textnormal{o}^-_S}
\newcommand{\sSinkR}{\textnormal{o}^-_r}

\newcommand{\eR}{r}
\newcommand{\uR}{u_{r}}

\newcommand{\sS}{S}
\newcommand{\cS}{c_S}
\newcommand{\uS}{u_S}

\newcommand{\sT}{T}


\newcommand{\f}{f}
\newcommand{\fopt}{\hat{f}}
\newcommand{\fFopt}{\hat{f}_F}

\newcommand{\xS}{x}

\newcommand{\cu}{\fontParameter{u}}
\newcommand{\cc}{\fontParameter{c}}


\newcommand{\solX}{\hat{\xS}}
\newcommand{\solf}{\hat{\f}}
\newcommand{\solS}{\hat{S}}
\newcommand{\solTree}{\hat{\mathcal{T}}_G}
\newcommand{\solT}{\hat{T}}
\newcommand{\unT}{\bar{T}}
\newcommand{\solt}{\bar{t}}
\newcommand{\solVtree}{\hat{V}_\mathcal{T}}
\newcommand{\solEtree}{\hat{E}_\mathcal{T}}
\newcommand{\solfPath}{\hat{\pi}}

\newcommand{\fGext}{G_{\textnormal{ext}}^f}
\newcommand{\fpGext}{G_{\textnormal{ext}}^{f'}}
\newcommand{\fVext}{V_{\textnormal{ext}}^f}
\newcommand{\fEext}{V_{\textnormal{ext}}^f}

\newcommand{\fsolVext}{V_{\textnormal{ext}}^{\solf}}
\newcommand{\fsolEext}{E_{\textnormal{ext}}^{\solf}}

\newcommand{\FeasibleCVSAP}{\mathcal{F}_{\textnormal{CVSAP}}}
\newcommand{\FeasibleACVSAP}{\mathcal{F}_{\textnormal{A-CVSAP}}}
\newcommand{\FeasibleMCVSAP}{\mathcal{F}_{\textnormal{M-CVSAP}}}
\newcommand{\FeasibleIP}{\mathcal{F}_{\textnormal{IP}}}
\newcommand{\FeasibleLP}{\mathcal{F}_{\textnormal{LP}}}

\newcommand{\cCVSAP}{C_{\textnormal{CVSAP}}}
\newcommand{\cIP}{C_{\textnormal{IP}}}


\newcommand{\GextM}{G_{\textnormal{\tiny{MCF}}}}
\newcommand{\VextM}{V_{\textnormal{\tiny{MCF}}}}
\newcommand{\EextM}{E_{\textnormal{\tiny{MCF}}}}
\newcommand{\EextMT}{E^{T}_{\textnormal{\tiny{MCF}}}}
\newcommand{\EextMS}{E^{S}_{\textnormal{\tiny{MCF}}}}

\newcommand{\EnaggM}{E^{R}_\textnormal{\tiny{MCF}}}
\newcommand{\EextSmM}{E^{S^-}_\textnormal{\tiny{MCF}}}
\newcommand{\EextSpM}{E^{S^+}_\textnormal{\tiny{MCF}}}
\newcommand{\EextTpM}{E^{T^+}_\textnormal{\tiny{MCF}}}
\newcommand{\EextRM}{E^{r}_\textnormal{\tiny{MCF}}}

\newcommand{\cEextM}{c_{\Eext}}
\newcommand{\uEextM}{u_{\Eext}}

\newcommand{\solP}{\hat{p}}
\newcommand{\solfS}{\hat{f}^s}
\newcommand{\solfT}{\hat{f}^T}

\newcommand{\cIPM}{C_{\textnormal{\tiny{MCF}}}}

\newcommand{\FeasibleMCF}{\mathcal{F}_{\textnormal{MCF}}}

\newcommand{\priority}{p}
\newcommand{\order}{o}

\title{The Constrained Virtual Steiner Arborescence Problem:\\ \vspace{4pt} \smaller[1]{Formal Definition, Single-Commodity Integer Programming Formulation and Computational Evaluation}\thanks{A conference version of this paper will appear in the proceedings of OPODIS 2013, \copyright Springer LNCS. It will be available at link.springer.com.} }

\author {
   Matthias Rost, Stefan Schmid\\
\texttt{\{mrost,stefan\}@net.t-labs.tu-berlin.de}
}

\institute{Telekom Innovation Laboratories (T-Labs) \& TU Berlin, Germany}

\date{}

\maketitle \thispagestyle{empty}

\begin{abstract}
As the Internet becomes more virtualized and software-defined, new functionality is introduced in the network core: the distributed resources available in ISP central offices, universal nodes, or datacenter middleboxes can be
used to process (e.g., filter, aggregate or duplicate) data. Based on this new networking paradigm, we formulate the Constrained Virtual Steiner Arborescence
Problem (CVSAP) which asks for optimal locations to perform in-network processing, in order to jointly minimize processing costs and network traffic while
respecting link and node capacities. 

We prove that CVSAP cannot be approximated (unless $P = \mathit{NP}$), and accordingly, develop the exact algorithm VirtuCast to compute (near) optimal solutions. VirtuCast consists of: (1) a compact single-commodity flow Integer Programming (IP) formulation; (2) a flow decomposition algorithm to reconstruct individual routes from the IP solution. The compactness of the IP formulation allows for computing lower bounds even on large instances quickly, speeding up the algorithm. We rigorously prove VirtuCast's correctness.

To complement our theoretical findings, we have implemented VirtuCast and present an extensive computational evaluation, showing that, using VirtuCast, realistically sized instances can be solved (close to) optimality. We show that VirtuCast significantly improves upon naive multi-commodity formulations and also initiate the study of primal heuristics to generate feasible solutions during the branch-and-bound process.
\end{abstract}

\section{Introduction} 
\label{sec:introduction}

Multicast and aggregation are two fundamental functionalities offered by many communication networks. In order to efficiently distribute content (e.g., live TV) to multiple receivers, a multicast solution should duplicate the content as close to the receivers as possible. Analogously, in aggregation applications such as distributed network monitoring, the monitoring data may be filtered or aggregated along the path to the observer, to avoid redundant transmissions over physical links.
Efficient multicasting and aggregation is a mature research field, and many important theoretical and practical results have been obtained over the last decades. Applications range from IPTV~\cite{BLTJ:BLTJ20217} over sensor networks~\cite{stacy,in-net-sensor} to fiber-optical transport~\cite{BLTJ:BLTJ20217}.

This paper is motivated by the virtualization trend in today's Internet, and in particular by network (function) virtualization \cite{ETSIwhitepaperNFV} 
and software-defined networking, e.g., OpenFlow~\cite{simple}. 
In virtualized environments, resources can be allocated or leased flexibly at the locations where they are most useful or cost-effective: Computational and storage resources available at middleboxes in datacenters~\cite{camdoop} or in distributed (micro-) datacenters in the wide-area network~\cite{gigascope,dcc13,ratul} can for example be used for in-network processing, e.g., to reduce traffic during the MapReduce shuffle phase~\cite{naas}.

Such distributed resource networks open new opportunities on how services can be deployed.
In the context of aggregation and multicasting, for example, a new degree of freedom arises:
the sites (i.e., the number and locations) used for the data processing, becomes subject to optimization.

This paper initiates the study of how to efficiently allocate in-network processing functionality in order to jointly minimize
network traffic and computational resources.
Importantly, for many of these problem variants, classic Steiner Tree models ~\cite{voss2006SteinerChapter} are no longer applicable~\cite{molnar2011hierarchies}.
Accordingly, we coin our problem the \emph{Constrained Virtual Steiner Arborescence Problem (CVSAP)}, as the goal is to install a set of processing nodes and to connect all terminals via them to a single root.

\begin{figure}[t]
	\begin{subfigure}[t]{0.28\textwidth}
		\centering
		\includegraphics[scale=0.6]{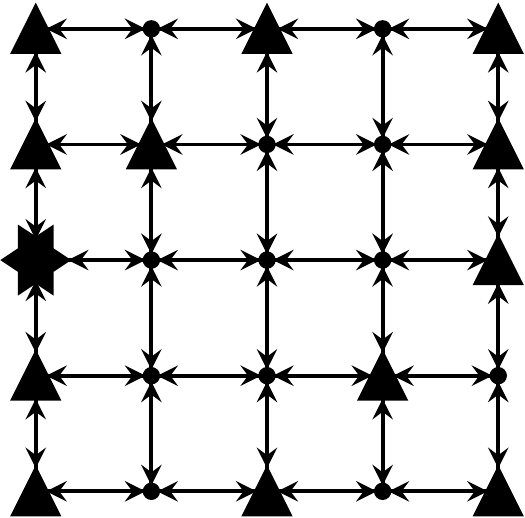}	
		\caption{$5 \times 5$ Grid Topology}
		\label{fig:example-idea-topology}
	\end{subfigure}
	\qquad
	\begin{subfigure}[t]{0.28\textwidth}
			\centering
		\includegraphics[scale=0.6]{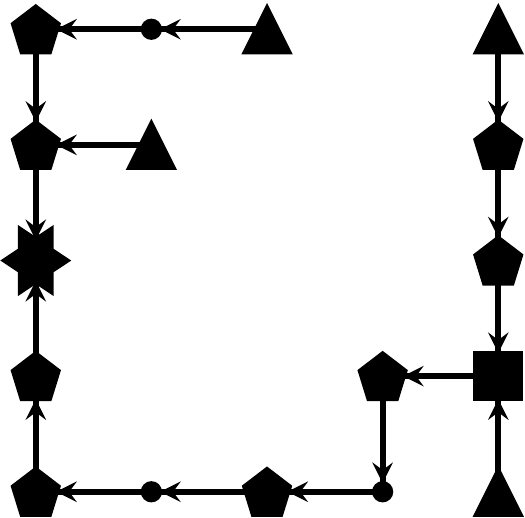}	
		\caption{Steiner Arborescence}
		\label{fig:example-idea-steiner}
	\end{subfigure}		
	\qquad
	\begin{subfigure}[t]{0.28\textwidth}
			\centering
		\includegraphics[scale=0.6]{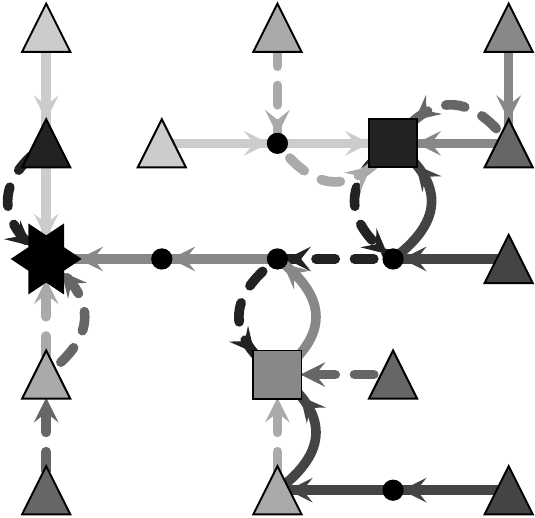}	
		\caption{Virtual Arborescence}
		\label{fig:example-idea-paths}
	\end{subfigure}
	\caption{An aggregation example on a $5 \times 5$ grid. Terminals are depicted as triangles while the receiver is depicted as star. The terminals must establish a path towards the receiver, while multiple data streams may be aggregated by activated processing locations. Such processing locations are pictured as squares or, in case that a processing location is collocated with a terminal, pentagons. In Figure~(\subref{fig:example-idea-paths}), equally colored and dashed edges represent paths, originating at the node with the same color.}
	\label{fig:example-idea}
\end{figure}

\subsubsection{Example.} To illustrate our model, consider the aggregation example depicted in Figure~\ref{fig:example-idea}. The terminals must connect to the receiver, while processing functionality can be placed on nodes to aggregate any number of incoming data flows into a single one. Assuming no costs for placing processing functionality, the problem reduces to the Steiner Arborescence Problem and the optimal solution, depicted in Figure~\ref{fig:example-idea-steiner}, uses $16$ edges and $9$ processing locations.
However, assuming unit edge costs and activation costs of $5$ for processing locations, this solution is suboptimal. Figure~\ref{fig:example-idea-paths} depicts a solution which only uses $2$ processing locations and $26$ edges: Terminals in the first column directly connect to the receiver, while the remaining terminals use one of the two processing nodes. Note that we allow for nested processing of flows: the upper processing node forwards its aggregation result to the lower processing node, from where the result is then forwarded to the receiver.

\subsubsection{Contribution.}

This paper presents the first concise graph-theoretic formulation of the \emph{Constrained Virtual Steiner Arborescence Problem (CVSAP)} which captures the trade-off between traffic optimization benefits and in-network processing costs arising in virtualized environments, and which also generalizes many classic in-network processing problems related to multicasting and aggregation.
We prove that CVSAP cannot be approximated unless $\mathit{NP} \subseteq \mathit{P}$ holds and therefore focus on obtaining provably good solutions for CVSAP in non-polynomial time. To this end we introduce the algorithm VirtuCast, which is based on Integer Programming (IP) and allows to obtain optimal solutions. The advantage of VirtuCast lies in the fact that even for large problem instances, when optimal solutions cannot be computed in reasonable time, our approach bounds the gap to optimality as lower bounds are computed on the fly. 

VirtuCast consists of two components: a single-commodity IP formulation which can be solved by branch-and-cut methods and a decomposition algorithm to construct the routing scheme. Our IP formulation not only uses a smaller number of variables compared to alternative multi-commodity IP formulations, but also yields good linear relaxations in practice.

Our main contribution is the constructive proof that any solution to our IP formulation can be decomposed to yield a valid routing scheme connecting all terminals via processing nodes to the root. This is intriguing, as the single-commodity flow in the network is not restricted to directed acyclic graphs (cf.~Figure~\ref{fig:example-idea-paths}). In fact, as already shown in \cite{molnar2011hierarchies}, forbidding directed acyclic graphs (DAGs) may yield suboptimal solutions. Rather, we allow for the iterative processing of flows, such that processing nodes may be connected to other processing nodes.

To complement our theoretical findings, we present an extensive computational study. As our results indicate, using VirtuCast realistically sized instances can be solved (close to) optimality. Adapting well-known separation techniques from Steiner Tree literature, we improve the runtime of VirtuCast by the order of a magnitude. To compute feasible solutions to CVSAP in the course of the branch-and-bound process, we developed a novel primal heuristic and investigate its performance. To demonstrate the advantages of our single-commodity formulation over multi-commodity formulations, we have implemented a simple multi-comodity formulation and show its computational inferiority.

\subsubsection{Overview.}

In Section~\ref{sec:problem-definition} we formally introduce CVSAP and its aggregation and multicast versions and show its general inapproximability. We continue by presenting our algorithm VirtuCast, that relies on a single-commodity IP formulation, in Section~\ref{sec:ilp-formulation}. For a later computational comparison, we introduce a multi-commodity Mixed Integer Program for CVSAP in Section~\ref{sec:mcf-formulation}. We present our implementation of VirtuCast and our primal heuristic to generate feasible solutions in the course of the branch-and-bound process in Section~\ref{sec:implementation}. Section~\ref{sec:results} presents our extensive computational evaluation, containing a validation of our implementation choices, an analysis of VirtuCast's performance on realistically sized instances, an analysis of our primal heuristic's performance as well as the computational comparison with the multi-commodity flow formulation. We conclude this paper with summarizing related work in Section~\ref{sec:related-work}.

\section{The Constrained Virtual Steiner Arborescence Problem} 
\label{sec:problem-definition}

The Constrained Virtual Steiner Arborescence Problem (CVSAP) considers multicast routing and optimal in-network aggregation problems in which processing locations can be \emph{chosen} to reduce traffic. As using (or leasing) in-network processing capabilities comes at a certain cost (e.g., the corresponding resources cannot be used by other applications), there is a trade-off between additional processing and traffic reduction. In contrast to the classic Steiner Tree Problems~\cite{voss2006SteinerChapter}, our model distinguishes between nodes that merely forward traffic and nodes that may actively \emph{process flows}. Informally, the task is to construct a minimal cost spanning arborescence on the set of  active processing nodes, sender(s) and receiver(s), such that edges in the arborescence correspond to paths in the original graph. As edges in the arborescence represent logical links (i.e., routes) between nodes, we refer to the problem as \emph{Virtual} Steiner Arborescence Problem. Based on the notion of virtual edges, the underlying paths may overlap and may use both the (resource-constrained) nodes and edges in the original graph multiple times (cf. Figure~\ref{fig:example-idea-paths}). We naturally adopt the notion of \emph{Steiner nodes} in our model, and refer to processing nodes contained in the virtual arborescence as \emph{active Steiner nodes}. The following notations will be used throughout this paper.

\subsubsection{Notation.}
In a directed graph $\G=(\VG,\EG)$ we denote by $\mathcal{P}_\G$ the set of all simple, directed paths in $\G$. Given a set of simple paths $\mathcal{P}$, we denote by $\mathcal{P}[e]$ the subset of paths contained in $\mathcal{P}$ that contain edge $e$. We use the notation $P=\langle v_1,v_2,\dots, v_n \rangle$ to denote the directed path $P$ of length $|P|=n$ where $P_i \triangleq v_i \in \VG$ for $1\leq i \leq n$ and $(v_i,v_{i+1}) \in \EG$ for $1\leq i < n$.
We denote the set of outgoing and incoming edges, restricted on a subset $F\subseteq \EG$, for node $v \in \VG$ by $\delta^+_F(v)=\{(v,u) \in F\}$ and $\delta^-_F(v)=\{(u,v) \in F\}$ and set $\delta^+_F(W)=\{(v,u) \in F|v\in W, u\notin W\}$ and respectively $\delta^-_F(W)=\{(u,v)\in F|v\in W, u\notin W\}$. We abridge $\foM{f}((\foM{y},\foM{z}))$ to $\foM{f}(\foM{y},\foM{z})$ for functions defined on tuples.

\subsubsection{Formal Problem Statement.}
Our general problem definition presented henceforth captures both the multicast and the aggregation scenarios. We model the physical infrastructure as capacitated, directed network $\G = (\VG,\EG,\cE,\uE)$ with integral capacities on the edges $\uE: \EG \to \mathbb{N}$ and real-valued, positive edge costs $\cE: \EG \to \mathbb{R^+}$. On top of this network, we define an abstract request $R_G=(r,S,T,\uR,\cS,\uS)$, where $T \subseteq \VG$ defines the set of terminals that need to be connected with the root $r \in \VG \setminus T$, for which an integral capacity $\uR \in \mathbb{N}$ is given.  The set $S \subseteq \VG \setminus (\{r\} \cup T)$ denotes the set of possible \emph{Steiner sites}, i.e. nodes at which Steiner nodes may be \emph{activated}. Steiner sites are attributed with a positive cost that is incurred upon using it as active Steiner node $\cS : S \to \mathbb{R^+}$, and an integral capacity $\uS: S \to \mathbb{N}$. It should be noted that we require the sets $S$ and $T$ to be disjoint for terminological reasons. A node $v \in S \cup T$ can easily be modeled by introducing a new node $v_T \in T$ and letting $v \in S$ such that $v_T$ is only  connected to $v$ with $\cE(v,v_T)=0$ and $\uE(v,v_T)=1$.

In the aggregation scenario the terminals represent nodes holding data that needs to be forwarded to the root (the single receiver) while data may be aggregated at active Steiner nodes. Contrary, in the multicast scenario the root represents the single sender that must stream (the same) data to each of the terminals, while active Steiner nodes may duplicate and reroute the stream. The capacities on the root and on the Steiner sites limit the degree in the Virtual Arborescence, formally introduced next.

\begin{definition}[Virtual Arborescence] 
\label{def:VA}
Given a directed graph $G=(\VG,\EG)$ and a root $r\in \VG$, a \emph{Virtual Arborescence (VA)} on $G$ is defined as $\mathcal{T}_G=(\Vtree,\Etree,r,\fPath)$ where $\{r\} \subseteq \Vtree \subseteq \VG$, $\Etree \subseteq \Vtree \times \Vtree$ and $\fPath: \Etree \to \mathcal{P}_G $ maps each edge in the arborescence on a simple directed path $P \in \mathcal{P}_G$ such that
\begin{description}
\item[\quad (\namedlabel{def:VA1}{\textnormal{VA-1}})] $(\Vtree, \Etree, r)$ is an arborescence root at $r$ with edges either directed towards or away from $r$,
\item[\quad (\namedlabel{def:VA2}{\textnormal{VA-2}})] for all $(u,v) \in \Etree$ the directed path $\fPath(u,v)$ connects $u$ to $v$ in $G$. \hfill $\qed$
\end{description} 
\end{definition}

A link $(v,w)\in \Etree$ represents a logical connection between nodes $v$ and $w$ while the function $\fPath(v,w)=P$ defines the route taken to establish this link. Note that the directed path $P$ must, \emph{pursuant} to the orientation $(v,w)$ of the logical link in the arborescence, start with $v$ and end at $w$. Figure~\ref{fig:example-idea-paths} illustrates our definition of the VA:  equally colored and dashed paths represent edges of the Virtual Arborescence. Using the concept of Virtual Arborescence, we can concisely state the problem we are attending to.
\begin{definition}[Constrained Virtual Steiner Arborescence Problem] 
\label{def:CVSAP}
Given a directed capacitated network $G=(\VG,\EG,\cE,\uE)$ and a request $R_G =(r,S,T,\uR,\cS,\uS)$ as above,  the \emph{Constrained Virtual Steiner Arborescence Problem (CVSAP)} asks for a minimal cost Virtual Arborescence $\mathcal{T}_G=(\Vtree,\Etree,r,\fPath)$  satisfying the following conditions:
\begin{description}
\item[\quad (\namedlabel{def:CVSAP1}{\textnormal{CVSAP-1}})] $\{r\} \cup T \subseteq \Vtree$ and  $ \Vtree \subseteq \{r\} \cup S \cup T$,
\item[\quad (\namedlabel{def:CVSAP2}{\textnormal{CVSAP-2}})] for all $t \in T$ holds  $\delta^+_{\Etree}(t)+\delta^-_{\Etree}(t) = 1$,
\item[\quad (\namedlabel{def:CVSAP3}{\textnormal{CVSAP-3}})] for the root $\delta^+_{\Etree}(r)+\delta^-_{\Etree}(r) \leq \uR$ holds,
\item[\quad (\namedlabel{def:CVSAP4}{\textnormal{CVSAP-4}})]  for all $s \in S\cap \Vtree$ holds $\delta^-_{\Etree}(s) + \delta^+_{\Etree}(s) \leq \uS(s) + 1$ and
\item[\quad (\namedlabel{def:CVSAP5}{\textnormal{CVSAP-5}})] for all $e\in \EG$ holds $|\left(\fPath(\Etree)\right)[e]| \leq \uE(e)$.
\end{description}
Any VA $\Tree$ satisfying \ref{def:CVSAP1} - \ref{def:CVSAP5} is said to be a feasible solution.
The cost of a Virtual Arborescence is defined to be
\begin{align*}
\phantomsection
 \cCVSAP(\mathcal{T}_G) = \sum\limits_{e\in \EG} \cE(e) \cdot |\left(\fPath(\Etree)\right)[e]| + \sum\limits_{s \in S \cap \Vtree} \cS(s) ~.
\end{align*}
where $|\left(\fPath(\Etree)\right)[e]|$ is the number of times an edge is used in different paths \hfill $\qed$
\end{definition}

In the above definition, \ref{def:CVSAP1} states that terminals and the root must be included in $\Vtree$, whereas non Steiner sites are excluded. We identify $\Vtree \setminus (\{r\} \cup T)$ with the set of active Steiner nodes. Condition \ref{def:CVSAP2} states that terminals must be leaves in $\mathcal{T}_G$ and \ref{def:CVSAP3} and \ref{def:CVSAP4} enforce degree constraints in $\mathcal{T}_G$. The term $\fPath(\Etree)$ in Condition \ref{def:CVSAP5} determines the set of all used paths and consequently $\fPath(\Etree)[e]$ yields the set of paths that use $e \in \Etree$. As $\fPath$ is injective and maps on simple paths, Condition \ref{def:CVSAP5} enforces that edge capacities are not violated.

\begin{definition}[Multicast / Aggregation CVSAP] 
\label{def:A-M-CVSAP}
The CVSAP constitutes the \emph{Constrained Virtual Steiner Aggregation Problem (A-CVSAP)} in case that the edges of $\Etree$ are oriented towards $r$, or conversely constitutes the \emph{Constrained Virtual Steiner Multicast Problem (M-CVSAP)} if edges are oriented away from $r$. We denote the set of feasible solutions by $\FeasibleACVSAP$ and $\FeasibleMCVSAP$ respectively.
\end{definition}
It is important to note that Definitions~\ref{def:VA} and \ref{def:A-M-CVSAP} together imply that in A-CVSAP each terminal is connected to the root and conversely that in M-CVSAP the root is connected to all terminals. Furthermore, the following remark establishes that the degree constraints (see \ref{def:CVSAP3} and \ref{def:CVSAP4}) effectively constrain the number of incoming connections (A-CVSAP) and respectively constrains the number of outgoing connections (M-CVSAP).

\begin{remark} 
\label{rem:alternativeCVSAP2}
Note that in case of M-CVSAP the Condition~\ref{def:CVSAP2} amd \ref{def:CVSAP3} reduce to $\forall s \in S \cap V_{\mathcal{T}}. \delta^+_{\Etree}(s) \leq \uS(s)$ and  $\delta^+_{\Etree}(r) \leq \uR$ respectively. Analogously, Conditions~\ref{def:CVSAP2} and \ref{def:CVSAP3} simplify to $\forall s \in S \cap V_{\mathcal{T}}. \delta^-_{\Etree}(s) \leq \uS(s)$ and $\delta^-_{\Etree}(r) \leq \uR$ in A-CVSAP.
\end{remark}

Obviously, as the single difference between A-CVSAP and M-CVSAP is the orientation of the edges, the problems can be reduced on each other.
\begin{remark}[Equivalence of A-CVSAP and M-CVSAP]
The problems A-CVSAP and M-CVSAP can be reduced on each other by inverting the orientation of the edges in the underlying network $G$ and adapting $\cE$ and $\uE$ accordingly.
\end{remark}
Based on the above fact, it suffices to give an algorithm for either one of the two problems. As our IP approach presented in Section~\ref{sec:ilp-formulation} can be more intuitively understood in the aggregation scenario, we restrict our further discussion to the case of A-CVSAP.
Lastly, we give the following remark relating the directed CVSAP to its undirected counterpart, namely the Constrained Virtual Steiner Tree Problem.
\begin{remark}[Constrained Virtual Steiner Tree Problem]
Analogously to the definition of VA the concept of a (rooted) Virtual Tree can be introduced, in which (undirected) virtual edges are mapped on undirected simple paths (see \ref{def:VA1}), and the orientation constraint (see \ref{def:VA2}) is dropped. Substituting $\delta^+_{\Etree}(\cdot)+\delta^-_{\Etree}(\cdot)$ by its undirected counterpart $\delta_{\Etree}(\cdot)$ in  Definition~\ref{def:CVSAP} of CVSAP then directly translates to the definition of the Constrained Virtual Steiner Tree Problem (CVSTP).
\end{remark}

The following theorem motivates our approach in Section~\ref{sec:ilp-formulation}, namely to search for provably good solutions in non-polynomial time.

\begin{theorem}
Checking whether a feasible solution for CVSAP exists is NP-complete. Thus, unless $\mathit{NP} \subseteq \mathit{P}$ holds, there cannot exist an (approximation) algorithm yielding a feasible solution in polynomial time.
\end{theorem}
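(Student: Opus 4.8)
The plan is to establish both directions of NP-completeness. Membership in NP is routine: a candidate solution is a tuple $(\Vtree,\Etree,r,\fPath)$, and since every path $\fPath(e)$ is simple it has length at most $|\VG|$, while $|\Etree| \le |\Vtree| \le |\VG|$; hence the certificate has polynomial size. Verifying \ref{def:CVSAP1}--\ref{def:CVSAP5} only requires checking set inclusions, counting in- and out-degrees in $\Etree$, confirming that each $\fPath(e)$ is a valid directed path in $G$ with the prescribed endpoints, and tallying, for every $e \in \EG$, the number of paths traversing it; all of this is polynomial. For the stated consequence on approximation, I would observe that any algorithm returning a feasible solution in polynomial time (as any approximation algorithm must) would in particular decide feasibility, so NP-hardness of the feasibility question rules out such algorithms unless $\mathit{NP}\subseteq\mathit{P}$.

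For NP-hardness I would reduce from Exact Cover by $3$-Sets (X3C): given a ground set $X$ with $|X| = 3q$ and a family $C$ of $3$-element subsets of $X$, decide whether some subfamily partitions $X$. Working in the aggregation setting A-CVSAP (which suffices by the stated equivalence), I would build $G$ with one terminal $t_x$ per element $x \in X$, one Steiner site $\sigma_c$ per triple $c \in C$, and the root $r$. The only edges are $t_x \to \sigma_c$ whenever $x \in c$, and $\sigma_c \to r$ for every $c$, all with unit capacity and unit cost. I would set $\uS(\sigma_c) = 3$ for every site and $\uR = q$, so that (cf.\ Remark~\ref{rem:alternativeCVSAP2}) the root admits at most $q$ incoming virtual links and each active Steiner site aggregates at most $3$ terminals.

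The correctness argument would then hinge on a counting identity. By \ref{def:CVSAP2} all $3q$ terminals must be connected as leaves, and by construction each attaches either directly to $r$ (via a transit path $t_x \to \sigma_c \to r$) or to an active site $\sigma_c$ with $x \in c$. If $a$ terminals attach directly and $b$ sites are activated, then \ref{def:CVSAP3} gives $a + b \le \uR = q$, while the $b$ active sites serve at most $3b$ terminals by \ref{def:CVSAP4}; since the remaining $3q - a$ terminals must be among those served, we also have $a + 3b \ge 3q$. Together these force $a = 0$ and $b = q$ with every activated site full, so the activated sites correspond exactly to an exact cover; conversely any exact cover yields a feasible two-level arborescence. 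I expect the main obstacle to be precisely this bookkeeping: one must rule out the ``cheating'' solutions in which a terminal bypasses Steiner activation by routing through a site as a mere transit node, showing via the unit capacity on each $\sigma_c \to r$ edge together with the degree bounds \ref{def:CVSAP3}--\ref{def:CVSAP4} that no such slack exists. Confirming that the constructed instance genuinely admits no terminal-to-terminal or site-to-site shortcuts, so that the path structure in $G$ is exactly the intended two levels, is the remaining detail requiring care.
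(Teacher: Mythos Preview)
Your proof is correct, and your reduction is structurally close to the paper's but not identical. The paper reduces from the decision version of Set Cover rather than X3C: it introduces a terminal per universe element, a Steiner site per set, connects $t_u$ to $s_S$ iff $u\in S$, connects every site to $r$, and sets $\uR=k$ with site capacities equal to $|U|$. Feasibility then corresponds to the existence of a cover of size at most $k$; because site capacities are effectively unbounded, no tight counting is needed, and the transit case (a terminal routing directly to $r$ through a non-activated site) is handled implicitly since each such transit contributes one to the root degree and hence can be charged to one covering set. Your X3C reduction instead tightens the parameters ($\uS=3$, $\uR=q$, unit edge capacities) so that the inequalities $a+b\le q$ and $3q-a\le 3b$ force an \emph{exact} cover. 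The paper's version is shorter and skips both NP membership and the bookkeeping you flag as the delicate step; your version is more explicit on both counts and demonstrates that the hardness already holds with very restricted capacities, at the price of a slightly longer argument.
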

\begin{proof}
We give a reduction on the decision variant of set cover. Let $U$ denote the universe of elements and let $\mathcal{S}\subseteq 2^U$ denote a family of sets covering $U$. To check whether a set cover using at most $k$ many sets exists, we construct the following CVSAP instance. We introduce a terminal $t_u$ for each element $u \in U$ and a Steiner site $s_S$ for each $S \in \mathcal{S}$. A terminal $t_u$ is connected by a directed link to each Steiner site $s_S$ iff. $u \in S$. Each Steiner site $s_S$ is connected to the root $r$. We set the capacity of the root to $k$ and capacities of Steiner sites to $|U|$. It is easy to check that there exists a feasible solution to this CVSAP instance iff. there exists a set cover of less than $k$ elements.
\end{proof}

\section{VirtuCast Algorithm} 
\label{sec:ilp-formulation}

In this section we present the Algorithm VirtuCast to solve CVSAP. VirtuCast first computes a solution for a single-commodity flow Integer Programming formulation and then constructs the corresponding Virtual Arborescence. Even though our IP formulation can be used to compute the optimal solution for any CVSAP instance, feasible solutions to our IP formulation already yield feasible solutions to CVSAP. This allows to derive near-optimal solutions \emph{during} the solution process.

\subsection{The IP Model} 
\label{sec:ip-model}

 Our IP (see~\ref{alg:MIP}) is  based on an \emph{extended graph} containing a single super source $\sSource$ and two distinct super sinks $\sSinkS$ and $\sSinkR$ (see Definition~\ref{def:extended-graph}). While $\sSinkR$  may only receive flow from the root $r$, all possible Steiner sites $s \in S$ connect to $\sSinkS$. Distinguishing between these two super sinks is necessary, as we will require activated Steiner nodes to not \emph{absorb all} incoming flow, but forward at least one unit of flow towards $\sSinkR$, which will indeed ensure connectivity.

\begin{definition}[Extended Graph]
\label{def:extended-graph}
Given a directed network $G = (\VG,\EG,\cE,\uE)$ and a request $R_G=(r,S,T,\uR,\cS,\uS)$ as introduced in Section~\ref{sec:problem-definition}, we define the \emph{extended graph} $\Gext = (\Vext,\Eext)$ as follows
\begin{description}
\item[\quad \namedlabel{def:EXT1}{\textnormal{(EXT-1)}}] $\Vext \triangleq \VG \cup \{ \sSource, \sSinkS, \sSinkR \}~,$ 
\item[\quad \namedlabel{def:EXT2}{\textnormal{(EXT-2)}}] $\Eext \triangleq \EG \cup \{(r,\sSinkR)\} \cup \EextSm \cup \EextSp \cup \EextTp ~,$
\end{description}
where $\EextSm \triangleq S \times \{\sSinkS\}$, $\EextSp \triangleq \{ \sSource \} \times \sS$ and $\EextTp \triangleq \{ \sSource \} \times \sT$. We define $\Enagg \triangleq \Eext \setminus \EextSm$. \hfill $\qed$
\end{definition}

\subsubsection{Further Notation.}
To clearly distinguish between variables and constants, we typeset constants in bold font: instead of referring to $\cE,\cS$ and $\uE,\uR,\uS$ we use $\cc_\foM{y}$ and $\cu_\foM{y}$, where $\foM{y}$ may either refer to an edge or a Steiner site. Similarly, we use $\cu_\foM{y}$ where $\foM{y}$ may either refer to an edge, the root or Steiner node. We abbreviate $\sum_{\foM{y}\in \foM{Y}}f_\foM{y}$ by $f(\foM{Y})$. We use $\foM{Y} + \foM{y}$ to denote $\foM{Y} \cup \{\foM{y}\}$ and $\foM{Y} - \foM{y}$ to denote $\foM{Y} \setminus \{\foM{y}\}$ for a set $\foM{Y}$ and a singleton $\foM{y}$.
For $\f \in \pint^{\Eext}$ we define the flow-carrying subgraph $\Gext^{\f} \triangleq (\fVext,\fEext)$ with $\fVext \triangleq \Vext$ and $\fEext \triangleq {\{ e | e \in \Eext \wedge \f(e) \geq 1\}}$.\\

The IP formulation~\ref{alg:MIP} uses an integral single-commodity flow and we define a flow variable $\f_e \in \mathbb{Z}_{\geq 0}$ for each edge $e \in \Eext$ in the extended graph (see~\ref{IP:VarF}).  As we use an aggregated flow formulation, that does not model routing decisions explicitly, we show in Section~\ref{sec:flow-decomposition} how this single-commodity flow can be decomposed into paths for constructing an actual solution for CVSAP.

Whether a Steiner site $s \in \sS$ is activated is decided by the binary variable $\xS_s \in \{0,1\}$ (see~\ref{IP:VarX}). Constraint~\ref{IP:FlowToTerminals} forces each terminal $t \in \sT$ to send a single unit of flow. As flow conservation is enforced on all original nodes $v \in \VG$ (see~\ref{IP:FlowConservation}), all flow originating at $\sSource$ must be forwarded to one of the super sinks $\sSinkR$ or $\sSinkS$, while not violating link capacities (see~\ref{IP:CapEdge}).


\newcommand{\tagIt}{\refstepcounter{IPnumber}\tag{IP-\theIPnumber}}
\newcommand{\tagItStar}{\refstepcounter{IPnumber}\tag{IP-$\textnormal{\theIPnumber}^\star$}}

\newcounter{IPnumber}

\setlength{\algomargin}{8pt}

\begin{algorithm}[h!]
\SetAlgoRefName{IP-A-CVSAP}
\SetAlgoCaptionSeparator{}
\SetAlgorithmName{Integer Program}{}{{}}

\newcommand{\spaceIt}{\qquad\quad\quad}
\newcommand{\miniSpace}{\hspace{1.5pt}}

\caption{}
\label{alg:MIP}
\BlankLine
\hspace{-30pt}
\begin{fleqn}[0pt]
\begin{alignat}{5}
\phantomsection	 \textnormal{minimize~}   \quad & \hspace{12pt}\cIP(\xS,\f) &\miniSpace=\miniSpace&\sum \limits_{e \in \EG} \cc_e \f_e + \sum \limits_{s \in S} \cc_s \xS_s &  \label{IP:obj} \tag{IP-OBJ} \\
\phantomsection   \textnormal{subject to} \quad & ~~\f(\deltaOut{\Eext}(v))  &\miniSpace=\miniSpace& \f (\deltaIn{\Eext}(v))	&	\quad \forall~ v \in \VG \label{IP:FlowConservation} \tagIt \\
\phantomsection                       \quad & \f(\deltaOut{\Enagg} (W)) &\miniSpace\geq\miniSpace&\xS_s & \quad \forall~ W \subseteq \VG, s \in W \cap S \neq \emptyset \label{IP:CutSteiner} \tagIt \\
\phantomsection                       \quad & \f(\deltaOut{\Enagg} (W)) &\miniSpace\geq\miniSpace&1 & \quad \forall~ W\subseteq \VG, T\cap W \neq \emptyset \label{IP:CutTerminal} \tagItStar \\
\phantomsection                       \quad & \spaceIt \f_e &\miniSpace\geq\miniSpace& \xS_s  & \quad \forall~ e=(s,\sSinkS) \in \EextSm \label{IP:EnforceAbsorption} \tagItStar \\
\phantomsection                       \quad & \spaceIt \f_e &\miniSpace\leq\miniSpace& \cu_s \xS_s & \quad \forall~ e=(s,\sSinkS) \in \EextSm \label{IP:CapSteiner} \tagIt \\
\phantomsection                       \quad & \qquad~ \f_{(r,\sSinkR)} &\miniSpace\leq\miniSpace& \cu_r &  \label{IP:CapRoot} \tagIt \\
\phantomsection                       \quad & \spaceIt \f_e &\miniSpace\leq\miniSpace& \cu_e & \quad \forall~ e \in \EG \label{IP:CapEdge} \tagIt \\
\phantomsection                       \quad & \spaceIt \f_e &\miniSpace=\miniSpace& 1  		& \quad \forall~ e \in \EextTp \label{IP:FlowToTerminals} \tagIt \\
\phantomsection                       \quad & \spaceIt \f_e &\miniSpace=\miniSpace& \xS_s 		& \quad \forall~ e=(\sSource, s) \in \EextSp \label{IP:FlowToSteiner} \tagIt \\
\phantomsection                       \quad & \spaceIt \xS_s &\miniSpace \in \miniSpace&\{0,1\} &\forall~ s \in S \label{IP:VarX} \tagIt \\
\phantomsection                       \quad & \spaceIt \f_e &\miniSpace \in \miniSpace& \pint &\forall~ e \in \Eext \label{IP:VarF} \tagIt
\end{alignat}
\end{fleqn}
\SetAlgoRefName{IP-A-CVSAP}
\end{algorithm}

As the definition of A-CVSAP requires that each terminal $t\in \sT$ establishes a path to $r$, we need to enforce connectivity; otherwise active Steiner nodes would simply absorb flow by directing it towards $\sSinkS$. To prohibit this, we adopt well-known \emph{Connectivity Inequalities}~\cite{lucena2004strong} and \emph{Directed Steiner Cuts}~\cite{koch1998solving}. Our Connectivity Inequalities~\ref{IP:CutSteiner} state that each set of nodes containing a Steiner site $s \in S$ must emit at least one unit of flow in $\Enagg$, if $s$ is activated. As $\Enagg$ does not contain edges towards $\sSinkS$, this constraint therefore enforces that there exists a path in $\Gext^{\f}$ from each activated Steiner node $s$ to the root $r$.

Analogously, the Directed Steiner Cuts~\ref{IP:CutTerminal} enforce that there exists a path from each terminal $t \in \sT$ towards $r$ in $\Gext^{f}$. These directed Steiner cuts constitute valid inequalities which are implied by \ref{IP:FlowConservation}  and \ref{IP:CutSteiner} (see Lemma~\ref{lem:chooseP-is-correct}). These Directed Steiner Cuts can strengthen the model by improving the LP-relaxation during the branch-and-cut process (see Lemma~\ref{lem:IP3StrengthensFormulation} in Section~\ref{sec:appendix:proofs} for the proof). As they are not needed for proving the correctness and could technically be removed, we mark them with a  $~^\star$ (star).

As a Steiner node $s \in \sS$ is activated iff.~$\xS_s = 1$, Constraint~\ref{IP:FlowToSteiner} requires activated Steiner nodes to receive one unit of flow while being able to maximally absorb $\cu_s$ many units of flow by forwarding it to $\sSinkS$ (see~\ref{IP:CapSteiner}). Furthermore, by~\ref{IP:CapSteiner} inactive Steiner sites may not absorb flow at all. The Constraint~\ref{IP:EnforceAbsorption} requires active Steiner nodes to at least absorb one unit of flow. This is a valid inequality, as activating a Steiner site $s\in S$ incurs non-negative costs. We introduce this constraint here, as it specifies a condition that is used in a proof later on. Constraint~\ref{IP:CapRoot} defines an upper bound on the amount of flow that the root may receive and the objective function~\ref{IP:obj} mirrors the CVSAP cost function (see Definition~\ref{def:CVSAP}). 
We denote with $\FeasibleIP = \{(\xS, \f) \in \{0,1\}^S \times \pint^{\Eext} | \textnormal{\ref{IP:FlowConservation} -~\ref{IP:VarF}}\}$ the set of feasible solutions to~\ref{alg:MIP}.

\subsection{Flow Decomposition} 
\label{sec:flow-decomposition}

Given a feasible solution $(\solX,\solf)\in \FeasibleIP$ for~\ref{alg:MIP}, Algorithm~\ref{alg:decompose} constructs a feasible solution $\solTree \in \FeasibleACVSAP$ for CVSAP. Similarly to well-known algorithms for computing flow decompositions for simple s-t flows (see e.g.~\cite{ahujaNetworkFlows}), our algorithm iteratively deconstructs the flow into paths from the super source $\sSource$ to the super sinks $\sSinkS$ or $\sSinkR$, which are successively removed from the network. However, as~\ref{alg:MIP} does not pose a simple flow problem, we constantly need to ensure that Connectivity Inequalities~\ref{IP:CutSteiner} hold after removing flow in $\Gext^{\solf}$. We first present~\ref{alg:decompose} in more detail and prove its correctness. A short runtime analysis is contained in Section~\ref{sec:runtime-analysis}.

\subsubsection{Synopsis of Algorithm.}
Algorithm~\ref{alg:decompose} constructs a feasible VA $\solTree$ given a solution $(\solX,\solf)\in \FeasibleIP$. In Line~\ref{alg:decomp:initVA}, $\solTree$ is initialized without any edges but containing all the nodes the final solution will consist of, namely the root $r$, the terminals $T$ and the activated Steiner nodes $\{s\in \sS|\xS_s \geq 1\}$.
In Line~\ref{alg:decomp:beginWhile} a terminal node $t \in \solT$ is selected for which a path is constructed to either an active Steiner node or to the root itself (Lines~\ref{alg:decomp:chooseP}-\ref{alg:decomp:endIfSteinerCut}). In Line~\ref{alg:decomp:chooseP} a path $P$, connecting $t$ to the root $r$ in the flow network $\Gext^{\solf}$, is chosen (see Lemma~\ref{lem:chooseP-is-correct} for the proof of existence for such a path). Note that by definition of $\Gext^{\solf}$ all edges contained in $P$ carry at least one unit of flow.
Within the loop beginning in Line~\ref{alg:decomp:beginFor}, the flow on path $P$ is iteratively decremented (see  Line~\ref{alg:decomp:reduceFlowP}) as long as the Connectivity Inequality~\ref{IP:CutSteiner} is not violated. In case it is violated, we revert the reduction of flow (see Line~\ref{alg:decomp:repairFlow}) and select a path towards the super sink $\sSinkS$ starting at the current node $P_j$ (see Line~\ref{alg:decomp:choosePrimedP}). Such a path must exist according to Lemma~\ref{lem:IPCutSteinerInvariant}. The path $P$ is accordingly redirected in Line~\ref{alg:decomp:extendPath} .

The path construction (in Lines~\ref{alg:decomp:chooseP}~to~\ref{alg:decomp:extendPath}) terminates once the flow from the second last node $P_{|P|-1}$ towards the last node $P_{|P|}$ has been reduced. By construction, the path $P$ leads from the super source $\sSource$ via the terminal $t \in \solT$ towards the super sink $\sSinkR$ or $\sSinkS$. If $P$ terminates in $\sSinkS$ via Steiner node $s = P_{|P|-1} \in \solS$ such that $(s,\sSinkS)$ carries no flow anymore, $s$ itself becomes a terminal (see Lines~\ref{alg:decomp:ifFlowReachedS} and~\ref{alg:decomp:SbecomesT}). Otherwise, $P$ terminates in $\sSinkR$ and $P_{|P|-1}=r$ holds. Lastly, in Line~\ref{alg:decomp:includeInTree} the (virtual) edge $(t,P_{|P|-1})$ is added to $\solEtree$ and $\solfPath(t,P_{|P|-1})$ is set accordingly to the truncated path $P$, where head, tail and any cycles are removed (function \texttt{simplify}).

\newcommand{\LET}{\textbf{let~}}
\newcommand{\CHOOSE}{\textbf{choose~}}
\newcommand{\SET}{\textbf{set~}}
\newcommand{\AND}{\textbf{and~}}
\newcommand{\ST}{\textbf{such~that~}}
\newcommand{\WHERE}{\textbf{where~}}

\begin{algorithm}[h!]

\newcommand{\nX}{}
\LinesNumbered
\DontPrintSemicolon
\SetAlgoRefName{Decompose}
\SetAlgoCaptionSeparator{}
\SetAlgorithmName{Algorithm}{}{{}}
\SetKwInOut{Input}{Input}\SetKwInOut{Output}{Output}
\SetKwFunction{Simplify}{simplify}

\Input{Network $\G = (\VG,\EG,\cE,\uE)$, Request $R_\G = (r,\sS,\sT,\uR,\cS,\uS)$, \\~Solution $(\solX, \solf)\in \FeasibleIP$ to~\ref{alg:MIP}}
\Output{Feasible Virtual Arborescence $\solTree$ for CVSAP}
\SetSideCommentRight
    \BlankLine
\nX      \SET $\solS \triangleq \{s \in \sS | \xS_s \geq 1\}$ \AND $\solT \triangleq \sT$
																						\label{alg:decomp:initSHatTHat}\;			
																						
\nX      \SET $\solTree \triangleq (\solVtree,\solEtree,r,\solfPath)$ \WHERE $\solVtree \triangleq \{r\} \cup \solS \cup \solT,~ \solEtree \triangleq \emptyset$ \AND $\solfPath: \solEtree \to \mathcal{P}_G$			\label{alg:decomp:initVA}\;

\nX      \While{$\solT \neq  \emptyset$}{												
																						\label{alg:decomp:beginWhile}
\nX      	\LET $t \in \solT$ \AND $\solT \gets \solT - t$														
																						\label{alg:decomp:chooseTerminal} \;
																						
\nX      	 \CHOOSE $P \triangleq \langle \sSource, t, \dots, \sSinkR \rangle \in \Gext^{\solf}$  															   			 \label{alg:decomp:chooseP}\;
\nX      	 \For{$j = 1$ \KwTo $|P|-1$}{												
																						\label{alg:decomp:beginFor}
\nX      		\SET $\solf(P_j,P_{j+1}) \gets \solf(P_j,P_{j+1}) - 1$						
																						\label{alg:decomp:reduceFlowP} \;
																						
\nX      		\If{ \textnormal{Constraint~\ref{IP:CutSteiner}} \textnormal{is violated} with respect to $\solf$ \textnormal{and} $\solS$}{ 																 \label{alg:decomp:checkIP2}

\nX      			\CHOOSE $W \subseteq \VG$ \ST $ W \cap \solS \neq \emptyset$ \AND $\solf(\deltaOut{\Enagg} (W)) = 0$
      																					\label{alg:decomp:chooseW}\;
      																					
\nX      			\CHOOSE $P' \triangleq \langle P_j,\dots ,\sSinkS \rangle \in \Gext^{\solf}$ \ST $P_i \in W$ for $1\leq i<m$  																			 \label{alg:decomp:choosePrimedP}\;
      																		
\nX      			\SET $\solf(P_j,P_{j+1}) \gets \solf(P_j,P_{j+1}) + 1$ \AND $\solf(P'_1,P'_2) \gets \solf(P'_1,P'_2) - 1$
																						\label{alg:decomp:repairFlow}\;
																						
\nX     			\SET $P \gets \langle P_1,\dots,P_{j-1},P_j=P'_1,P'_2,\dots,P'_m \rangle$
																						\label{alg:decomp:extendPath}\;
      		} \label{alg:decomp:endIfSteinerCut}
      	} \label{alg:decomp:endFor}
\nX     	\If{$P_{|P|} = \sSinkS$ \textnormal{\AND} $\solf(P_{|P|-1},P_{|P|})=0$}{
																						\label{alg:decomp:ifFlowReachedS}
																						
\nX     		\SET $\solS \gets \solS - P_{|P|-1}$ \AND $\solX(P_{|P|-1}) \gets 0$ \AND $\solT \gets \solT + P_{|P|-1}$
																						\label{alg:decomp:SbecomesT}\;
     	} \label{alg:decomp:endIfFlowReachedS}
\nX      	\SET $\solEtree \gets \solEtree + (t,P_{|P|-1})$ \AND $\solfPath(t,P_{|P|-1}) \triangleq $~\Simplify{$\langle P_2, \dots, P_{|P|-1}\rangle$}
																						\label{alg:decomp:includeInTree}\;
     }\label{alg:decomp:endWhile}
\caption{}
\label{alg:decompose}
\end{algorithm}

\subsubsection{Proof of Correctness} 
\label{sec:proof}
We will now formally prove the correctness of Algorithm~\ref{alg:decompose}, thereby showing that \ref{alg:MIP} can be used to compute (optimal) solutions to CVSAP. We use an inductive argument similar to the one used for proving the existence of flow decompositions (see~\cite{ahujaNetworkFlows}). we assume that all constraints of \ref{alg:MIP} hold and show that for any terminal $t \in \sT$ a path towards the root or to an active Steiner node can be constructed, such that decrementing the flow along the path by one unit does again yield a feasible solution to \ref{alg:MIP}, in which $t$ has been removed from the set of terminals (see Theorem~\ref{thm:induction-step} below). During the course of this induction,  the well-definedness of the \CHOOSE operations is shown.

\begin{theorem}
\label{thm:induction-step}
Assuming that the constraints of \ref{alg:decompose} hold with respect to $\solS,\solT,\solf,\solX$ before executing Line~\ref{alg:decomp:chooseTerminal}, then the constraints of \ref{alg:decompose} will also hold in Line~\ref{alg:decomp:includeInTree} with respect to then reduced problem $\solS,\solT,\solf,\solX$. 
\end{theorem}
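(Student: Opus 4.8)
The plan is to read this theorem as the inductive step of a flow-decomposition argument: I assume $(\solX,\solf)$ is a feasible point of \ref{alg:MIP} for the request whose terminal set is the current $\solT$ and Steiner set is the current $\solS$, and I show that after one full pass of the \textbf{while}-body (Lines~\ref{alg:decomp:chooseTerminal}--\ref{alg:decomp:includeInTree}) the updated tuple is again feasible for the request with $t$ removed from $\solT$ (and one Steiner node possibly reclassified). The observation driving everything is that the net effect of the inner \textbf{for}-loop is to remove exactly one unit of flow along a single walk in $\Gext^{\solf}$ from $\sSource$, through $t$, to one of the super sinks $\sSinkR$ or $\sSinkS$: every reverted decrement (Line~\ref{alg:decomp:repairFlow}) cancels, so the only edges with a net change are those of the final path $P$. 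Given this, several constraints are immediate and I would dispatch them first: integrality \ref{IP:VarF}, \ref{IP:VarX} holds since all updates are by $\pm 1$ or into $\{0,1\}$; the capacity bounds \ref{IP:CapEdge}, \ref{IP:CapRoot}, \ref{IP:CapSteiner} are preserved because $\solf$ never exceeds its value at the start of the iteration (a revert only undoes a preceding decrement); and flow conservation \ref{IP:FlowConservation} survives because removing a unit along an $\sSource$-to-sink walk changes in- and out-flow equally at every interior vertex, and every interior vertex lies in $\VG$ (the vertices $\sSource,\sSinkR,\sSinkS$ are the unique non-$\VG$ vertices and, having no in- resp.\ out-edges, cannot occur interior to such a walk).

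Next I would establish well-definedness of the three \textbf{choose} steps, since the induction simultaneously proves they never fail. The path $P$ in Line~\ref{alg:decomp:chooseP} exists by Lemma~\ref{lem:chooseP-is-correct} (connectivity from $t$ to $r$ in $\Gext^{\solf}$, itself a consequence of \ref{IP:FlowConservation} and \ref{IP:CutSteiner}). Whenever the test in Line~\ref{alg:decomp:checkIP2} fires, a witness cut $W$ as in Line~\ref{alg:decomp:chooseW} exists: before the decrement \ref{IP:CutSteiner} gave $\solf(\deltaOut{\Enagg}(W))\ge\solX_s=1$ for the relevant $s$, and a single integral decrement can only have driven this to exactly $0$, with the decremented edge lying in $\deltaOut{\Enagg}(W)$ (so $P_j\in W$, $P_{j+1}\notin W$). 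The redirect path $P'$ of Line~\ref{alg:decomp:choosePrimedP}, staying inside $W$ until it reaches $\sSinkS$, exists by Lemma~\ref{lem:IPCutSteinerInvariant}.

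The heart of the argument, and the step I expect to be the main obstacle, is showing that \ref{IP:CutSteiner} is re-established so that it holds at Line~\ref{alg:decomp:includeInTree}. I would carry this as a loop invariant: \ref{IP:CutSteiner} holds w.r.t.\ the current $\solf,\solS$ at the top of each \textbf{for}-iteration. If decrementing $(P_j,P_{j+1})$ breaks it, that edge carried the last unit leaving $W$ in $\Enagg$, and since the invariant held beforehand it is the \emph{only} change; reverting it in Line~\ref{alg:decomp:repairFlow} therefore restores $\solf(\deltaOut{\Enagg}(W'))$ for every cut $W'$ whose $\Enagg$-coboundary contained it, repairing all cuts the decrement could have harmed at once. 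The compensating decrement is then pushed one hop \emph{into} $W$ along $(P_j,P'_2)$ with $P'_2\in W$, which does not touch $\deltaOut{\Enagg}(W)$, so $W$ stays satisfied. The delicate point I must control is that this interior decrement could threaten some other cut; here I would argue that the deficit is driven strictly deeper into $W$ toward $\sSinkS$, where it is ultimately absorbed, and that the re-checks of the subsequent \textbf{for}-iterations together with the existence guarantee of Lemma~\ref{lem:IPCutSteinerInvariant} confine and finally discharge any residual deficit. This cut bookkeeping is precisely the content I would lean on Lemma~\ref{lem:IPCutSteinerInvariant} for.

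Finally I would verify the terminal/Steiner bookkeeping. The edge $(\sSource,t)$ is decremented first, so $\solf(\sSource,t)$ drops from $1$ to $0$ (it never triggers a redirect, as no admissible $W\subseteq\VG$ contains $\sSource$), consistent with $t$ leaving $\solT$; for every other $t'\in\solT$ and $s'\in\solS$ the edges $(\sSource,t')$, $(\sSource,s')$ are untouched, so \ref{IP:FlowToTerminals} and \ref{IP:FlowToSteiner} persist. If the walk ends at $\sSinkS$ via $s=P_{|P|-1}$ with $\solf(s,\sSinkS)=0$ (Lines~\ref{alg:decomp:ifFlowReachedS}--\ref{alg:decomp:SbecomesT}), then reclassifying $s$ into $\solT$ and setting $\solX_s=0$ keeps \ref{IP:CapSteiner}, \ref{IP:EnforceAbsorption} and \ref{IP:FlowToSteiner} trivially valid ($0\le\cu_s\cdot 0$ and $0\ge 0$), while conservation at $s$ is intact since exactly one incoming edge and the outgoing $\sSinkS$-edge were decremented, leaving $\solf(\sSource,s)=1$ as \ref{IP:FlowToTerminals} now requires. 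The single point worth spelling out is connectivity of the new terminal $s$: just before reclassification \ref{IP:CutSteiner} held with $\solX_s=1$, giving $\solf(\deltaOut{\Enagg}(W))\ge 1$ for every $W\ni s$, and since the flow is unchanged by the reclassification this is exactly the terminal cut \ref{IP:CutTerminal} for $s$, which is anyway implied by \ref{IP:FlowConservation} and \ref{IP:CutSteiner} (cf.\ Lemma~\ref{lem:chooseP-is-correct}). This completes feasibility of the reduced instance.
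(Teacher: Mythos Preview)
Your proposal follows essentially the same architecture as the paper's proof: you invoke Lemma~\ref{lem:chooseP-is-correct} for the initial path, Lemma~\ref{lem:limited-flow-conservation} for flow conservation, and Lemma~\ref{lem:IPCutSteinerInvariant} for the cut bookkeeping, then verify the remaining constraints one by one. The ``net effect is a single walk'' observation is a clean way to package Lemma~\ref{lem:limited-flow-conservation}, and your treatment of the terminal/Steiner reclassification matches the paper's.

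There is, however, one genuine misdescription in your handling of the redirect step, and it matters because your stated loop invariant depends on it. You write that the compensating decrement on $(P_j,P'_2)$ ``could threaten some other cut'' and that ``the re-checks of the subsequent \textbf{for}-iterations\ldots confine and finally discharge any residual deficit.'' This is not what Lemma~\ref{lem:IPCutSteinerInvariant} says, and if it were true your invariant (\ref{IP:CutSteiner} holds at the top of each iteration) would fail immediately. The actual argument is that the compensating decrement cannot violate \emph{any} cut, and the reason is not that the deficit is ``driven deeper into $W$'': rather, after the revert the suffix $\langle P_j,P_{j+1},\dots,\sSinkR\rangle$ of the \emph{original} path is fully re-established in $\Gext^{\solf}$, and since $P'_2\in W$ (or $P'_2=\sSinkS$) the edge $(P_j,P'_2)$ is disjoint from that suffix. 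Hence any cut $W'$ that would be broken by decrementing $(P_j,P'_2)$ must contain $P_j$, but then $W'$ is crossed by the intact $P_j$-to-$\sSinkR$ path and still has positive $\Enagg$-outflow. So \ref{IP:CutSteiner} holds \emph{immediately} after Line~\ref{alg:decomp:repairFlow}, not merely eventually. Once you replace your ``eventual discharge'' picture with this argument, your invariant is sound and the rest of your proof goes through as written.
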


To prove the above theorem, we use the following Lemmas \ref{lem:chooseP-is-correct} through \ref{lem:IPCutSteinerInvariant}.

\begin{lemma}
\label{lem:chooseP-is-correct}
Assuming that \ref{IP:FlowConservation} and \ref{IP:CutSteiner} hold, there exists a path $P = \langle \sSource, t, \dots , \sSinkR \rangle \in \Gext^{\solf}$ in Line~\ref{alg:decomp:chooseP}.
\end{lemma}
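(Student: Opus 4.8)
The plan is to reduce the claim to a single reachability statement and then settle it with a cut argument powered only by flow conservation \ref{IP:FlowConservation} and the connectivity inequalities \ref{IP:CutSteiner} --- deliberately \emph{not} using the (star-marked, removable) terminal cuts \ref{IP:CutTerminal}, since those are precisely what this lemma is meant to imply. First I would note that the edge $(\sSource, t)$ carries one unit of flow (by \ref{IP:FlowToTerminals} for an original terminal, and by the maintained invariant that a Steiner node promoted to a terminal keeps its unit of inflow from $\sSource$), so $(\sSource, t) \in \Gext^{\solf}$. Since $\sSinkR$ has the single in-edge $(r, \sSinkR)$, any directed path reaching $\sSinkR$ must end $\dots \to r \to \sSinkR$; because $\sSource$ has no in-edges it cannot recur, so it suffices to show that $t$ reaches $\sSinkR$ in $\Gext^{\solf}$ and then prepend $(\sSource, t)$.

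To prove reachability I would argue by contradiction. Let $W \triangleq \{v \in \VG \mid v \text{ cannot reach } \sSinkR \text{ in } \Gext^{\solf}\}$ and suppose $t \in W$. The defining property of $W$ forces $\solf(\deltaOut{\Enagg}(W)) = 0$: every $\EG$-edge from $W$ to $\VG \setminus W$ must be flow-free (otherwise its head, which reaches $\sSinkR$, would let its tail reach $\sSinkR$); the edge $(r, \sSinkR)$ is flow-free whenever $r \in W$ (else $r$ would reach $\sSinkR$); and the super-source edges $\EextSp \cup \EextTp$ emanate from $\sSource \notin W$, so they are not outgoing from $W$. As $\Enagg = \Eext \setminus \EextSm$ consists exactly of these edge types, the sum vanishes.

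The key step is to feed this into \ref{IP:CutSteiner}: if $W$ contained an active Steiner site $s$ (with $\solX_s = 1$) then \ref{IP:CutSteiner} would give $\solf(\deltaOut{\Enagg}(W)) \geq \solX_s = 1 > 0$, a contradiction, so $W$ holds no active Steiner site. This lets me upgrade the zero-outflow statement from $\Enagg$ to all of $\Eext$: the only edges in $\Eext \setminus \Enagg = \EextSm$ leaving $W$ are the absorption edges $(s, \sSinkS)$ with $s \in W \cap S$, and by \ref{IP:CapSteiner} each satisfies $\solf(s, \sSinkS) \leq \cu_s \solX_s = 0$ for the inactive $s$ now in $W$. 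Hence $\solf(\deltaOut{\Eext}(W)) = 0$. Finally, summing \ref{IP:FlowConservation} over all $v \in W$ (legitimate since $W \subseteq \VG$) and cancelling internal edges yields $\solf(\deltaOut{\Eext}(W)) = \solf(\deltaIn{\Eext}(W))$; but the in-edge $(\sSource, t)$ contributes one unit, so the right-hand side is at least $1$, contradicting the left-hand side being $0$. Therefore $t \notin W$, which gives the path. I expect the main obstacle to be precisely this bookkeeping of which edge types cross the cut --- in particular realizing that the $\EextSm$ edges lie \emph{outside} $\Enagg$ and must be neutralized separately via \ref{IP:CapSteiner}, which is only possible once \ref{IP:CutSteiner} has ruled out active Steiner sites inside $W$.
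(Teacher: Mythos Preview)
Your proposal is correct and uses the same ingredients as the paper's proof --- the positive edge $(\sSource,t)$, flow conservation \ref{IP:FlowConservation}, and the connectivity inequalities \ref{IP:CutSteiner} --- but packages them as a clean cut argument, whereas the paper argues informally by path-following: from $t$ follow flow to a sink; if the sink is $\sSinkS$, the last intermediate node is an active Steiner site (implicitly via \ref{IP:CapSteiner}), which by \ref{IP:CutSteiner} has a path to $\sSinkR$, so splice. Your version makes explicit a step the paper glosses over, namely that \ref{IP:CapSteiner} is needed to rule out flow leaking into $\sSinkS$ through \emph{inactive} Steiner sites; this is formally outside the lemma's stated hypotheses (``\ref{IP:FlowConservation} and \ref{IP:CutSteiner}'') but is of course available in the induction of Theorem~\ref{thm:induction-step}, and the paper's own proof relies on it tacitly as well.
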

\begin{proof}
Note that initially (i.e. in Line~\ref{alg:decomp:initSHatTHat}) $\solf (\sSource, v) = 1$ holds for $v \in \solS \cup \solT$ by  \ref{IP:FlowToTerminals} and \ref{IP:FlowToSteiner}. This flow will only be reduced once, as a node $t \in \solT$ will only be handled once when it is removed from $\solT$ in Line~\ref{alg:decomp:chooseTerminal}, and similarly, a node $s \in \solS$ will only be moved once into $\solT$ in Line~\ref{alg:decomp:SbecomesT}. By flow conservation (see \ref{IP:FlowConservation}), there must exist a path from $t$ to either $\sSinkR$ or $\sSinkS$. However, as we assume \ref{IP:CutSteiner} to hold, there exists a path from each $s \in \solS$ to $\sSinkR$ and we conclude that such a path $P = \langle \sSource, t, \dots , \sSinkR \rangle \in \Gext^{\solf}$ must exist.
\end{proof}

\begin{lemma}
\label{lem:limited-flow-conservation}
Assuming that \ref{IP:FlowConservation} has held in Line~\ref{alg:decomp:chooseP},  $\f(\deltaOut{\Eext}(v)) - \f(\deltaIn{\Eext}(v)) = \delta_{v,P_{j+1}}$ holds for all  $v \in \VG$ during construction of $P$ (Lines~\ref{alg:decomp:checkIP2}-\ref{alg:decomp:endIfSteinerCut}), where $\delta_{x,y} \in \{0,1\}$ and $\delta_{x,y} = 1$ iff.~ $x=y$.
\end{lemma}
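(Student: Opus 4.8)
The plan is to prove the statement by induction on the loop index $j$, tracking the imbalance $\solf(\deltaOut{\Eext}(v)) - \solf(\deltaIn{\Eext}(v))$ at every original node $v \in \VG$. The structural fact I would exploit throughout is that the super source $\sSource$ and the two super sinks $\sSinkR, \sSinkS$ lie outside $\VG$; hence any imbalance pushed onto them is invisible to the claim, which is precisely what lets a single unit of deficit travel along $P$ while leaving the invariant intact at every original node except the current endpoint $P_{j+1}$.

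For the base case $j = 1$, flow conservation holds at every $v \in \VG$ by the hypothesis that \ref{IP:FlowConservation} held in Line~\ref{alg:decomp:chooseP}, so each original node starts with imbalance $0$. Line~\ref{alg:decomp:reduceFlowP} reduces the flow on $(P_1, P_2) = (\sSource, t)$; since $\sSource \notin \VG$, the only affected original node is $t = P_2$, whose incoming flow drops by one and whose imbalance thereby rises to $+1$. Because $(\sSource, t)$ is never an outgoing edge of any $W \subseteq \VG$ in $\Enagg$, Constraint~\ref{IP:CutSteiner} cannot be newly violated, so no redirection occurs and the imbalance equals $\delta_{v, P_2}$, as claimed.

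For the inductive step I would assume that upon entering iteration $j$ the imbalance is $+1$ at $P_j$ and $0$ at every other node of $\VG$ (this is exactly the claim for index $j-1$), and then distinguish the two branches of the \textbf{if}-block. If \ref{IP:CutSteiner} is not violated after Line~\ref{alg:decomp:reduceFlowP}, the lone reduction on $(P_j, P_{j+1})$ lowers the outflow at $P_j$ (imbalance $+1 \to 0$) and the inflow at $P_{j+1}$ (imbalance $0 \to +1$), so the deficit migrates one node forward. If \ref{IP:CutSteiner} is violated, Line~\ref{alg:decomp:repairFlow} first reverts the reduction on $(P_j, P_{j+1})$ --- restoring the entering configuration with imbalance $+1$ at $P_j$ --- and then reduces the flow on $(P'_1, P'_2) = (P_j, P'_2)$, which again lowers the outflow at $P_j$ (imbalance $+1 \to 0$) and the inflow at $P'_2$ (imbalance $0 \to +1$). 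Since the rewrite in Line~\ref{alg:decomp:extendPath} makes $P'_2$ the new $(j{+}1)$-st node of $P$, in both branches the imbalance is again $\delta_{v, P_{j+1}}$ for the current path. In either branch I would note that if the new $P_{j+1}$ is a super sink (hence not in $\VG$), the unit deficit is absorbed outside $\VG$ and the claim reads $\delta_{v, P_{j+1}} = 0$ for all $v \in \VG$, which still holds.

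The main obstacle is the bookkeeping around the redirection: I must check that the revert-then-re-reduce step leaves the net imbalance concentrated at a single node and that this node coincides with the new $P_{j+1}$ after $P$ is overwritten in Line~\ref{alg:decomp:extendPath}. Everything else is a routine telescoping of $\pm 1$ changes to in- and out-degrees, kept clean by the exclusion of $\sSource$, $\sSinkR$ and $\sSinkS$ from the conservation constraint \ref{IP:FlowConservation}.
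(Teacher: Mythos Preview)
Your proposal is correct and follows essentially the same inductive argument as the paper: both prove the base case by noting that decrementing $(\sSource,t)$ affects only $t\in\VG$ and cannot violate \ref{IP:CutSteiner}, and both handle the inductive step by observing that the net effect of either branch is a single decrement on an edge out of $P_j$, shifting the unit imbalance to the (possibly rewritten) $P_{j+1}$. Your version is merely more explicit about the two branches and the super-sink case, which the paper compresses into one sentence.
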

\begin{proof}
We prove this statement by an inductive argument assuming for now that \CHOOSE operations in Lines~\ref{alg:decomp:chooseW} and \ref{alg:decomp:choosePrimedP} are well-defined.

After the first execution of Line~\ref{alg:decomp:reduceFlowP}, $\f(\deltaOut{\Eext}(P_2=t)) - \f(\deltaIn{\Eext}(P_2=t)) = 1$ holds, while for no other node $v \in \VG$ flow on adjacent edges were changed, and therefore  $\f(\deltaOut{\Eext}(v)) - \f(\deltaIn{\Eext}(v)) = 1$ holds.
Furthermore, the reduction of flow on edge $(\sSource,P_2=t)$ cannot violate \ref{IP:CutSteiner}, such that our claim holds until Line~\ref{alg:decomp:endIfSteinerCut} and therefore for the base case $j=1$.

Assuming that $\f(\deltaOut{\Eext}(v)) - \f(\deltaIn{\Eext}(v)) = \delta_{v,P_{j+1}}$ has held for $j = n$, it is easy to check that it will continue to hold for $j' = n+1$, as either in Line~\ref{alg:decomp:reduceFlowP} or in Line~\ref{alg:decomp:repairFlow} the outgoing flow from node $P_{j'}$ towards node $P_{j'+1}$ is reduced such that $\f(\deltaOut{\Eext}(v)) - \f(\deltaIn{\Eext}(v)) = \delta_{v,P_{j'+1}}$ indeed holds for all $v \in \VG$.
\end{proof}

\newcommand{\tagSubLem}{\refstepcounter{IPCutSteinerInvariantCounter}{\alph{IPCutSteinerInvariantCounter}}}
\newcounter{IPCutSteinerInvariantCounter}

\begin{lemma}
\label{lem:IPCutSteinerInvariant}
Assuming that connectivity inequalities \ref{IP:CutSteiner} have held before executing Line~\ref{alg:decomp:reduceFlowP}, these inequalities will hold again at Line~\ref{alg:decomp:endIfSteinerCut}.
\end{lemma}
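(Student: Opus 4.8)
The plan is to split on the test in Line~\ref{alg:decomp:checkIP2}. If the decrement of $\solf(P_j,P_{j+1})$ in Line~\ref{alg:decomp:reduceFlowP} leaves \ref{IP:CutSteiner} intact, the body of the \textbf{if} is skipped and there is nothing to show. So assume the decrement violates \ref{IP:CutSteiner}. Since the inequalities held beforehand and $\solf$ is integral, there is a set $W\subseteq\VG$ with an active site $s\in W\cap\solS$ whose value $\solf(\deltaOut{\Enagg}(W))$ dropped from (at least) $1$ to $0$; hence $(P_j,P_{j+1})$ was the \emph{unique} flow-carrying edge of $\deltaOut{\Enagg}(W)$, so $P_j\in W$ and $P_{j+1}\notin W$. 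This already shows the choice in Line~\ref{alg:decomp:chooseW} is well-defined. As an organizing device I would use the reformulation that, under the limited flow conservation of Lemma~\ref{lem:limited-flow-conservation}, \ref{IP:CutSteiner} is equivalent to: every active site reaches $\sSinkR$ in $\Gext^{\solf}$ using only edges of $\Enagg$. One direction is immediate from the cut; for the other, the set of nodes that cannot reach $\sSinkR$ within $\Enagg$ has zero $\Enagg$-outflow and contains the offending site.

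Next I would establish existence of the redirected path $P'$ in Line~\ref{alg:decomp:choosePrimedP}. Before the decrement, $s$ reached $\sSinkR$ through $\Enagg$, and since $(P_j,P_{j+1})$ is the only $\Enagg$-exit of $W$, that route has the form $s\to\cdots\to P_j\to P_{j+1}\to\cdots\to\sSinkR$, with the prefix up to $P_j$ lying inside $W$. The decrement touches only $(P_j,P_{j+1})$, so $s$ still reaches $P_j$ inside $W$ afterwards. I would then consider the set $\mathcal R$ of nodes reachable from $P_j$ inside $W$ in the decremented flow. If $\sSinkS\notin\mathcal R$, then no flow leaves $\mathcal R$ (an $\Enagg$-exit is impossible as $\solf(\deltaOut{\Enagg}(W))=0$, and an edge into $\sSinkS$ would put $\sSinkS\in\mathcal R$); by flow conservation, the only excess node $P_{j+1}$ lying outside $W$, it follows that no flow \emph{enters} $\mathcal R$ either. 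But the flow path from $s$ into $P_j$ (or, if $s\in\mathcal R$, the source edge $(\sSource,s)$) must cross into $\mathcal R$, a contradiction. Hence $\sSinkS\in\mathcal R$, and a path $P'=\langle P_j,\dots,\sSinkS\rangle$ inside $W$ exists.

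Finally I would verify that the repair in Line~\ref{alg:decomp:repairFlow} restores \ref{IP:CutSteiner}. Comparing the flow at Line~\ref{alg:decomp:endIfSteinerCut} with the flow just before Line~\ref{alg:decomp:reduceFlowP}, the two $\pm1$ updates on $(P_j,P_{j+1})$ cancel, so the only net change is a unit decrement on the first edge $a=(P'_1,P'_2)$ of $P'$. If $P'$ has length $2$, then $a\in\EextSm$ is not an $\Enagg$-edge and therefore affects no cut in \ref{IP:CutSteiner}. Otherwise $a$ is an $\Enagg$-edge with both endpoints in $W$: the reverted edge $(P_j,P_{j+1})$ again carries its unit, so the cut $W$ is satisfied, and for every other active site the reachability characterization is preserved because $a$ lies on no \emph{simple} $\Enagg$-path to $\sSinkR$. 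Indeed such a path could leave $W$ only through the sole exit $(P_j,P_{j+1})$, so traversing the $P_j$-outgoing edge $a$ into $W$ would force a later revisit of $P_j$. Thus removing $a$ cannot disconnect any active site from $\sSinkR$, and \ref{IP:CutSteiner} holds again.

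The step I expect to be the main obstacle is the existence of $P'$: one must rule out that the residual flow at $P_j$ circulates on cycles inside $W$ and never reaches $\sSinkS$, which is exactly what the $\mathcal R$-plus-conservation argument, anchored to the still-present inflow from $s$, is designed to exclude. By contrast, once the repair step is phrased through ``no simple path to $\sSinkR$ uses $a$'', its preservation of \ref{IP:CutSteiner} is comparatively routine.
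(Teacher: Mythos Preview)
Your proposal is correct and follows essentially the same skeleton as the paper's proof: you establish the same four facts (the paper labels them (a)--(d)) that $P_j\in W$, $P_{j+1}\notin W$, that $(P_j,P_{j+1})$ is the unique $\Enagg$-exit of $W$, that a path $s\to\cdots\to P_j\to P_{j+1}\to\cdots\to\sSinkR$ existed before the decrement, and that a path $P'$ from $P_j$ to $\sSinkS$ inside $W$ exists by flow conservation.

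The one substantive difference is in the final step. The paper argues directly on cuts: if after Line~\ref{alg:decomp:repairFlow} some $W'$ were violated, then (since only $a=(P'_1,P'_2)$ was net-decremented) $P'_1=P_j\in W'$; but the restored path $\langle P_j,P_{j+1},\dots,\sSinkR\rangle$ lies in $\Gext^{\solf}$ and must cross $\deltaOut{\Enagg}(W')$, a contradiction. Your argument instead goes through the reachability characterization and shows that $a$ lies on no \emph{simple} $\Enagg$-path to $\sSinkR$ (because any such path would have to exit $W$ through $P_j$ and hence revisit it). Both arguments are valid and hinge on the same geometric fact---that $a$ is an edge out of $P_j$ staying inside $W$, while the restored exit $(P_j,P_{j+1})$ leaves $W$---so this is a presentational rather than structural difference. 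The paper's version is marginally shorter; yours makes the ``$a$ is useless for reaching $\sSinkR$'' intuition more explicit. One small wording fix: your phrase ``for every other active site'' should really be ``for every active site,'' since satisfying the single cut $W$ is not by itself enough for $s\in W$; your simple-path argument already covers $s$ as well.
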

\begin{proof}
We only have to consider the case in which the Constraint~\ref{IP:CutSteiner} was violated after executing Line~\ref{alg:decomp:reduceFlowP}. Assume therefore that \ref{IP:CutSteiner} is violated in Line~\ref{alg:decomp:checkIP2}. The \CHOOSE operation in Line~\ref{alg:decomp:chooseW} is well-defined, as \ref{IP:CutSteiner} is violated. Let $W \subseteq \VG$ be any violated set with $\solS \cap W \neq \emptyset$. To prove this lemma, we prove the following four statements:
\begin{description}
\item[\quad \namedlabel{lem:IPCutSubA}{\textnormal{(a)}}] $P_j$ is contained in $W$ while $P_{j+1}$ is not contained in $W$.

\item[\quad \namedlabel{lem:IPCutSubB}{\textnormal{(b)}}]
$\solf(P_j,P_{j+1})=0$ holds in Lines~\ref{alg:decomp:chooseW}-\ref{alg:decomp:choosePrimedP}.

\item[\quad \namedlabel{lem:IPCutSubC}{\textnormal{(c)}}]
Before flow reduction in Line~\ref{alg:decomp:reduceFlowP}, there existed a path \\$P'' = \langle s, \dots, P_j,P_{j+1}, \dots, \sSinkR \rangle \in \Gext^{\solf}$ for $s \in \solS \cap W$.

\item[\quad \namedlabel{lem:IPCutSubD}{\textnormal{(d)}}]
There exists a path $P' = \langle P_j,\dots ,\sSinkS \rangle$ with $P'_i \in W$ for $1\leq i<|P'|$ in $\Gext^{\solf}$.

\end{description}

Considering \ref{lem:IPCutSubA}, note that edge $(P_j,P_{j+1})$ is by definition only included in $\delta^+_{\Eext}(W)$ if $P_j\in W$ and $P_j \notin W$. Thus, assuming that either $P_j$ is not contained in $W$ or assuming that $P_{j+1}$ is contained in $W$, we can conclude that edge $(P_j,P_{j+1})$ is not contained in $\delta^+_{\Eext}(W)$. However, in this case the connectivity inequality \ref{IP:CutSteiner} must have been violated even before flow was reduced. This contradicts our assumption that connectivity inequalities \ref{IP:CutSteiner} have held beforehand, therefore proving \ref{lem:IPCutSubA}.

The correctness of \ref{lem:IPCutSubB} directly follows from \ref{lem:IPCutSubA}, as by \ref{lem:IPCutSubA} $(P_j,P_{j+1})\in \delta^+_{\Eext}(W)$ holds. As $\solf(\delta^+_{\Eext}(W)) = 0$ holds by definition of $W$ and flow may not be negative, we derive the second statement.

We now prove the statement \ref{lem:IPCutSubC}. As connectivity inequalities \ref{IP:CutSteiner} are assumed to have held \emph{before} the flow reduction in Line~\ref{alg:decomp:reduceFlowP}, for each activated Steiner node $s \in \solS$ there existed a path from $s$ to $\sSinkR$ in $\Gext^{\solf}$. By the second statement, $(P_j,P_{j+1})$ is the only edge in $\Gext^{\solf}$ leaving $W$ showing that indeed a path
$P'' = \langle s, \dots, v, P_j,P_{j+1}, \dots, \sSinkR \rangle \in \Gext^{\solf}$ for $s \in \solS$ existed \emph{before} reduction of flow on $(P_j,P_{j+1})$.

By statement \ref{lem:IPCutSubC}, the prefix $\langle s,  \dots , P_j\rangle$ of path $P''$ still exists in $\Gext^{\solf}$ inducing that $P_j$ is reached by a positive flow. By Lemma~\ref{lem:limited-flow-conservation} flow conservation holds for all nodes $w \in W$, since by statement \ref{lem:IPCutSubA} $P_{j+1}$ is not included in $W$. As $\sSinkR$ is not included in $W$, there must exist a path $P' = \langle P_j,\dots ,\sSinkS \rangle \in \Gext^{\solf}$ with $P_i \in W$ for $1\leq i<m$. This shows the fourth statement \ref{lem:IPCutSubD} and shows that the \CHOOSE operation in Line~\ref{alg:decomp:choosePrimedP} is well-defined.

We will now prove the main statement of this lemma, namely that in Line~\ref{alg:decomp:endIfSteinerCut} the connectivity inequalities \ref{IP:CutSteiner} hold (again).
In Line~\ref{alg:decomp:repairFlow}, the flow along edge $(P_j,P_{j+1})$ is incremented again.
Assume for the sake of contradiction, that the reduction of flow along $(P'_1,P'_2)$ violates a connectivity inequality with node set $W'$ such that $\solf(\delta^+_{\Eext}(W')) = 0$ holds. By the same argument as used for proving statement~\ref{lem:IPCutSubA}, it is easy to see that $P'_1 \in W'$ and $P'_2 \notin W'$ must hold. However, by statement~\ref{lem:IPCutSubC}, after having reverted the flow reduction along $(P_j,P_{j+1})$, the path $\langle P_j,P_{j+1}, \dots, \sSinkR \rangle$ was re-established in $\Gext^{\solf}$. As flow along any of the edges contained in this path is greater or equal to one, $W'$ cannot possibly violate \ref{IP:CutSteiner} and contain $P_j \in W'$ as the super sink for the root $\sSinkR \notin W' \subseteq \VG$ may never be contained in $W'$.
\end{proof}

Using the above lemma, we can now prove Theorem~\ref{thm:induction-step}.

\begin{deferredproof}[Theorem \ref{thm:induction-step}]
Assume that the constraints of \ref{alg:MIP} hold with respect to $\solS,\solT,\solf,\solX$ before executing Line~\ref{alg:decomp:chooseTerminal}. By Lemma~\ref{lem:chooseP-is-correct} the \CHOOSE operation in Line~\ref{alg:decomp:chooseP} is well-defined as \ref{IP:FlowConservation} and \ref{IP:CutSteiner} hold by our assumption.
By Lemma~\ref{lem:IPCutSteinerInvariant} the path construction process in Lines~\ref{alg:decomp:reduceFlowP} through \ref{alg:decomp:endIfSteinerCut} is well-defined as initially \ref{IP:CutSteiner} holds. The execution of Lines~\ref{alg:decomp:chooseTerminal}-\ref{alg:decomp:includeInTree} is therefore well-defined.

To distinguish the state of the variables $\solS,\solT,\solf,\solX$ at Lines~\ref{alg:decomp:chooseTerminal} and \ref{alg:decomp:includeInTree} we will use primed variables $\solS',\solT',\solf',\solX'$ to denote the latter state. First note that \ref{IP:FlowConservation} holds by Lemma~\ref{lem:limited-flow-conservation}: As path $P$ must terminate in either $\sSinkS$ or $\sSinkR$ (see Lines~\ref{alg:decomp:chooseP},\ref{alg:decomp:choosePrimedP}),   Lemma~\ref{lem:limited-flow-conservation} reduces to $\f(\deltaOut{\Eext}(v)) - \f(\deltaIn{\Eext}(v)) = 0$ for all $v\in \VG$ for $j=|P|-1$ as neither of the super sinks are included in $\VG$.
The connectivity inequalities \ref{IP:CutSteiner} will also hold with respect to $\solS'$ and $\solf'$ as these are preserved by Lemma~\ref{lem:IPCutSteinerInvariant} and $\solS' \subseteq \solS$ holds.
Constraint~\ref{IP:FlowToSteiner} holds with respect to $\solS'$ as $\solS' \subseteq \solS$ and the flow along edges in $\EextSp$ is never reduced. As similarly flow along edges in $\EextTp$ is only reduced for the terminal being connected, Constraint~\ref{IP:FlowToTerminals} could only be violated by a node satisfying $t' \in \solT'$ but $t \notin \solT$. If such a node exists, then it must have been added in Line~\ref{alg:decomp:SbecomesT} and as \ref{IP:FlowToSteiner} has held for $\solS$, constraint~\ref{IP:CutTerminal} will hold for $t'\in \solS \cap \solT'$. Analogously, constraint~\ref{IP:CapSteiner} is not violated as setting $\solX(s)$ to zero for $s\in \solS$ implies that $s \notin \solS'$ (see Line~\ref{alg:decomp:SbecomesT}). Constraint \ref{IP:EnforceAbsorption} holds for $\solS'$ as the variable $\solX(s)$ is set to zero whenever the flow along an edge $(s,\sSinkS)$ is reduced to zero. Lastly, it is easy to observe that the capacity constraints \ref{IP:CapSteiner}, \ref{IP:CapEdge} cannot be violated as the flow is only reduced. \hfill $\qed$
\end{deferredproof}

Using Theorem~\ref{thm:induction-step} we can now prove that Algorithm~\ref{alg:decompose} terminates.

\begin{theorem}
\label{thm:termination}
Algorithm~\ref{alg:decompose} terminates.
\end{theorem}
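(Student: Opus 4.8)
The plan is to exhibit a nonnegative integer quantity that strictly decreases with every pass through the inner loop, thereby bounding the total work of the algorithm. The natural candidate is the total remaining flow $\Phi \triangleq \sum_{e \in \Eext} \solf(e)$, which lies in $\pint$ by \ref{IP:VarF}. First I would record that $\solf$ stays nonnegative throughout the execution: every decrement (in Lines~\ref{alg:decomp:reduceFlowP} and~\ref{alg:decomp:repairFlow}) is applied to an edge of $\Gext^{\solf}$, which by definition carries at least one unit of flow, so no coordinate of $\solf$ is ever driven below zero. Consequently $\Phi$ remains a nonnegative integer at all times.

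The core step is to show that each iteration of the \textbf{for} loop (Lines~\ref{alg:decomp:beginFor}--\ref{alg:decomp:endFor}) decreases $\Phi$ by exactly one, irrespective of whether a redirection occurs. If \ref{IP:CutSteiner} is \emph{not} violated after Line~\ref{alg:decomp:reduceFlowP}, then the single decrement on $(P_j,P_{j+1})$ is the only change and $\Phi$ drops by one. If \ref{IP:CutSteiner} \emph{is} violated, then the decrement on $(P_j,P_{j+1})$ is reverted in Line~\ref{alg:decomp:repairFlow} (net change zero on that edge) while the flow on $(P'_1,P'_2)$ is decremented once; the redirection is well-defined by Lemma~\ref{lem:IPCutSteinerInvariant}, and the net effect on $\Phi$ is again a decrease of exactly one. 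Hence every loop iteration lowers $\Phi$ by one unit.

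From these two facts the \textbf{for} loop terminates, and in fact the total number of inner-loop iterations over the \emph{entire} run is bounded by the initial value of $\Phi$. Since the path chosen in Line~\ref{alg:decomp:chooseP} satisfies $|P| \geq 2$ (it traverses $\sSource$, $t$, and at least one further node before reaching a super sink), every execution of the \textbf{while} body performs at least one inner iteration and thus strictly decreases $\Phi$; as $\Phi$ cannot fall below zero, the \textbf{while} loop of Line~\ref{alg:decomp:beginWhile} executes only finitely often. Alternatively, one can bound the \textbf{while} iterations combinatorially: Line~\ref{alg:decomp:chooseTerminal} removes one node from $\solT$ per iteration, while Line~\ref{alg:decomp:SbecomesT} returns a node at most once (it simultaneously sets $\solX(s) \gets 0$ and removes $s$ from $\solS$), so each node of $T \cup S$ is processed at most once.

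I expect the main obstacle to be the redirection in Line~\ref{alg:decomp:extendPath}, which rewrites $P$ and changes its length mid-loop, so that neither the index $j$ nor $|P|$ is monotone in any obvious way. The insight that sidesteps this is to measure progress by total flow rather than by path structure: the revert-then-redirect pattern in Line~\ref{alg:decomp:repairFlow} is balanced precisely so that each iteration still nets the removal of one unit of flow from $\Gext^{\solf}$, which is exactly the monotone decrease the potential argument requires.
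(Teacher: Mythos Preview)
Your proposal is correct and follows essentially the paper's approach: both argue that the total flow is finite, nonnegative, and strictly reduced at each inner-loop iteration (you make the per-iteration decrease explicit via the potential $\Phi$, carefully handling the revert-and-redirect case), so the inner loop terminates; for the outer loop the paper uses precisely the combinatorial bound you give as your alternative. The only addition in the paper's proof is an upfront appeal to Theorem~\ref{thm:induction-step} to ensure that every \textbf{choose} operation is well-defined throughout.
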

\begin{proof}
By iteratively applying Theorem~\ref{thm:induction-step} the \CHOOSE operations of Algorithm~\ref{alg:decompose} are well-defined.
Note that by construction of the path $|P|$ (see Lines~\ref{alg:decomp:chooseP},\ref{alg:decomp:choosePrimedP}) flow variables which values are decremented must have been greater or equal to one before the reduction took place. Since the flow $\solf \in \mathbb{Z}_{\geq 0}$ is finite and is successively reduced during the process of path construction, the inner loop (see Lines~\ref{alg:decomp:beginFor}-\ref{alg:decomp:endFor}) must terminate. The outer loop must eventually terminate as well, because each node in $\solT$ (see Line~\ref{alg:decomp:chooseTerminal}) is handled exactly once and as a node $s \in \solS$ may be only moved only once into $\solT$ (see Line~\ref{alg:decomp:SbecomesT}).
\end{proof}

Using Theorem~\ref{thm:induction-step} and \ref{thm:termination} we can finally prove that Algorithm~\ref{alg:decompose} indeed constructs a feasible solution for A-CVSAP. 

\begin{theorem}
Algorithm~\ref{alg:decompose} constructs a feasible solution $\solTree \in \FeasibleACVSAP$ for A-CVSAP given a solution $(\solX,\solf)\in \FeasibleIP$. Additionally, $\cCVSAP(\solTree) \leq \cIP(\solX,\solf)$ holds.
\end{theorem}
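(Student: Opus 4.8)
The plan is to lean on Theorems~\ref{thm:induction-step} and~\ref{thm:termination}: the former guarantees that every \CHOOSE is well-defined and that all constraints of~\ref{alg:MIP} persist for the current reduced data $(\solS,\solT,\solf,\solX)$ after each outer iteration, the latter that the loop halts after finitely many iterations with $\solT=\emptyset$. Granting this, the output $\solTree=(\solVtree,\solEtree,r,\solfPath)$ is well-defined, and it remains to verify the two defining properties of a VA (\ref{def:VA1},\ref{def:VA2}), the five CVSAP conditions (\ref{def:CVSAP1}--\ref{def:CVSAP5}) with all edges oriented towards $r$, and finally the cost inequality. Two bookkeeping facts will drive everything: each outer iteration removes net one unit of $\solf$ along every edge of the finally assembled path $P$ (the reductions reverted in Line~\ref{alg:decomp:repairFlow} do not count), in particular exactly one unit of source supply on the edge $(\sSource,u)$ of the handled node $u$; and $P$ always runs from $\sSource$ via $u$ to either $\sSinkR$ or $\sSinkS$.

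The main obstacle is \ref{def:VA1}, that $(\solVtree,\solEtree,r)$ is an in-arborescence rooted at $r$. I would split this into: each non-root vertex of $\solVtree$ receiving exactly one outgoing tree edge; acyclicity; and $r$ having no outgoing edge. The last holds since $r$ is never placed into $\solT$. For acyclicity I would use a temporal argument: an edge $(u,s)$ with $s\in\sS$ is created in Line~\ref{alg:decomp:includeInTree} of the very iteration in which $s$ is enqueued (Line~\ref{alg:decomp:SbecomesT}), so every parent is handled strictly later than its child; following parent pointers therefore cannot close a cycle and must instead reach the unique never-handled vertex $r$, which also fixes the orientation towards $r$. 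The delicate point is that \emph{every} initially active Steiner site is eventually enqueued and handled (original terminals clearly are, each once). For this I would argue by a global flow count against the invariants. The total source supply is $|\sT|+|\sS\cap\solVtree|$ by~\ref{IP:FlowToTerminals} and~\ref{IP:FlowToSteiner}, the loop performs $|\sT|+m$ iterations where $m$ is the number of enqueued sites, each consuming one source unit, so the residual source supply at termination is $|\sS\cap\solVtree|-m$. By flow conservation~\ref{IP:FlowConservation} this equals the residual sink inflow $\solf(r,\sSinkR)+\sum_{s}\solf(s,\sSinkS)$; since~\ref{IP:EnforceAbsorption} keeps $\solf(s,\sSinkS)\geq1$ for every still-active site, this forces $\solf(r,\sSinkR)=0$. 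But if any active site remained, \ref{IP:CutSteiner} (preserved by Lemma~\ref{lem:IPCutSteinerInvariant}) would supply it a flow-path to $\sSinkR$ and hence $\solf(r,\sSinkR)\geq1$, a contradiction. Thus $m=|\sS\cap\solVtree|$ and all sites are connected.

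The remaining properties are then routine counting. \ref{def:CVSAP1} is immediate from Line~\ref{alg:decomp:initVA}. For~\ref{def:VA2} the stored route $\langle P_2,\dots,P_{|P|-1}\rangle$ lies wholly in $\G$, since the only $\Eext$-edges incident to $\sSource,\sSinkS,\sSinkR$ are the stripped first and last edges of $P$, and \texttt{simplify} returns a simple $P_2$-to-$P_{|P|-1}$ path. Condition~\ref{def:CVSAP2} holds because terminals are never parents and are handled once. Invoking Remark~\ref{rem:alternativeCVSAP2}, \ref{def:CVSAP3} and~\ref{def:CVSAP4} reduce to in-degree bounds: the edges entering $r$ are exactly the handled nodes whose route ended at $\sSinkR$, bounded by the initial $\solf(r,\sSinkR)\leq\cu_r$ via~\ref{IP:CapRoot}; the edges entering an active site $s$ are exactly the units it absorbs, bounded by the initial $\solf(s,\sSinkS)\leq\cu_s$ via~\ref{IP:CapSteiner}. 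Finally~\ref{def:CVSAP5} follows because each traversal of an original edge $e$ consumes one unit of $\solf_e$ while each simplified route uses $e$ at most once, whence $|(\solfPath(\solEtree))[e]|\leq\solf_e\leq\cu_e$ by~\ref{IP:CapEdge}.

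The cost bound is then immediate and is where the sole inequality enters. The active Steiner nodes of $\solTree$ are exactly $\sS\cap\solVtree=\{s\mid\solX_s\geq1\}$ with respect to the original multipliers, so their activation cost equals $\sum_{s\in\sS}\cc_s\solX_s$. For traffic, the edge-count bound $|(\solfPath(\solEtree))[e]|\leq\solf_e$ together with $\cc_e=\cE(e)\geq0$ yields $\sum_{e\in\EG}\cE(e)\,|(\solfPath(\solEtree))[e]|\leq\sum_{e\in\EG}\cc_e\solf_e$; summing the two contributions gives $\cCVSAP(\solTree)\leq\cIP(\solX,\solf)$, as claimed.
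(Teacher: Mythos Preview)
Your proof is correct and follows the paper's approach closely: the same temporal-order argument for acyclicity, and the same flow-counting argument (residual source supply versus~\ref{IP:EnforceAbsorption}, then contradiction via~\ref{IP:CutSteiner}) to show that every initially active Steiner site is eventually enqueued. One sentence needs adjusting: it is not true that every tree edge $(u,s)$ into a Steiner site $s$ is created in the very iteration in which $s$ is enqueued---several such edges can be created in earlier iterations whenever the path terminates at $s$ but $\solf(s,\sSinkS)$ remains positive after the decrement---yet your conclusion survives, since at the moment any such edge is added we have $\solf(s,\sSinkS)\geq 1$ just before the decrement, so $s$ has not yet been moved to $\solT$ and is therefore handled strictly after $u$.
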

\begin{proof}

To show that for $\solTree$ constructed by Algorithm~\ref{alg:decompose} $\solTree \in \FeasibleMCVSAP$ holds, we need to check \ref{def:CVSAP1}-\ref{def:CVSAP5} as well as \ref{def:VA1} and \ref{def:VA2}.We first give short arguments why in fact the conditions \ref{def:CVSAP1}-\ref{def:CVSAP5} hold:
\begin{longtable}{p{1.7cm}p{10.5cm}}
\ref{def:CVSAP1} & This constraint naturally holds due to Line~\ref{alg:decomp:initVA}. \\
\ref{def:CVSAP2} & Algorithm~\ref{alg:decompose} does not allow for connecting nodes to terminals. Thereby terminals are indeed leaves in $\solTree$ and \ref{def:CVSAP2} holds. \\
\ref{def:CVSAP3} & Each time another node is connected to the root $r$ the flow along $(r,\sSinkR)$ is decremented (see \ref{IP:CapRoot}). As the flow along this edge is bounded by $\uR$, the degree constraint \ref{def:CVSAP3} is satisfied by $\solTree$ \\
\ref{def:CVSAP4} & An analogue argument as for \ref{def:CVSAP3} applies. \\
\ref{def:CVSAP5} & As paths are constructed according to the flow variables $\solf$ that initially respect capacity constraints on edges \ref{IP:CapEdge}, and as $\solf$ is appropriately reduced on used edges, $\solTree$ satisfies the edge capacity constraint \ref{def:CVSAP5}.\\ 
\end{longtable} 

It remains to prove that $\solTree$ satisfies the conditions \ref{def:VA1} and \ref{def:VA2} given by in Definition~\ref{def:VA}. \ref{def:VA1} follows directly from Line~\ref{alg:decomp:includeInTree}. It remains to prove that $\solTree$ satisfies the connectivity requirements \ref{def:VA2}.

First note that $\solT = \emptyset$ holds when~\ref{alg:decompose} terminates.
We prove that $\solS = \emptyset$ equally holds, thereby showing that each node in $\solVtree \setminus \{r\}$ is connected to another node in $\solVtree$ in Line~\ref{alg:decomp:ifFlowReachedS}.
Assume that $\solS \neq \emptyset$ but $\solT = \emptyset$ holds. We show that this can never be the case using the invariant $s \in \solS \Rightarrow \solf(s,\sSinkS) \geq 1$ which directly follows from Theorem~\ref{thm:induction-step} as \ref{IP:EnforceAbsorption} holds. As this holds for all Steiner nodes, $\solf (\delta^+_{\EextSm}(\solS)) \geq |\solS|$ follows.
On the other hand, the amount of flow emitted by $\sSource$ equals $|\solS|$ as we assume $\solT = \emptyset$ to hold and by Theorem~\ref{thm:induction-step} the constraints  \ref{IP:FlowToSteiner} and \ref{IP:FlowToTerminals} must hold. Due to the flow conservation constraint \ref{IP:FlowConservation}, this implies $\solf(r,\sSinkR)\leq 0$  which immediately violates Constraint~\ref{IP:CutSteiner} by considering the node set $W=\VG$. As this contradicts the statement of Theorem~\ref{thm:induction-step}, we conclude that $\solS = \solT = \emptyset$ must hold when terminating, implying that for all included nodes (see Line~\ref{alg:decomp:initVA}) an edge was introduced in $\solEtree$ (see Line~\ref{alg:decomp:includeInTree}).

As each node (except for the root) has one outgoing edge, it remains to show that $\solTree$ does not contain cycles. This follows immediately from the order in which nodes are extracted from $\solT$. This order in fact defines a topological ordering on $\solVtree$ as a cycle containing nodes $u$ and $v$ would imply that $u$ was connected before $v$ and vice versa, that $v$ was connected before $u$. As this can never be the case, this concludes the proof that $\solTree \in \FeasibleACVSAP$ holds.

Lastly, $\cCVSAP(\solTree) \leq \cIP(\solX,\solf)$ is valid as costs associated with activating Steiner nodes are incurred in both objectives and~\ref{alg:decompose} uses only edges already accounted for in $\cIP(\solX,\solf)$. In fact $\cCVSAP(\solTree) < \cIP(\solX,\solf)$ may only be the case if the function \texttt{simplify} (see Line~\ref{alg:decomp:includeInTree}) truncated a path.
\end{proof}

To prove that our formulation \ref{alg:MIP} indeed computes an optimal solution, we need the following lemma showing that each solution to A-CVSAP can be mapped on a solution of \ref{alg:MIP} with equal cost:

\begin{lemma} 
\label{lem:A-CVSAP-feasible-implies-IP-CVSAP-feasible}
Given a network $\G = (\VG,\EG,\cE,\uE)$, a request $R_\G = (r,\sS,\sT,\uR,\cS,\uS)$ and a feasible solution $\solTree = (\solVtree,\solEtree,r,\solfPath)$ to the corresponding A-CVSAP. There exists a solution $(\solX,\solf) \in \FeasibleIP$ with $\cCVSAP(\solTree) = \cIP(\solX,\solf)$.
\end{lemma}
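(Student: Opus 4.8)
The plan is to construct the IP solution $(\solX, \solf)$ directly from the arborescence $\solTree$, essentially inverting Algorithm~\ref{alg:decompose}. First I would set $\solX_s = 1$ exactly for the active Steiner nodes $s \in \solVtree \cap \sS$ and $\solX_s = 0$ otherwise, and on the original edges let the flow record path usage, $\solf_e \triangleq |(\solfPath(\solEtree))[e]|$ for every $e \in \EG$; this is precisely one unit routed along each virtual edge's underlying path. With this choice the two objectives coincide term by term, so $\cIP(\solX,\solf) = \cCVSAP(\solTree)$ is immediate. The flow on the extended edges is then dictated by the role of each node: $\solf_{(\sSource,t)} = 1$ for $t \in \sT$, $\solf_{(\sSource,s)} = \solX_s$ for $s \in \sS$, $\solf_{(s,\sSinkS)} = |\deltaIn{\solEtree}(s)|$ (the number of children of $s$), and $\solf_{(r,\sSinkR)} = |\deltaIn{\solEtree}(r)|$.

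Next I would dispatch the local constraints. For flow conservation~\ref{IP:FlowConservation} at an original node $v \in \VG$, summing the path-usage flow over $\deltaOut{\EG}(v)$ and $\deltaIn{\EG}(v)$ shows that the net original-edge outflow equals $|\deltaOut{\solEtree}(v)| - |\deltaIn{\solEtree}(v)|$, since a simple path contributes an out-surplus only where it starts and an in-surplus only where it ends. A four-way case split---$v$ a terminal, an active Steiner node, the root $r$, or any other node---then shows that the extended edges incident to $v$ exactly absorb this surplus, establishing~\ref{IP:FlowConservation}. The capacity bounds~\ref{IP:CapEdge},~\ref{IP:CapSteiner},~\ref{IP:CapRoot} follow from~\ref{def:CVSAP5},~\ref{def:CVSAP4},~\ref{def:CVSAP3} together with the A-CVSAP degree forms in Remark~\ref{rem:alternativeCVSAP2}, while~\ref{IP:FlowToTerminals},~\ref{IP:FlowToSteiner},~\ref{IP:VarX} and~\ref{IP:VarF} hold directly by construction.

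The substantive step is the connectivity inequality~\ref{IP:CutSteiner} (and, analogously, the starred cut~\ref{IP:CutTerminal}), where I would exploit the arborescence structure. Every active Steiner node $s$ has a unique directed chain of virtual edges $s \to \cdots \to r$ in $\solTree$; since each virtual edge carries one unit along its path, concatenating these paths with $(r,\sSinkR)$---which carries $|\deltaIn{\solEtree}(r)| \geq 1$ units---yields a walk from $s$ to $\sSinkR$ whose every edge lies in $\Gext^{\solf}$ and in $\Enagg$ (no edge of $\EextSm$ is used). For any $W \subseteq \VG$ with $s \in W$, this walk leaves $W$ (the super sinks lie outside $\VG$), so it traverses some edge of $\deltaOut{\Enagg}(W)$ of positive flow; hence $\solf(\deltaOut{\Enagg}(W)) \geq 1 = \solX_s$. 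Replacing $s$ by a terminal $t \in W$ gives~\ref{IP:CutTerminal} verbatim.

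The one delicate point---and where I expect the main obstacle to lie---is the starred inequality~\ref{IP:EnforceAbsorption}, which requires $\solf_{(s,\sSinkS)} = |\deltaIn{\solEtree}(s)| \geq 1$ for every active Steiner node. This holds as long as each active Steiner node has at least one child, but it can fail for a \emph{degenerate} active Steiner node that is a leaf of $\solTree$---a configuration the definition of $\FeasibleACVSAP$ does not forbid, yet one that forces $\solf_{(s,\sSinkS)} = 0$ while $\solX_s = 1$. Since~\ref{IP:EnforceAbsorption} is only a valid, removable inequality, the clean resolution is to check the lemma against $\FeasibleIP$ \emph{without}~\ref{IP:EnforceAbsorption}, for which the construction above yields exact cost equality for \emph{every} feasible $\solTree$; alternatively, one notes that such a leaf activation can be pruned from $\solTree$ at a strict cost decrease without affecting feasibility, so the statement holds verbatim on solutions---in particular on any optimum---carrying no useless Steiner activation.
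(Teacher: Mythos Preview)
Your construction is exactly the one the paper uses: the paper defines $(\solX,\solf)$ by the same four bullet points and then simply declares that checking $(\solX,\solf)\in\FeasibleIP$ and $\cCVSAP(\solTree)=\cIP(\solX,\solf)$ is ``straightforward'', without spelling out any of the verifications you carry out. So your approach and the paper's coincide; you are just more explicit.

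Your observation about \ref{IP:EnforceAbsorption} is a genuine point that the paper's one-line proof glosses over. As you say, a feasible $\solTree\in\FeasibleACVSAP$ may contain an activated Steiner node with no children, in which case your (and the paper's) construction gives $\solf_{(s,\sSinkS)}=0<1=\solX_s$, violating the starred constraint. The paper implicitly relies on the fact that \ref{IP:EnforceAbsorption} is a removable valid inequality and that the lemma is only ever invoked (in Theorem~\ref{thm:SolvingACVSAPToOptimality}) on an \emph{optimal} $\solTree$, where such a useless activation cannot occur because $\cS>0$; your two proposed resolutions---either drop the starred constraint from $\FeasibleIP$ for this lemma, or prune degenerate Steiner leaves first---are both correct ways to make this rigorous.
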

\begin{proof}
We define the solution $(\solX, \solf) \in \{0,1\}^S \times \pint^{\Eext}$ in the following way
\begin{itemize}
 \item $\solX_s = 1$ iff. $s \in \solVtree$ for all $s \in S$,
 \item $\solf_e \triangleq |(\solfPath(\solEtree))[e]|$ for all $e \in \EG$,
 \item $\solf_e = 1$ if $v \in \solVtree \setminus \{r\}$ for all $e=(\sSource,v) \in \EextSp \cup \EextTp$ and $\solf_e = 0$ otherwise,
 \item $\solf_e \triangleq \delta^-_{\solEtree}(s)$ for all $e = (s,\sSinkS) \in \EextSm$ and  $\solf(r,\sSinkR) \triangleq \delta^-_{\solEtree}(r)$.
\end{itemize}
Checking that $(\solX,\solf) \in \FeasibleIP$ and $\cCVSAP(\solTree) = \cIP(\solX,\solf)$ holds is straightforward.
\end{proof}

Finally, we can now prove that VirtuCast solves CVSAP to optimality.

\begin{theorem}
\label{thm:SolvingACVSAPToOptimality}
Algorithm VirtuCast, that first computes an optimal solution to \ref{alg:MIP} and then applies \ref{alg:decompose}, solves A-CVSAP to optimality.
\end{theorem}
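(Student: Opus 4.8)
The plan is to combine the two transfer results already in hand---namely that \ref{alg:decompose} turns any IP solution into an A-CVSAP solution of no greater cost, and that Lemma~\ref{lem:A-CVSAP-feasible-implies-IP-CVSAP-feasible} turns any A-CVSAP solution into an IP solution of exactly equal cost---to conclude that the two optimal objective values coincide and that VirtuCast's output realizes this common value.

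First I would let $(\solX,\solf)$ be the optimal solution to \ref{alg:MIP} computed in the first phase, i.e. a minimizer of $\cIP$ over $\FeasibleIP$, and let $\solTree$ be the Virtual Arborescence that \ref{alg:decompose} produces from $(\solX,\solf)$. The immediately preceding theorem already gives $\solTree \in \FeasibleACVSAP$ together with $\cCVSAP(\solTree) \le \cIP(\solX,\solf)$, so it only remains to certify optimality of $\solTree$ among all feasible A-CVSAP solutions.

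For this, I would take an arbitrary $\mathcal{T}^\star \in \FeasibleACVSAP$ and invoke Lemma~\ref{lem:A-CVSAP-feasible-implies-IP-CVSAP-feasible} to obtain some $(\solX^\star,\solf^\star) \in \FeasibleIP$ with $\cIP(\solX^\star,\solf^\star) = \cCVSAP(\mathcal{T}^\star)$. Since $(\solX,\solf)$ minimizes $\cIP$ over $\FeasibleIP$, this yields the chain
\[
\cCVSAP(\solTree) \;\le\; \cIP(\solX,\solf) \;\le\; \cIP(\solX^\star,\solf^\star) \;=\; \cCVSAP(\mathcal{T}^\star).
\]
As $\mathcal{T}^\star$ was arbitrary, $\solTree$ attains the minimum cost over $\FeasibleACVSAP$ and is therefore optimal, which is exactly the claim.

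There is no real obstacle in the argument itself; the only points I would be careful to state are the implicit existence assumptions. Composing the two transfer maps shows $\FeasibleIP \neq \emptyset$ exactly when $\FeasibleACVSAP \neq \emptyset$, so the two problems are feasible simultaneously. When feasible, \ref{alg:MIP} does admit a minimizer: its objective is bounded below by $0$, and every flow variable is bounded from above (by \ref{IP:CapEdge}, \ref{IP:CapSteiner}, \ref{IP:CapRoot}, \ref{IP:FlowToTerminals} and \ref{IP:FlowToSteiner}) while the $\xS_s$ are binary, so the integral feasible region is finite and contains an optimum. This makes the phrase ``computes an optimal solution to \ref{alg:MIP}'' well-posed precisely in the interesting case that A-CVSAP is feasible.
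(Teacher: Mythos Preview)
Your proposal is correct and follows essentially the same approach as the paper: both combine the preceding theorem (that \ref{alg:decompose} yields $\solTree \in \FeasibleACVSAP$ with $\cCVSAP(\solTree)\le\cIP(\solX,\solf)$) with Lemma~\ref{lem:A-CVSAP-feasible-implies-IP-CVSAP-feasible} to conclude optimality. The only cosmetic difference is that the paper phrases it as a contradiction argument while you give the direct chain of inequalities; your additional remarks on feasibility and existence of an IP optimum are sound but not required for the paper's statement.
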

\begin{proof}
We use \ref{alg:MIP} to compute an optimal solution $(\solX,\solf)\in \FeasibleIP$ and afterwards construct the corresponding $\solTree \in \FeasibleACVSAP$ via \ref{alg:decompose}. Assume for the sake of deriving a contradiction that $\solTree$ is not optimal and there exists $\tilde{\mathcal{T}} \in \FeasibleACVSAP$ with $\cCVSAP( \tilde{\mathcal{T}}) < \cCVSAP(\solTree)$. By Lemma~\ref{lem:A-CVSAP-feasible-implies-IP-CVSAP-feasible} any solution for A-CVSAP can be mapped on a feasible solution of \ref{alg:MIP} of the same objective value. This contradicts the optimality of $(\solX,\solf) \in \FeasibleIP$ and $\solTree$ must therefore be optimal.
\end{proof}

\subsection{Runtime Analysis for~\ref{alg:decompose}}
\label{sec:runtime-analysis}

We conclude our discussion of VirtucCast with stating that each \CHOOSE operation in \ref{alg:decompose} and checking whether connectivity inequalities \ref{IP:CutSteiner} hold can be implemented using depth-first search. Implementing \ref{alg:decompose} in this way and assuming that an optimal solution for \ref{alg:MIP} is given and that $\G$ does not contain zero-cost cycles, we can bound the runtime from above as follows.

\begin{theorem}
\label{thm:runtime}
Using depth-first search for choosing paths in Algorithm~\ref{alg:MIP} and for determining whether connectivity inequalities \ref{IP:CutSteiner} are violated, we can bound the runtime by $\mathcal{O} \left( |\VG|^2\cdot |\EG| \cdot (|\VG|+|\EG|) \right)$, given an optimal solution $(\solX, \solf) \in \FeasibleIP$ and assuming that graph $G$ does not contain zero-cost cycles.
\end{theorem}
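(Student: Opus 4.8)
The plan is to charge the running time to three nested quantities: the number of iterations of the outer \textbf{while}-loop (Lines~\ref{alg:decomp:beginWhile}--\ref{alg:decomp:endWhile}), the number of iterations of the inner \textbf{for}-loop (Lines~\ref{alg:decomp:beginFor}--\ref{alg:decomp:endFor}) carried out inside one outer iteration, and the cost of a single \texttt{choose}/check step. First I would argue that each individual \texttt{choose} operation (Lines~\ref{alg:decomp:chooseP},~\ref{alg:decomp:chooseW},~\ref{alg:decomp:choosePrimedP}) as well as the test whether \ref{IP:CutSteiner} is violated (Line~\ref{alg:decomp:checkIP2}) is realizable by a single depth-first search on the flow-carrying subgraph $\Gext^{\solf}$. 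For $P$ and $P'$ this is a forward search from $t$ (resp.\ $P_j$); for the violation test it is a backward search from $\sSinkR$ along the positive-flow edges of $\Enagg$, where the set of unreached nodes that still contains an active Steiner site simultaneously certifies a violation and supplies a valid witness $W$ for Line~\ref{alg:decomp:chooseW}. Since $\Gext$ has $|\Vext| = |\VG| + 3$ vertices and $|\Eext| = |\EG| + 1 + 2|\sS| + |\sT| = \mathcal{O}(|\VG| + |\EG|)$ edges, each such search costs $\mathcal{O}(|\VG| + |\EG|)$.

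Next I would bound the number of outer iterations. Every node is extracted from $\solT$ at most once (Line~\ref{alg:decomp:chooseTerminal}); a terminal of $\sT$ enters $\solT$ only initially, and an active Steiner site enters $\solT$ at most once (Line~\ref{alg:decomp:SbecomesT}); hence there are at most $|\sT| + |\sS| = \mathcal{O}(|\VG|)$ outer iterations. The subtler quantity is the number of \textbf{for}-loop iterations, which I would control through the global invariant that every pass through the loop body decrements the \emph{total} flow $\sum_{e}\solf_e$ by exactly one unit: in the non-violating branch the reduction of Line~\ref{alg:decomp:reduceFlowP} survives, while in the violating branch it is reverted (Line~\ref{alg:decomp:repairFlow}) and replaced by a reduction along $(P'_1,P'_2)$. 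Consequently the number of \textbf{for}-loop iterations spent in any single outer iteration is at most the total flow present, $\sum_{e \in \Eext}\solf_e$.

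The crux, and the place where both hypotheses of the theorem enter, is to bound $\sum_{e}\solf_e$ polynomially. Here I would show that the optimal flow $\solf$ carries no directed cycle in $\Gext^{\solf}$: any such cycle must lie entirely inside $\EG$, since none of $\sSource$, $\sSinkR$, $\sSinkS$ can belong to a cycle (they have no incoming resp.\ outgoing edges); by the no-zero-cost-cycle assumption it therefore has strictly positive cost, so cancelling one unit of flow around it would strictly decrease $\cIP$ while --- this is the delicate point --- preserving feasibility, contradicting optimality. Given acyclicity, the flow emitted by $\sSource$ equals $V = |\sT| + |\solS| = \mathcal{O}(|\VG|)$, and in an acyclic single-source flow no edge carries more than $V$ units; hence $\sum_{e}\solf_e = \mathcal{O}(V\cdot|\Eext|) = \mathcal{O}(|\VG|\cdot|\EG|)$. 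Multiplying the $\mathcal{O}(|\VG|)$ outer iterations, the $\mathcal{O}(|\VG|\cdot|\EG|)$ per-outer bound on the \textbf{for}-loop, and the $\mathcal{O}(|\VG| + |\EG|)$ cost of each search then yields the claimed $\mathcal{O}(|\VG|^2\cdot|\EG|\cdot(|\VG|+|\EG|))$ (a conservative product, since the global \textbf{for}-loop count is already $\mathcal{O}(|\VG|\cdot|\EG|)$).

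I expect the cycle-cancellation step to be the main obstacle. Decrementing flow around a cycle trivially preserves \ref{IP:FlowConservation} and all capacity constraints (flow only drops, and a cycle inside $\EG$ avoids the edges governed by \ref{IP:CapRoot}, \ref{IP:CapSteiner}, \ref{IP:EnforceAbsorption}, \ref{IP:FlowToTerminals}, \ref{IP:FlowToSteiner}); the real work is to verify that the connectivity inequalities \ref{IP:CutSteiner} (and \ref{IP:CutTerminal}) survive, i.e.\ that no active Steiner site loses its flow-path to $\sSinkR$. I would address this via a path/cycle decomposition of $\solf$, removing the cyclic components and arguing that reachability of $\sSinkR$ is retained for every active site; where a stubborn case remains, I would instead cancel a minimal offending cycle and reroute the single affected source-to-sink path, which is the argument I would develop in full.
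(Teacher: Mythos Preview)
Your proposal differs from the paper in two places, and one of them carries a real gap.

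The first difference is harmless (in fact an improvement): you check~\ref{IP:CutSteiner} by a single \emph{backward} DFS from $\sSinkR$ over positive-flow edges of $\Enagg$, costing $\mathcal{O}(|\VG|+|\EG|)$ per check and yielding a witness $W$ for free. The paper instead runs a \emph{forward} DFS from every active Steiner node, paying $\mathcal{O}(|\solS|\cdot(|\VG|+|\EG|))$ per check. Both are correct; your version simply trades this $|\solS|$ factor against a looser per-iteration bound elsewhere, ending at the same product.

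The second difference is the substantive one. To bound the inner \textbf{for}-loop you argue that the \emph{flow} $\solf$ is globally acyclic (so total flow is $\mathcal{O}(|\VG|\cdot|\EG|)$), and you correctly flag preservation of~\ref{IP:CutSteiner} under cycle cancellation as ``the delicate point.'' This is a genuine gap: cancelling one unit around a cycle $C\subseteq\EG$ can drop $\solf(\delta^+_{\Enagg}(W))$ to zero for a set $W$ that the cycle crosses, so the cancelled flow need \emph{not} be feasible. Your fallback (``cancel a minimal offending cycle and reroute the single affected source-to-sink path'') is too vague to close this, and in fact it is not clear that $\Gext^{\solf}$ is acyclic for optimal $\solf$ at all---a sum of \emph{simple} $\sSource$--sink paths can still produce antiparallel edges and hence a directed cycle in the support.

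The paper sidesteps this entirely. It never claims acyclicity of $\solf$; instead it bounds the length of each \emph{constructed} walk $P$ by $|\Eext|$ directly: if $P$ repeated an edge it would contain a positive-cost cycle, and then \texttt{simplify} would produce a VA with $\cCVSAP(\solTree)<\cIP(\solX,\solf)$, which via Lemma~\ref{lem:A-CVSAP-feasible-implies-IP-CVSAP-feasible} yields a strictly cheaper IP solution and contradicts optimality. Combining this per-iteration bound $|P|\leq|\Eext|$ with the paper's $\mathcal{O}(|\solS|(|\VG|+|\EG|))$ check and the $\mathcal{O}(|\VG|)$ outer iterations gives the stated bound. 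So the optimality and no-zero-cost-cycle hypotheses act on the \emph{path} level, not on the support graph, and no feasibility-after-cancellation argument is needed.
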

\begin{proof}
We use depth-first search to separate the connectivity inequalities \ref{IP:CutSteiner} in a canonical manner which we only explain briefly. Given an activated Steiner node $s \in \solS$, we compute the set of all reachable nodes $R$ via depth-first search. If $\sSinkR$ is contained in $R$, then no set of nodes $W \subseteq \VG$ containing $s$ can violate \ref{IP:CutSteiner}. On the other hand, if $\sSinkR$ is not contained in $R$, then obviously $W \triangleq R$ violates \ref{IP:CutSteiner}.
Checking the connectivity inequalities in Line~\ref{alg:decomp:checkIP2} can  therefore be performed in time $\mathcal{O}(|\solS|\cdot (|\Vext|+|\Eext|))$.
The runtime for choosing a path in Line~\ref{alg:decomp:choosePrimedP} is clearly dominated by the runtime for checking the connectivity inequalities, and as the previous depth-first search provides a node set $W$, we do not consider these operations.

The length of any used path $P$ is bounded by $|\Eext|$ as otherwise $P$ would contain a cycle with positive cost. As this cycle can be removed (see function \texttt{simplify} in Line~\ref{alg:decomp:includeInTree}) yielding a better objective value while remaining feasible, this may never occur by the assumption that our solution is optimal.

 Thus, the runtime for the inner loop (Lines~\ref{alg:decomp:beginFor}-\ref{alg:decomp:endFor})  amounts to $\mathcal{O}(|\Eext|\cdot|\solS|\cdot (|\Vext|+|\Eext|))$ Lastly, the outer loop is performed at most $|\solS|+|\solT|$ many times and the runtime for choosing path $P$ in Line~\ref{alg:decomp:chooseP} is clearly dominated by the runtime of the inner loop. As $|\Eext| \in \Theta(\EG)$ (assuming $\EG$ to be connected), $|\Vext| \in \Theta(\EG)$ and $|\solS| + |\solT| \in \Theta(\VG)$ holds, the runtime of~\ref{alg:decompose} is bounded by
\[
\mathcal{O} \left( |\VG|^2\cdot |\EG| \cdot (|\VG|+|\EG|) \right)
\]
and our claim follows.
\end{proof}

\section{A Multi-Commodity Flow Formulation}
\label{sec:mcf-formulation}

This section introduces a naive multi-commodity flow (MCF) formulation (see \ref{alg:MCF-MIP}) to solve A-CVSAP. The formulation \ref{alg:MCF-MIP} models the virtual arborescence searched for rather directly, as it uniquely determines virtual links and paths for active Steiner nodes. This explicit representation comes at the price of a substantially larger model. In Section~\ref{sec:results-comparison-mcf} we provide a computational comparison showing the superiority of our compact formulation \ref{alg:MIP}.

\subsection{Notation}

For ease of representation of \ref{alg:MCF-MIP} we use a modified extended graph, which does not contain a super source but a single super sink.
\begin{definition}[Extended Graph for \ref{alg:MCF-MIP}]
\label{def:extended-graph-mcf}
Given a directed network $G = (\VG,\EG,\cE,\uE)$ and a request $R_G=(r,S,T,\uR,\cS,\uS)$ as introduced in Section~\ref{sec:problem-definition}, we define the \emph{extended graph}  $\GextM = (\VextM,\EextM)$ for the \ref{alg:MCF-MIP} formulation as follows:
\begin{description}
\item[\quad \namedlabel{def:EXT-MCF1}{\textnormal{(EXT-1-MCF)}}] $\VextM \triangleq \VG \cup \{ \sSink \}~,$ 
\item[\quad \namedlabel{def:EXT-MCF2}{\textnormal{(EXT-2-MCF)}}] $\EextM \triangleq \EG \cup \{(r,\sSink)\} \cup \EextSmM	~,$
\end{description}
where $\EextSmM \triangleq S \times \{\sSink\}$. \hfill $\qed$
\end{definition}

As already introduced in Lemma~\ref{lem:limited-flow-conservation} we use the Kronecker-Delta $\delta_{\foM{x},\foM{y}} \in \{0,1\}$, where $\delta_{\foM{x},\foM{y}} = 1$ holds iff.~ $\foM{x}=\foM{y}$. 

Flow variables corresponding to different commodities are distinguished by superscripts and we use $\f^{\foM{x}}(\foM{Y})$ to denote $\sum_{y \in \foM{Y}} f^{\foM{x}}(\foM{y})$.

We denote the set of feasible solution for \ref{alg:MCF-MIP} by $\FeasibleMCF$.

\newcommand{\tagMCF}{\refstepcounter{MCFIPnumber}\tag{{MCF-\theMCFIPnumber}}}
\newcommand{\tagMCFStar}{\refstepcounter{MCFIPnumber}\tag{{MCF-$\textnormal{\theMCFIPnumber}^\star$}}}
\newcounter{MCFIPnumber}

\begin{algorithm}[h!]
\footnotesize
\SetAlgoRefName{MIP-A-CVSAP-MCF}
\SetAlgoCaptionSeparator{}
\SetAlgorithmName{Mixed Integer Program}{}{{}}

\newcommand{\spaceIt}{\qquad\quad\quad}
\newcommand{\miniSpace}{\hspace{1.5pt}}

\caption{}
\label{alg:MCF-MIP}
\BlankLine
\hspace{-30pt}
\begin{fleqn}[0pt]
\begin{alignat}{7}
\phantomsection	 \textnormal{minimize~}    && \quad \cIPM &\miniSpace=\miniSpace && \sum \limits_{e \in \EG} \cc_e (\f_e  + \sum \limits_{s \in S} \f_{s,e}) & \tag{MCF-OBJ} \label{IPM:obj} \\
										   && 			&                       && ~~~ + ~~\sum \limits_{s \in S}\cc_s \cdot \xS_s  \notag \\
\phantomsection   \textnormal{subject to}  && \quad \f^T(\deltaOut{\EextM}(v))  &\miniSpace=\miniSpace && \f^T(\deltaIn{\EextM}(v)) + |\{v\} \cap T| &  \forall~ v \in \VG \label{IPM:FlowConservationT} \tagMCF \\
\phantomsection	&& \quad \f^s(\deltaOut{\EextMS}(v))  &\miniSpace=\miniSpace && \f^s(\deltaIn{\EextMS}(v)) + \mathbf{\delta}_{s,v} \cdot \xS_s &	\forall~ s \in \sS, v \in \VG \label{IPM:FlowConservationS} \tagMCF \\
\phantomsection      &&  \f^T_e + \sum \limits_{s \in S} \f^s_e &\miniSpace\leq\miniSpace && 
\begin{cases}
\cu_s \xS_s ,~e = (s,\sSink), s \in \sS&\\
\cu_r ~~~~,~e = (r,\sSink) & \\
\cu_e ~~~~,~ e \in \EG &
\end{cases} & \forall e \in \EextM
  \label{IPM:CapAll} \tagMCF \\
 \phantomsection		 &&  -|S|(1-f^s_{\bar{s},\sSink}) &\miniSpace \leq \miniSpace && \priority_s - \priority_{\bar{s}} - 1 & \forall~ s,\bar{s} \in S \label{IPM:SetOrder} \tagMCF \\
\phantomsection		 &&  \f^s_{(\bar{s},\sSink)} &\miniSpace \leq \miniSpace && \xS_{\bar{s}} & \hspace{-24pt} \forall~ s \in S,\bar{s} \in S - s \label{IPM:IncomingFlowActivatesFlow} \tagMCFStar \\
\phantomsection		 &&  f^s_{s,\sSink} &\miniSpace = \miniSpace && 0 & \forall~ s \in S \label{IPM:DisableSendingFlowToOneself} \tagMCFStar \\
\phantomsection		 &&  f^s_{\bar{s},\sSink} + f^{\bar{s}}_{s,\sSink} &\miniSpace \leq \miniSpace && 1 & \forall~ s,\bar{s} \in S \label{IPM:SteinerPairsDoNotSendToEachOther} \tagMCFStar \\
\phantomsection		 &&  \xS_s &\miniSpace \in \miniSpace && \{0,1\} &\forall~ s \in S \label{IPM:VarX} \tagMCF \\
\phantomsection		 &&  \f^T_{e} &\miniSpace \in \miniSpace && \pint &\forall~ e \in \EextM \label{IPM:VarFT} \tagMCF \\
\phantomsection		 &&  \f^s_{e} &\miniSpace \in \miniSpace && \{0,1\} &\hspace{-24pt}\forall~ s \in \sS, e \in \EextM \label{IPM:VarFS} \tagMCF \\
\phantomsection		 &&  \priority &\miniSpace \in \miniSpace && [0,|\sS |-1] &\forall~ s \in S \label{IPM:VarP} \tagMCF 
\end{alignat}
\end{fleqn}
\end{algorithm}

\subsection{The MIP Model}

The formulation \ref{alg:MCF-MIP} uses one commodity for each Steiner site (see \ref{IPM:VarFS}) and a single commodity for the flow  originating at the terminals (see \ref{IPM:VarFT}).  Note that while $f^s$ defines a flow variable for each Steiner site $s \in \sS$ we use $f^T$ to denote a single commodity for all terminals. Furthermore note that flow variables $f^s$ corresponding to Steiner sites are binary whereas the aggregated flow variables $f^T$ from the terminals are defined to be integers.

We now briefly describe how a solution $(x,p,f^s,\f^T) \in \FeasibleMCF$ relates to a virtual arborescence $\solTree = (\solVtree, \solEtree, \hat{r}, \solfPath) \in \FeasibleACVSAP$. We naturally set $\solVtree \triangleq \{r\} \cup  \{s \in \sS| \solX_s \geq 1\} \cup \sT$ and $\hat{r} = r$. We continue by showing how $\solEtree$ and $\solfPath$ can be retrieved.

Constraints~\ref{IPM:FlowConservationT} and \ref{IPM:FlowConservationS}  specify flow preservation for the commodities such that terminal nodes emit one unit of flow in $f^T$ and activated Steiner nodes emit one unit of flow in $f^s$. Note that in Constraint~\ref{IPM:FlowConservationS} $\delta_{s,v}$ is a constant. As these constraints are specified for nodes $v \in \VG$, flows in $f^T$ and $f^s$ must terminate in $\sSink$ via edges in $\EextSmM$ or via $(r,\sSink)$.

If a Steiner node $s \in \sS$ is activated, $f^s$ defines a path $P^s$ from $s$ to $\sSink$. We therefore include $e=(s,P^s_{|P^s|-1})$ in $\solEtree$ and set $\solfPath(e)=\langle P^s_1, \dots, P^s_{|P^s|-1}\rangle$. As we use a single commodity for flow originating at the terminals, we have to first decompose $f^T$ into paths $\{P^t| t \in T\}$ such that $P^t$ originates at $t$ and terminates in $\sSink$. Due to the single destination, this can always be done using the standard $s-t$ flow decomposition~\cite{ahujaNetworkFlows}.

As the capacity constraints \ref{IPM:CapAll} are defined analogously to \ref{IP:CapSteiner}-\ref{IP:CapEdge}, we only need to establish the validity of connectivity condition \ref{def:VA2} to show that $\solTree \in \FeasibleACVSAP$ holds. As terminals and active Steiner nodes must be connected as discussed above, \ref{def:VA2} may only be violated by $\solTree$ if a cycle exists in $\solEtree$. To forbid such cycles, we adapt the well-known Miller-Tucker-Zemlim (MTZ) constraints~\cite{costa2009models} using continuous priority variables $\priority_s \in [0,|\sS|-1]$  in \ref{IPM:SetOrder}. The MTZ constraint \ref{IPM:SetOrder} enforces $f^s(\bar{s},\sSink) = 1 \Rightarrow p_s \geq p_{\bar{s}} + 1$, forbidding cyclic assignments containing only Steiner nodes. As terminals may not receive flow and the root may not send flow, this suffices to forbid cycles in $\solEtree$ overall and thus $\solTree \in \FeasibleACVSAP$ holds.

As formulations relying on MTZ constraints are comparatively weak~\cite{polzin2001comparison}, we introduce additional valid inequalities \ref{IPM:IncomingFlowActivatesFlow}, \ref{IPM:DisableSendingFlowToOneself} and \ref{IPM:SteinerPairsDoNotSendToEachOther} to strengthen the formulation. Constraint~\ref{IPM:DisableSendingFlowToOneself} disallows Steiner node $s \in \sS$ to absorb its own flow and \ref{IPM:SteinerPairsDoNotSendToEachOther} explicitly forbids cycles of length 2. Lastly, Constraint~\ref{IPM:IncomingFlowActivatesFlow} forces Steiner nodes receiving flow from another Steiner node to be activated. 

\section{Branch-and-Cut Solver}
\label{sec:implementation}

We have implemented VirtuCast based on \ref{alg:MIP} and \ref{alg:decompose}, which can be obtained from~\cite{rostSchmidWeb}. Our solver uses SCIP~\cite{achterberg2009scip} as underlying Branch-and-Cut framework with SoPlex~\cite{Wunderling1996} as LP solver. In Section~\ref{sec:separation} we shortly discuss our implementation of the separation procedures for Constraints \ref{IP:CutSteiner} and \ref{IP:CutTerminal}. Afterwards we present in Section~\ref{sec:heuristic} a primal heuristic to generate feasible solutions during the branch-and-bound search.

\subsection{Separation} 
\label{sec:separation}
Our solver generally follows the comprehensive work by Koch et al.~\cite{koch1998solving} and we assume the reader's familiarity with separation procedures (see e.g.~\cite{schrijver1998theory}). As the separation techniques used are well-known, we only sketch the most important features.

Instead of using a sophisticated maximal flow algorithm as~\cite{koch1998solving} proposes, we implemented the algorithm of Edmonds and Karp (see e.g.~\cite{ahujaNetworkFlows}). As choosing this simple algorithm only allows for constructing $s-t$ flows, we perform a single maximal flow computation for each $s\in \sS$ when separating connectivity inequalities \ref{IP:CutSteiner} and analogously perform $|T|$ many maximal flow computations when the valid inequalities of \ref{IP:CutTerminal} are to be separated. To improve performance for executing the maximal flow computations at each node, we use multithreading to speed up the computation.

Furthermore, we have implemented techniques that (empirically) \emph{improve}  the \emph{quality} of found violated inequalities for \ref{IP:CutSteiner} or \ref{IP:CutTerminal}. Following~\cite{koch1998solving} we implemented \emph{creep-flow} and \emph{nested-cuts}. We opted not to implement \emph{back cuts}, as in our formulation of \ref{IP:CutSteiner} violated node sets with respect to a given Steiner site $s\in \sS$ would probably be violated for other Steiner sites too. Adding back cuts for each of the violated node sets with respect to many $s \in \sS$ would in turn probably lead to many redundant constraints.

\subsection{Primal Heuristic} 
\label{sec:heuristic}
In Section~\ref{sec:results} we will show that the dual bound using the formulation \ref{alg:MIP} comes close to the optimal value within minutes of execution. Even though the branch-and-cut framework SCIP implements many primal heuristics~\cite{achterberg2009scip}, the heuristics of SCIP found effective for CVSAP are generally computationally expensive as they e.g. perform dive operations in the branch-and-bound tree (see Section~\ref{sec:results-main} for a discussion). Additionally, some of the heuristics implemented in SCIP, which are based on local search, already need a feasible solution as input. We therefore devloped Algorithm~\ref{alg:FlowDecoRound} for generating feasible solutions during the branch-and-bound process.

\subsubsection{Notational Remark.} In Algorithm~\ref{alg:FlowDecoRound}, we make use of the function \texttt{FlowDecomposition} with parameters $(G,f,v,t,D)$ to calculate a flow decomposition in graph $G$ from $t\in \VG$ to one of the nodes in $D\subseteq \VG$. The flow is given by $f:\EG \to \mathbb{R}_{\geq 0}$ and $v\in \mathbb{R}_{>0}$ specifies the amount of flow to decompose. The result of this function is a set of paths $\{(P_i,f_i)\} \in \mathcal{P}_G \times \mathbb{R}_{>0}$ with a value specifying the amount of flow carried by it, such that all paths start in $t$, terminate at one of the nodes of $D$, the sum of the carried flow amounts to $v$ and the sum of carried flow on each edge does not exceed the original flow $f$ on any edge.
We furthermore use the function \texttt{ShortestPath}$(G,c,t,D)$ to calculate the shortest paths in graph $G$ from $t\in \VG$ to one of the nodes in $D\subseteq \VG$ with respect to the partial cost function $c: \EG \to \mathbb{R}_{\geq 0}$. An edge for which no cost is specified, is assumed to be of zero cost.

\begin{algorithm}[hp]
\SetAlgoVlined

\newcommand{\FORALL}{\textbf{for all~}}
\newcommand{\FIND}{\textbf{find~}}
\newcommand{\IF}{\textbf{if~}}
\newcommand{\THEN}{\textbf{~then~}}
\newcommand{\STS}{\textbf{~s.t.~}}
\newcommand{\nL}{\nl}

\LinesNotNumbered
\DontPrintSemicolon
\SetAlgoRefName{FlowDecoRound}
\SetAlgoCaptionSeparator{}
\SetAlgorithmName{Algorithm}{}{{}}
\SetKwInOut{Input}{Input}\SetKwInOut{Output}{Output}
\SetKwFunction{FlowDecomposition}{FlowDecomposition}
\SetKwFunction{PruneSteinerNodes}{PruneSolution}
\SetKwFunction{ShortestPath}{ShortestPath}

\Input{Network $\G = (\VG,\EG,\cE,\uE)$, Request $R_\G = (r,\sS,\sT,\uR,\cS,\uS)$, \\~LP relaxation solution $(\solX, \solf)\in \FeasibleLP$ to~\ref{alg:MIP}}
\Output{Potentially a feasible Virtual Arborescence $\solTree$ for CVSAP}
\SetSideCommentRight
    \BlankLine
\nL    \SET $\solS \triangleq \emptyset$ \AND $\solT \triangleq \emptyset$ \AND $U = \sT$\;
\nL	\SET $\solVtree \triangleq \{r\},~ \solEtree \triangleq \emptyset$ \AND $\solfPath: \solEtree \to \mathcal{P}_{\Gext}$ \;
\nL	\SET $u(e) \triangleq 	\begin{cases}	
								\uE(e) & \textnormal{, if } e \in \EG \\
								\uR(r) & \textnormal{, if } e = (r,\sSinkR) \\
								\uS(s) & \textnormal{, if } e = (s,\sSinkS) \in \EextSm \\
								1	   & \textnormal{, else}
							\end{cases}$ ~~ \FORALL $e \in \Eext $\;
\phantomsection \nL	\While{$ U \neq \emptyset$}{ \label{alg:flowdeco:beginPhase1}
\nL		\CHOOSE $t \in U$ uniformly at random \AND \SET $U \gets U - t$\;
\nL		\SET $\Gamma_t \triangleq \FlowDecomposition \left(\Gext, \solf, \solf(\sSource,t),t,\{\sSinkS, \sSinkR\} \right)$\;
\nL		\SET	 $\solf \gets \solf - \sum \limits_{(P,f) \in \Gamma_t, e \in P} f$\;
\nL		\SET $\Gamma_t \gets \Gamma_t \setminus \{ (P,f) \in \Gamma_t | \exists e \in P. u(e) = 0\}$\;
\nL		\SET $\Gamma_t \gets \Gamma_t \setminus \{ (P,f) \in \Gamma_t |  (\solVtree + t, \solEtree + (t,P_{|P|-1})) \textnormal{~is not acyclic~} \}$\;
\nL		\If{$\Gamma_t \neq \emptyset$}{
\nL			\CHOOSE $(P,f) \in \Gamma_t$ with probability $f/\left(\sum_{(P_j,f_j) \in \Gamma_t} f_j \right)$\;
\nL			\If{$P_{|P|-1} \notin \solVtree$}{
\nL				\SET $U \gets U + P_{|P|-1}$ \AND $\solVtree \gets \solVtree + P_{|P|-1}$\;
			}
\nL			\SET $\solVtree \gets \solVtree + t$ \AND $\solEtree \gets \solEtree + (t,P_{|P|-1})$ \AND $\solfPath(t,P_{|P|-1}) \triangleq P$\;
\nL			\SET $u(e) \gets u(e) - 1$ \FORALL $e \in P$\;
		}
	\phantomsection \label{alg:flowdeco:endPhase1}
	}
\nL	\SET $u(e) \gets 0$ \FORALL $e = (s,\sSinkS) \in \EextSm \textnormal{~\textbf{with}~} s \in \sS \wedge s \notin \solVtree $\;
\nL \SET	 $\unT \triangleq (\sT \setminus \solVtree) \cup (\{s \in \sS \cap \solVtree | \delta^+_{\solEtree}(s) = 0\})$\;
\nL	\phantomsection\For{$t \in \unT$}{  \label{alg:flowdeco:beginPhase2}
\nL		\CHOOSE $P \gets \ShortestPath \left( \Gext^{u},\cE,t, \{\sSinkS,\sSinkR\} \right)$\;
		\qquad \ST\ $(\solVtree + t, \solEtree + (t,P_{|P|-1}))$ is acyclic\;
\nL		\If{$P = \emptyset$}{
\nL			\KwRet{\textnormal{\texttt{null}}}	\;
		}
\nL		\SET $\solVtree \gets \solVtree + t$ \AND $\solEtree \gets \solEtree + (t,P_{|P|-1})$ \AND $\solfPath(t,P_{|P|-1}) \triangleq P$\;
\nL		\SET $u(e) \gets u(e) - 1$ \FORALL $e \in P$\;
\phantomsection 	 \label{alg:flowdeco:endPhase2}
	}
\nL	\For{$e \in \solEtree$}{
\nL		\SET $P \triangleq \solfPath(e)$\;
\nL		\SET $\solfPath(e) \gets \langle P_1, \dots, P_{|P|-1} \rangle $ \;
	}
\nL \SET $\solTree \triangleq \textnormal{Virtual Arborescence~} (\solVtree,\solEtree,r,\solfPath)$\;
\nL	\KwRet{$\PruneSteinerNodes{\ensuremath{\solTree}}$}\;
\caption{}
\label{alg:FlowDecoRound}
\end{algorithm}

\newpage 
\subsubsection{Synopsis of Primal Heuristic FlowDecoRound.}

Our primal heuristic \ref{alg:FlowDecoRound} uses the LP relaxation at the current node in the branch-and-bound tree as input. We denote by $\FeasibleLP$ the set of feasible solutions to \ref{alg:MIP} where the integrality restrictions on the flow (see \ref{IP:VarF}) and the decision variables for activating Steiner sites (see \ref{IP:VarX}) are relaxed to $\f_e \in \mathbb{R}_{\geq 0}$ and $\xS_s \in [0,1]$ for all $e \in \Eext$ and $s \in \sS$ respectively. Our heuristic works in the following three phases:
\begin{enumerate}
\item	In the first phase (see Lines~\ref{alg:flowdeco:beginPhase1} to \ref{alg:flowdeco:endPhase1}) for each terminal a flow decomposition is perfomed based on the flow values $\solf$ of the current LP solution $(\solX, \solf)\in \FeasibleLP$ such that the path may either terminate in $\sSinkS$ or $\sSinkR$. The flow decomposition returns a set of paths paired with an amount of flow carried by them. After discarding paths for which no capacity is left and paths that would lead to a cycle in the solution, one of the remaining paths is chosen uniformly according to the flow amount carried by it. If the path leads to an (inactive) Steiner site, then the aggregation node is opened and becomes itself a terminal to be connected during the first phase. If none of the paths returned by the flow decomposition is feasible, the terminal is not connected.
\item	In the second phase (see Lines~\ref{alg:flowdeco:beginPhase2} to \ref{alg:flowdeco:endPhase2}) the terminals (including Steiner nodes) that are still disconnected are connected using shortest paths under the restriction that these paths may not yield a cycle in the solution.
\item	If all terminals (including Steiner nodes) have been connected in the second phase, then a feasible solution has been constructed. Since in the first phase any Steiner node is activated if a path to it was selected, we try to reduce the cost of the solution by removing activated Steiner nodes from the solution in the third phase. This procedure is shown in Algorithm~\ref{alg:pruneSteinerNodes}. 
\end{enumerate}

\subsubsection{Synopsis of Local Search Algorithm PruneSteinerNodes.}
The activated Steiner nodes are put in the set $O$ (see Line~\ref{alg:prune:selectActiveSteinerNodes}). According to the ratio of cost for installing it divided by the number of nodes connected to it, the Steiner node maximizing this ratio is selected (see Line~\ref{alg:prune:chooseNode}). Together with all its incoming and outgoing edges, it is removed from the solution, yielding a (temporarily infeasible) solution $(\solVtree', \solEtree',r,\solfPath')$ and the remaining capacity $u': \EextM \to \mathbb{Z}$ (see Lines~\ref{alg:prune:beginConstructNew} to \ref{alg:prune:endConstructNew}). Thus, the objective value is decreased, giving an budget $b$ for reconnecting the disconnected nodes (see Line~\ref{alg:prune:budget}). Reconnecting the disconnected nodes is again done using shortest paths under the constraint that no cycles may be introduced to the solution (see Lines~\ref{alg:prune:beginReconnect} to \ref{alg:prune:endReconnect}). If a node cannot be connected or using the shortest path would exceed the budget (see Line~\ref{alg:prune:ifAbortCurrentNode}), the algorithm selects another activated Steiner node, if possible (see Line~\ref{alg:prune:doAbortCurrentNode}). If however all nodes could be reconnected and the budget was not exceeded, then a cheaper virtual arborescence  has been found and the process is restarted with all opened aggregation nodes (see Lines~\ref{alg:prune:acceptNewSolution} and ~\ref{alg:prune:resetActiveSteinerNodes}).

\subsubsection{General Remarks.}
The idea to use a flow decomposition to generate a set of possible paths and afterwards selecting one of the paths at random (according to the carried flow) was first proposed by Raghavan and Thomposon~\cite{Raghavan1985}. Our incentive to apply this scheme to CVSAP is twofold. Firstly, using this scheme only paths will be selected which have been (partially) accounted for in the objective of the LP relaxation. Secondly, it allows for an easy mechanism for deciding which Steiner nodes to activate. A Steiner node with much incoming flow is more likely to be opened than some Steiner node absorbing few flow. Furthermore we thereby circumvent the problem of deciding a priori which Steiner nodes should be activated or not. Section~\ref{sec:results-main} contains an evaluation of the performance of \ref{alg:FlowDecoRound}.

\begin{algorithm}[h!]

\newcommand{\FORALL}{\textbf{for all~}}
\newcommand{\FIND}{\textbf{find~}}
\newcommand{\IF}{\textbf{if~}}
\newcommand{\THEN}{\textbf{~then~}}
\newcommand{\STS}{\textbf{~s.t.~}}
\newcommand{\nL}{\nl}

\SetAlgoVlined
\LinesNotNumbered
\DontPrintSemicolon
\SetAlgoRefName{PruneSteinerNodes}
\SetAlgoCaptionSeparator{}
\SetAlgorithmName{Algorithm}{}{{}}
\SetKwInOut{Input}{Input}\SetKwInOut{Output}{Output}
\SetKwComment{doc}{//}{}

\Input{Network $\G = (\VG,\EG,\cE,\uE)$, Request $R_\G = (r,\sS,\sT,\uR,\cS,\uS)$, \\~Solution $\solTree \in \FeasibleACVSAP$ for A-CVSAP}
\Output{Feasible Virtual Arborescence $\solTree' \in \FeasibleACVSAP$ with $\cCVSAP(\solTree') \leq \cCVSAP(\solTree)$}
\SetSideCommentRight
    \BlankLine
\nl \phantomsection	\SET $O \triangleq \sS \cap \solVtree$ \label{alg:prune:selectActiveSteinerNodes} \;
\nl	\While{$O \neq \emptyset $}{ 	\label{alg:prune:beginWhile}
\nl	\phantomsection	\CHOOSE $s \in O $ \textbf{maximizing~} $\cS(s) / |\delta^{-}_{\solEtree}(s)|$ \label{alg:prune:chooseNode} \;
\nl		\SET $O \gets O - s$ \label{alg:prune:remove-s}\;
\nl		\SET $U \triangleq \{ t | (t,s) \in \delta^{-}_{\solEtree}(s) \}$ \;
\nl		\SET $R \triangleq \delta^{+}_{\solEtree}(s) \cup \delta^{-}_{\solEtree}(s)$ \;
\nl		\SET $\mathcal{P}_R \triangleq \{\solfPath(e) | e \in R \}$ \;
\nl	\phantomsection	\SET $b \triangleq \cS(s) + \sum_{P \in \mathcal{P}_R} \cE(P)$ \label{alg:prune:budget}\;
\nl \phantomsection		\SET $\solVtree' \triangleq \solVtree \setminus (U \cup \{s\})$ \label{alg:prune:beginConstructNew} \;
\nl		\SET $\solEtree' \triangleq \solEtree \setminus (\delta^{-}_{\solEtree}(s) \cup \delta^{+}_{\solEtree}(s))$ \AND $\solfPath': \solEtree' \to \mathcal{P}_G$ \;
		\qquad \ST 	$\solfPath'(e) = \solfPath(e)$ \FORALL $e \in \solEtree'$ \;
\nl		\SET $u'(e) \triangleq 	\begin{cases}	
									\uE(e) - |\solfPath(\solEtree')[e]| & \textnormal{, if } e \in \EG \\
									\uR(r) - |\delta^{-}_{\solEtree'}(r)| & \textnormal{, if } e = (r,\sSinkR) \\
									\uS(s) - |\delta^{-}_{\solEtree'}(s')| & \textnormal{, if } e = (s',\sSinkS) \in \EextSm \\
									1	   & \textnormal{, else}
								\end{cases}$ ~~ \FORALL $e \in \Eext $\;
\nl	\phantomsection	\SET $u'(s,\sSinkS) \gets 0$ \label{alg:prune:endConstructNew}\;
\nl	\phantomsection	\For{$e=(t,s) \in \delta^{-}_{\solEtree}(s)$}{ \label{alg:prune:beginReconnect}
\nl			\CHOOSE $P \triangleq \ShortestPath \left( \Gext^{u'},\cE, t,\{\sSinkS,\sSinkR\} \right)$\;
			\qquad \ST\ $(\solVtree' + t, \solEtree' + (t,P_{|P|-1}))$ is acyclic\;
\nl	\phantomsection		\If{$P = \emptyset \vee b - \cE(P) \leq 0$ \label{alg:prune:ifAbortCurrentNode}}{
\nl	\phantomsection			\textbf{goto~}\ref{alg:prune:beginWhile} \label{alg:prune:doAbortCurrentNode}\;
			}
\nl			\SET $b \gets b - \cE(P)$\;
\nl			\SET $\solVtree' \gets \solVtree' + t$\;
\nl			\SET $\solEtree \gets \solEtree + (t,P_{|P|-1})$ \AND $\solfPath(t,P_{|P|-1}) \triangleq \langle P_1, \dots, P_{|P|-1} \rangle $\;
\nl	\phantomsection		\SET $u(e) \gets u(e) - 1$ \FORALL $e \in P$\;				
		
\label{alg:prune:endReconnect}		}
\nl	\phantomsection	\SET $\solTree \gets $ Virtual Arborescence $(\solVtree', \solEtree', r, \solfPath')$ \label{alg:prune:acceptNewSolution}\;
\nl \phantomsection			\SET $O \gets S \cap \solVtree$ \label{alg:prune:resetActiveSteinerNodes}\;
	}
\nl	\KwRet{$\solTree$}
	
\caption{}
\label{alg:pruneSteinerNodes}
\end{algorithm}

\section{Computational Evaluation}
\label{sec:results}

We investigate the applicability of VirtuCast with an empirical computational study on different problem instances using the solver that was  introduced in Section~\ref{sec:implementation}. We consider two different classes of instances, one being based on grid graphs and the other one being based on ISP topologies, see Section~\ref{sec:results-problem-classes}. In Section~\ref{sec:resuls-parameter-validation} we validate the choice of including the directed cut constraints \ref{IP:CutTerminal} as well as separation related optimizations in our implementation. While the main results considering our implementation are presented in Section~\ref{sec:results-main}, we present a computational comparison with the multi-commodity flow formulation \ref{alg:MCF-MIP} in Section~\ref{sec:results-comparison-mcf}. We conclude our computational study in Section~\ref{sec:results-analysis} with analyzing the runtime allocated by the different components of our solver to devise possible optimizations. Note that all problem instances used for our evaluation are available together with our solver from~\cite{rostSchmidWeb}.

\subsubsection{Technical Notes.}
All our experiments were conducted on machines equipped with an 8-core Intel Xeon L5420 processor running at 2.5 Ghz and 16 GB RAM. As we use $25$ instances for each problem class, we mainly use box plots to present our results. Note that all boxplots presented in this section use the standard $1.5*\mathit{IQR}$ whiskers of \texttt{R} (and not the 95th and 5th percentiles).

\subsection{Problem Classes} 
\label{sec:results-problem-classes}
We use two classes of problems for our experimal evaluation. One class is based on $n \times n$ grid graphs while the other is based on router-level  Internet topologies~\cite{igen2009}. Based on the inherent symmetry and supported by our computational results, problem instances based on grid graphs present hard instances that already for $n=20$ cannot be solved to optimality within reasonable time. On the other hand, we use Internet topologies to show the applicability to solve realistically sized instances close to optimality. 

\subsubsection{Grid Graphs.}
All $n \times n$ grid instances were generated according to the following parameters. Edge capacities are set to $3$ while the capacity of the root and Steiner sites is set to $5$. We use unit costs on edges and a cost of $20$ for activating Steiner nodes. The locations of terminals, Steiner sites and the root are chosen in a uniformly distributed fashion, such that $|\sS| \propto 20\% |\VG|$ and $|\sT| \propto 25 \% |\VG|$ holds. We consider problems based on $n \times n$ grids for $n=12,16,20$ and generated $25$ instances for each of these sizes. Table~\ref{tab:grid-sizes} summarizes the resulting number of graph sizes and the number of generated Steiner sites and terminals.

\begin{table}[b!]
\begin{center}
\begin{tabular}{|c|c|c|c|c|}
\hline		 	$n$   &	  $|V|$ 	 &	 $|E|$ 	 & 	$|\sS|$ & $|\sT|$ 	\\
\hline 			12	  &		144	 & 	528		&	29		& 36 	\\
\hline			16	  & 	 	256	 & 	960		& 	51		&	64	\\
\hline			~~~~20~~~~	  & 		~~~~400~~~~	 & 	~~~~1520~~~~		&	~~~~80~~~~		& 	~~~~100~~~~ \\
\hline
\end{tabular}
\caption{Size of graph, number of Steiner sites $|\sS|$ and terminals $|\sT|$ for grid instances.}
\label{tab:grid-sizes}
\end{center}
\vspace{-8pt}
\end{table}

\subsubsection{Internet Topologies.}

We have used the tool IGen~\cite{igen2009} to generate two Internet alike topologies, one having $1600$ and the other having $3200$ nodes. Nodes are distributed uniformly on a world map. Topologies are created by clustering nodes in a $20$ (horizontal) by $6$ (vertical) grid. Each of these clusters can be understood as a single autonomous system (AS). Within each AS a certain number of nodes are selected to become Points-of-Presences (PoP). 

\begin{figure}[t]
\centering
\includegraphics[width=0.9\textwidth]{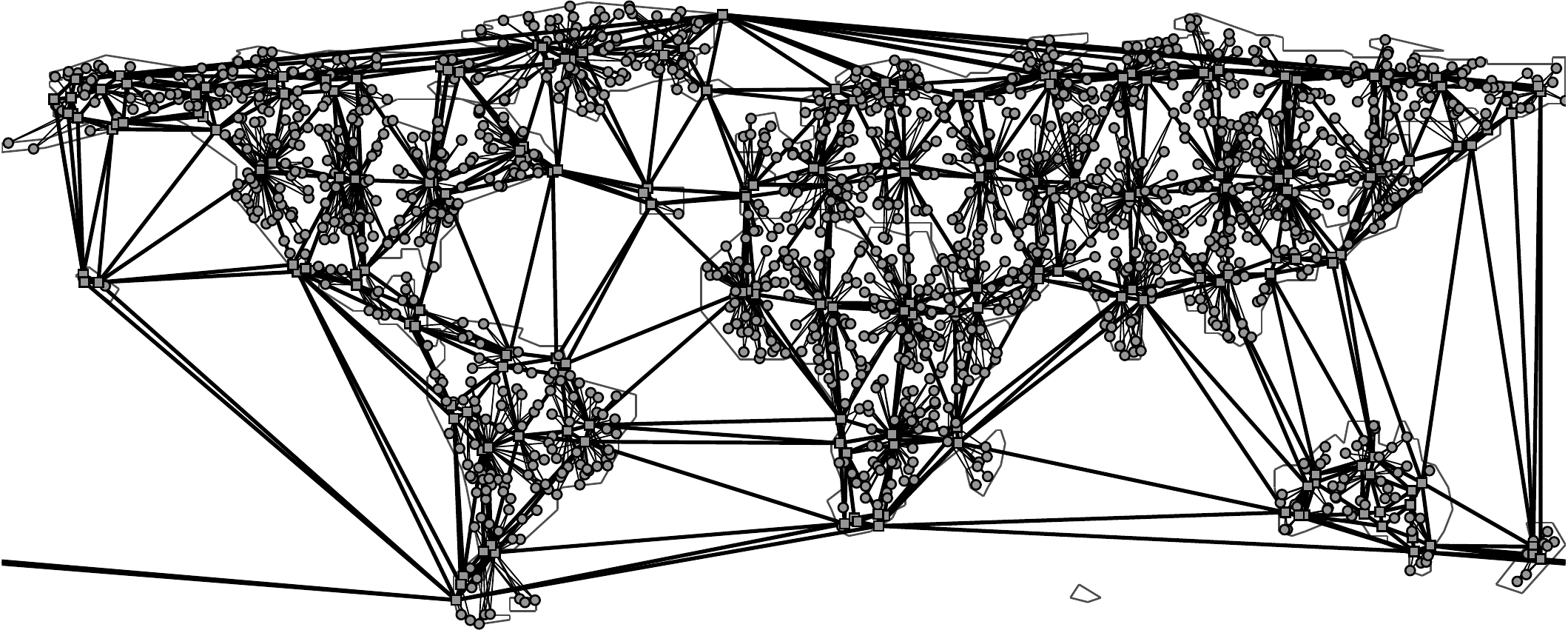}
\caption{Topology generated by IGen with $1600$ nodes. Nodes depicted as squares are PoPs while node depicted as disks are internal nodes.}
\label{fig:igen-topology-1600}
\end{figure}

While nodes within an AS are solely connected to a certain number of internal PoPs\footnote{using the sprint heuristic, see~\cite{igen2009} \label{igen:foot:sprint}}, PoPs are interconnected to provide global connectivity\footnote{using a delaunay triangulation, see~\cite{igen2009} \label{igen:foot:delaunay}}. In Figure \ref{fig:igen-topology-1600} the generated instance with $1600$ nodes is depicted. Table~\ref{tab:igen-param} gives an overview over the characteristics for the two topologies. Note that for IGen.3200 the number of PoPs per cluster as well as the number of links between internal nodes and PoPs has been increased.
For each of the topologies we have again generated 25 different instances according to the following parameters.  Steiner sites are only located at PoPs and Terminals may not be PoPs. The cost of using edges is given by the euclidean distance. Inter-PoP links have a capacity of $10$ while intra-AS links have a capacity of $2$. Activation costs for Steiner sites are chosen according to  $\mu(\cE)\cdot\mathcal{U}(25,75)$, where $\mu(\cE)$ denotes the average edge length. Steiner sites as well as the root have a capacity of $5$.

\vspace{-10pt}
 
\begin{table}[h!]
\begin{center}
\begin{tabular}{|c|c|c|c|c|c|c|c|}
\hline	Name 					& 	$|V|$		& 	$|E|$ 			& 	~~$|P|$~~ 	& 	~~$|I\to P|$~~ 	& ~~$|P \to P|$~~ 	& 	$|\sS|$ 		& $|\sT|$ \\
\hline 	~~\textbf{IGen.1600}~~ 	& 	~~$1600$~~ 	&	~~~~$6816$~~~~	&	~~$3$~~ 		& 	~~$2$~~			& 	~~$2$~~			& 	~~$200$~~	& ~~$300$~~	\\
\hline	~~\textbf{IGen.3200}~~	& 	$3200$		&	$19410$			& 	$4$			& 	$3$				&	$2$				&	$400$		& $600$	\\
\hline
\end{tabular}
\end{center}
\caption{Parameters for instances IGen.1600 and IGen.3200, with $|V|$ many nodes and $|E|$ edges. $|P|$ denotes the number of PoPs per cluster. $|I\to P|$ denotes the number of adjacent PoP nodes for internal nodes \textsuperscript{\ref{igen:foot:sprint}}. $|P \to P|$ denotes the number of peers\textsuperscript{\ref{igen:foot:delaunay}}.}
\label{tab:igen-param}
\end{table}

\subsection{Validation of Implementation Choices}
\label{sec:resuls-parameter-validation}
As shown in Section~\ref{sec:results-main} instances based on grid graphs are even for smaller instances hard to solve. As we believe these instances to be hard for CVSAP due to their inherent symmetry (see~\cite{rosseti2001new} for a discussion of hypercube graphs as hard instances for STP), we validate the choice of implementation parameters on this problem class. For this purpose we have chosen instances with $n=12$ as they are still throughout solvable to optimality within reasonable time.

Following the generation parameters as described in Section~\ref{sec:results-problem-classes}, our $12 \times 12$ grid instances contain $29$ possible Steiner sites and $36$ terminals that need to be connected. We consider the following four parameter settings to evaluate whether optimizations for the separation procedure \textbf{(S)} and the separation of the directed Steiner cuts \ref{IP:CutTerminal} for terminals \textbf{(T)} improve performance. 
\begin{center}
\begin{tabularx}{\textwidth}{cX}
\textbf{(T\&S)} 	& 	Constraints \ref{IP:CutTerminal} are separated and nested-cuts and creep-flow are used. \\
\textbf{(T)}	 	& 	Constraints \ref{IP:CutTerminal} are separated but neither nested-cuts nor creep-flow are used. \\
\textbf{(S)} 		& 	Constraints \ref{IP:CutTerminal} are not separated but nested-cuts and creep-flow are used. \\
\textbf{(-)} 		& 	Neither constraints \ref{IP:CutTerminal} are separated, nor are nested-cuts or creep-flow used.\\
\end{tabularx}
\end{center}

\begin{figure}[p]
\begin{subfigure}[t]{0.45\textwidth}
\includegraphics[width=1\textwidth]{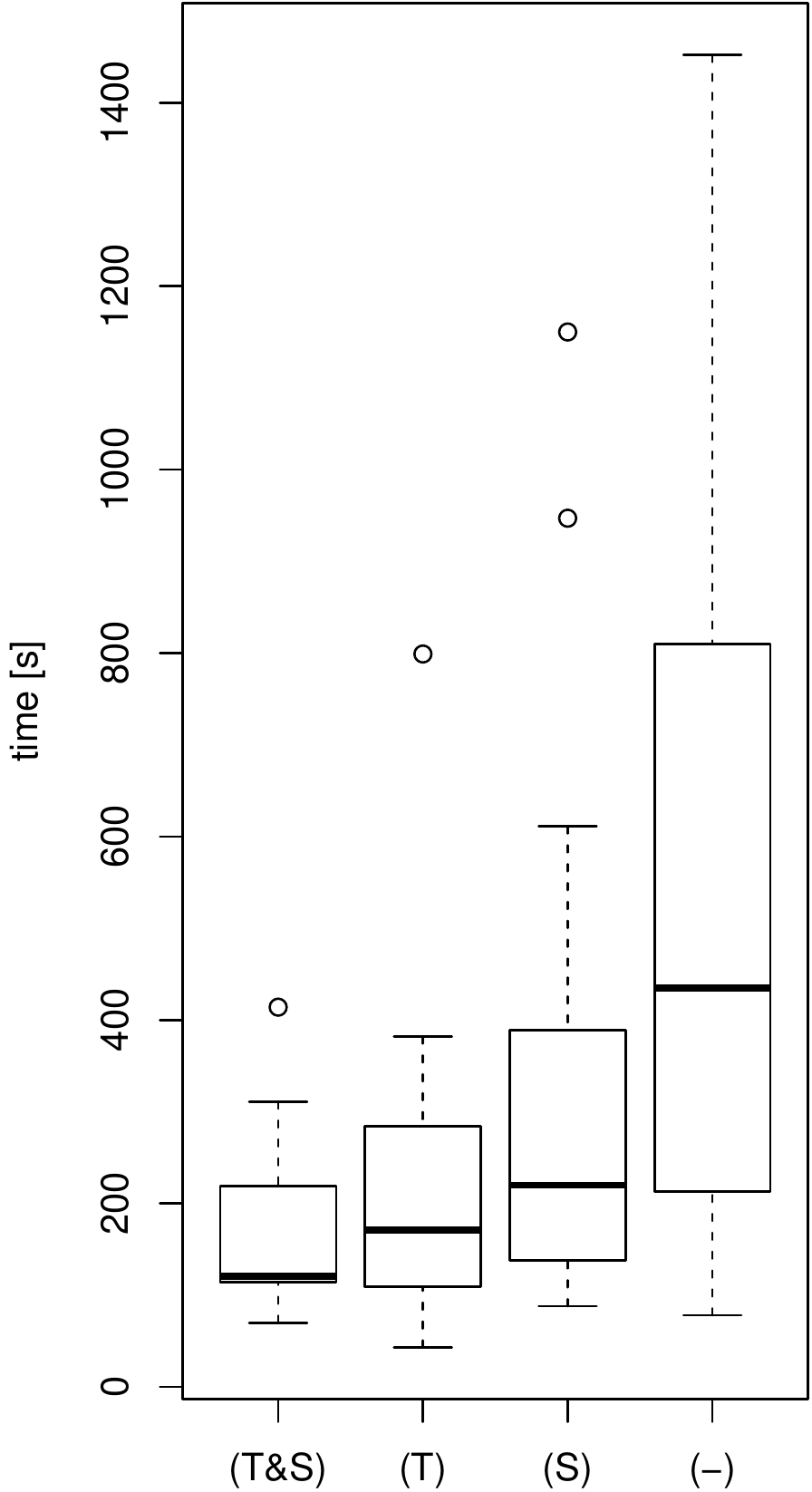}
\label{figure:grid-12x12-boxplots-runtime-comparison}
\end{subfigure}
\qquad
\begin{subfigure}[t]{0.45\textwidth}
\includegraphics[width=1\textwidth]{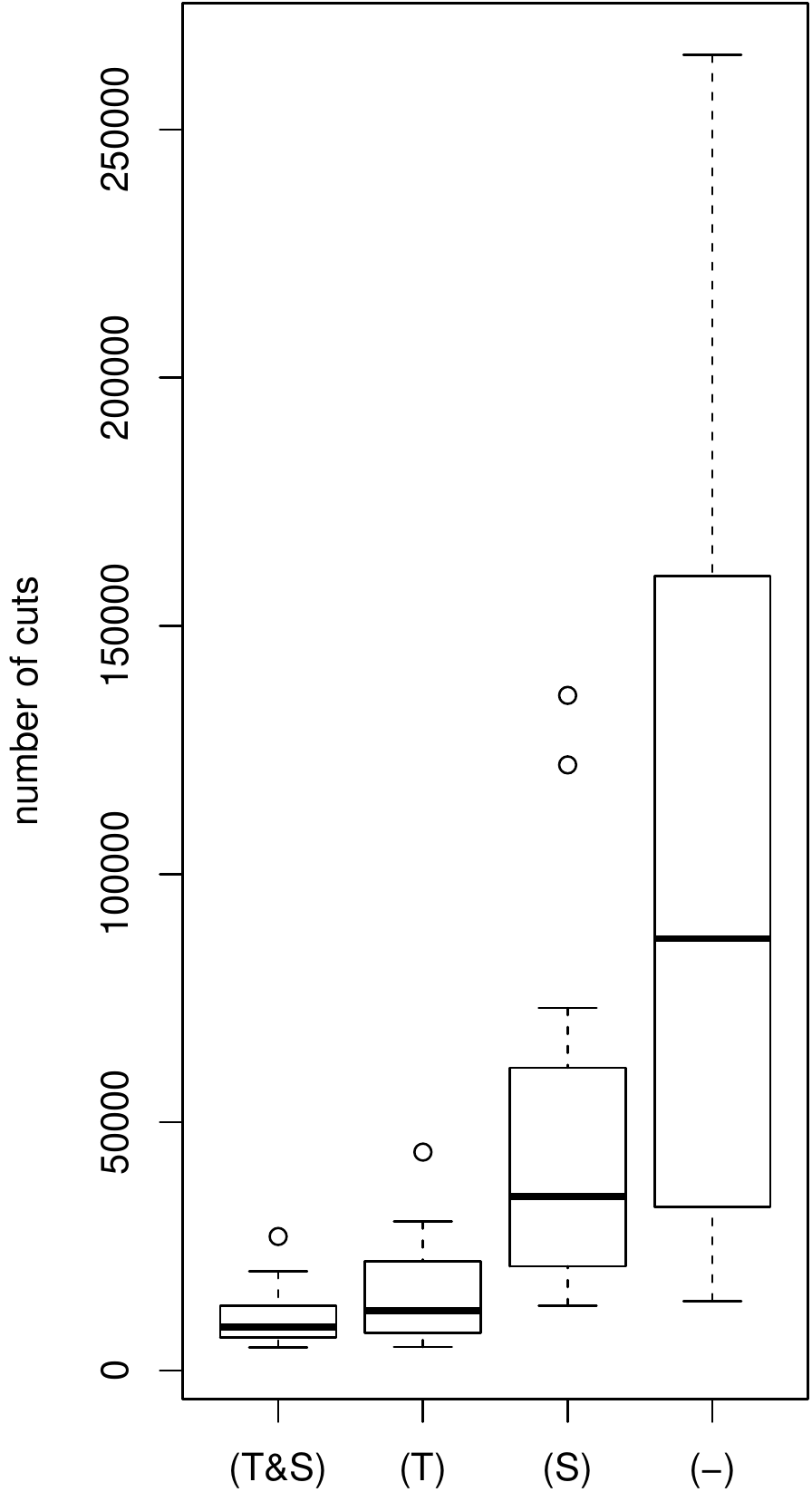}
\label{figure:grid-12x12-boxplots-cut-comparison}
\end{subfigure}
\caption{Runtimes in seconds and number of generated cuts for the $25$ runs on $12\times 12$ grids using different parameters.}
\label{fig:12-boxplot-runtimes-cuts}
\end{figure}
Figure~\ref{fig:12-boxplot-runtimes-cuts} plots the total runtime as well as the number of cuts introduced, for solving each of the $25$ instances to optimality.  Note that while \textbf{(T\&S)} provides the best runtime performance, the runtime distribution of \textbf{(S)} comes close to that of \textbf{(T)}. However, the number of cuts generated by \textbf{(S)} is substantially higher. This indicates, that even though the separation of \ref{IP:CutTerminal} comes at computational expenses, the found cuts of \textbf{(T)} are of high quality and drastically reduce the number of overall neded cuts. Clearly, \textbf{(-)} provides the worst performance.

\begin{figure}[p]
\includegraphics[width=1\textwidth]{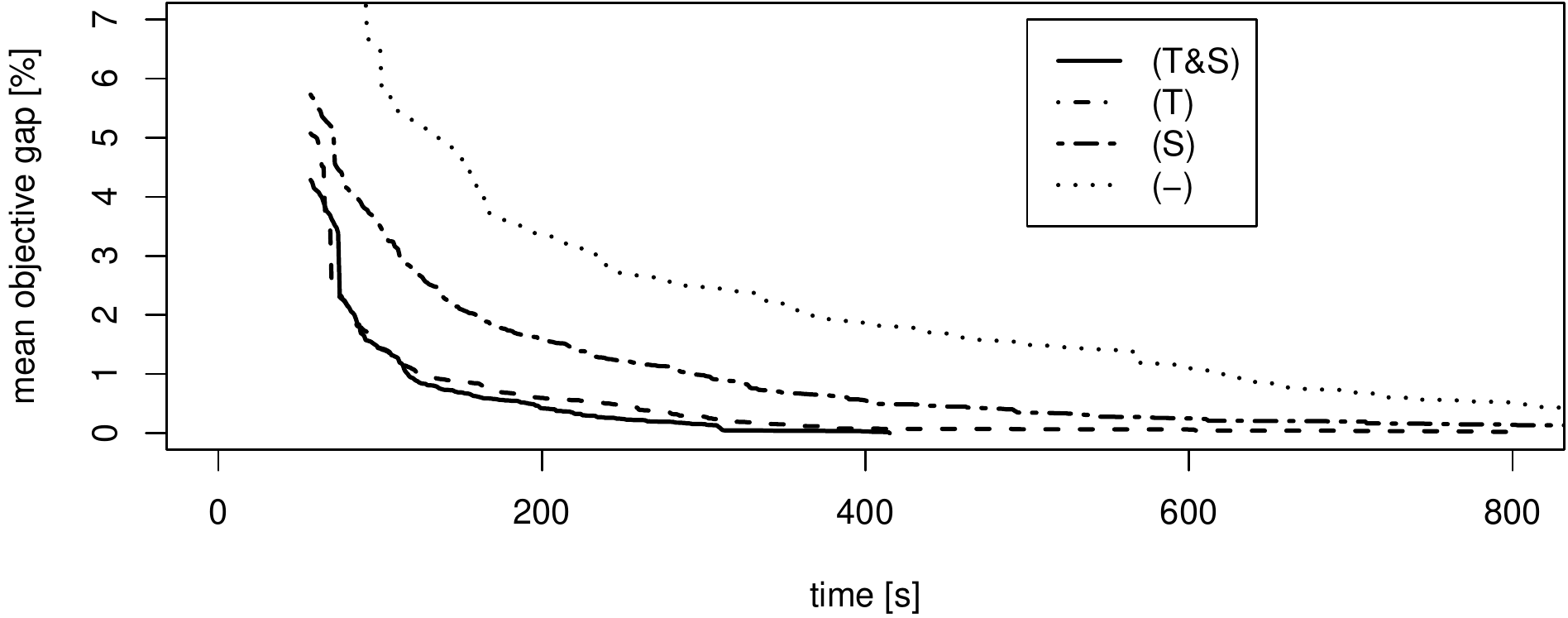}
\caption{The mean of the objective gap for the $25$ instances using the parameter settings \textbf{(T\&S)}, \textbf{(T)}, \textbf{(S)} and \textbf{(-)}.}
\label{fig:12-lines-gap-comparison}
\end{figure}

However, as one might favor an approach that generates \emph{provably} good solutions of high quality rather quickly but overall takes longer to solve a problem to optimality, we investigate the objective gap over time.  The objective gap is formally defined as $|P-D|/|D|$ where $P$ denotes the value of the best (primal) solution found and $D$ denotes the value of the (dual) lower bound. In Figure~\ref{fig:12-lines-gap-comparison} the mean objective gap over time for the $25$ instances is shown and Figure~\ref{fig:12-boxplots-gap-over-time-all} provides detailed boxplots for each of the parameters. Again, \textbf{(T\&S)} provides the best performance, while \textbf{(T)} comes very close to it. Importantly, the mean objective gap for \textbf{(S)} is noticeably higher than for \textbf{(T)}, even though the total runtimes are comparable.\\

\begin{figure}[p]
\begin{subfigure}[t]{\textwidth}
\includegraphics[width=1\textwidth]{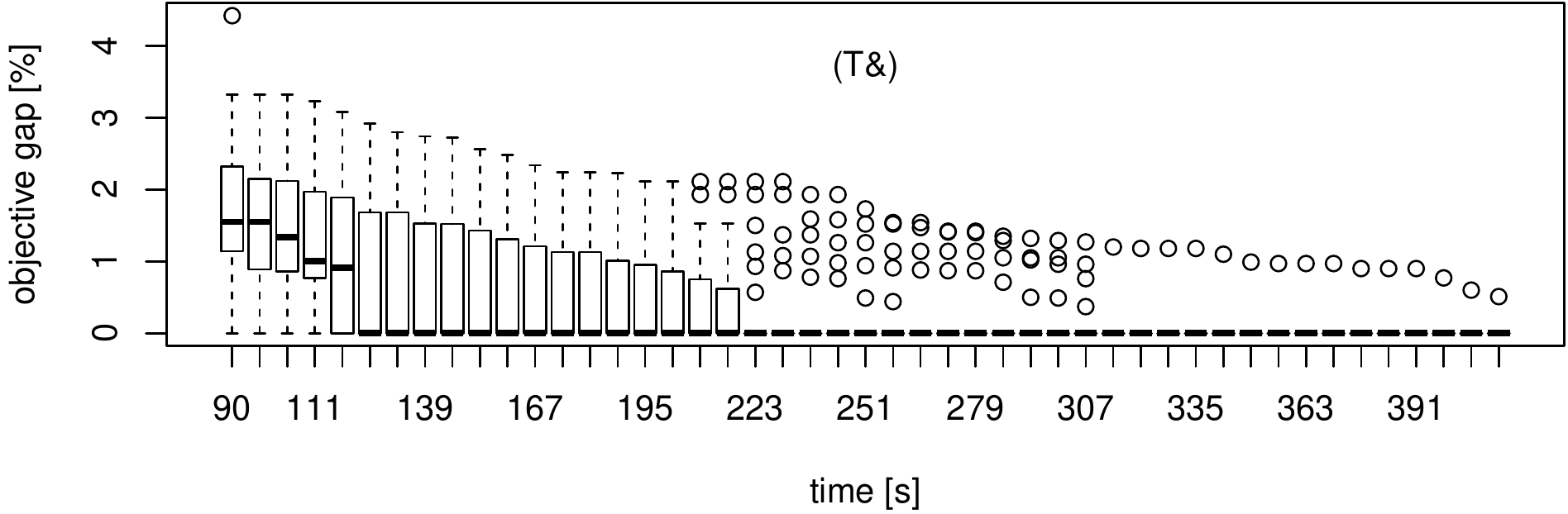}
\label{fig:12-boxplots-gap-over-time-ful-opt}
\end{subfigure}
\begin{subfigure}[t]{\textwidth}
\includegraphics[width=1\textwidth]{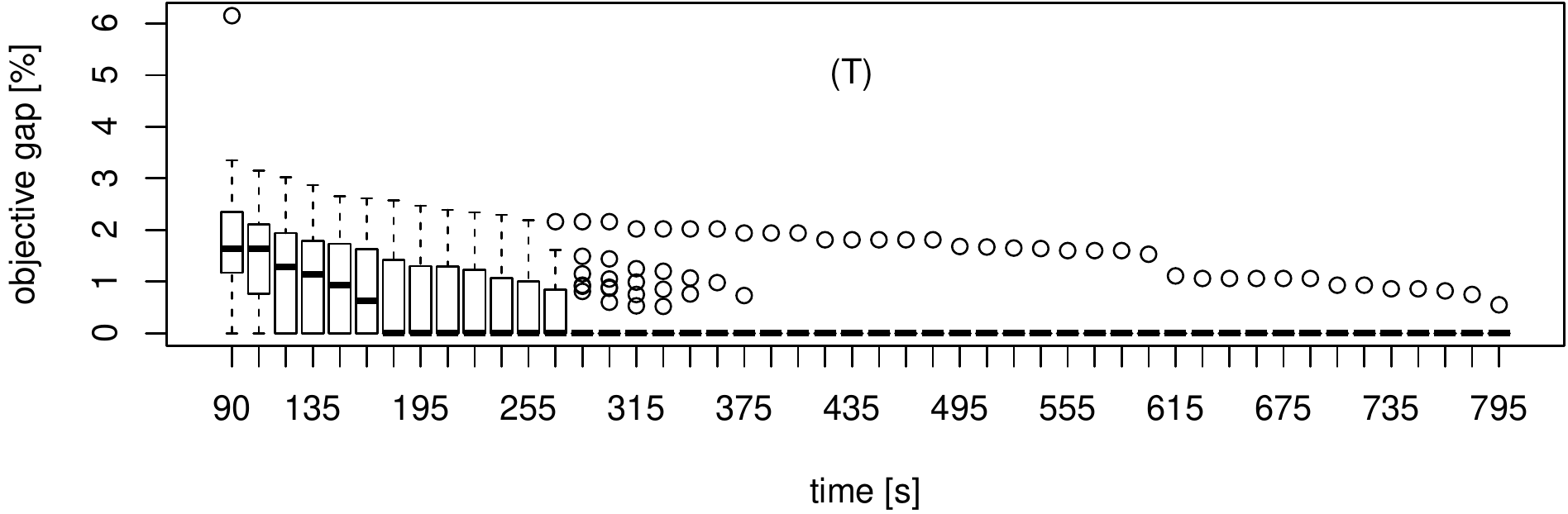}
\label{fig:12-boxplots-gap-over-time-wo-sep-opt}
\end{subfigure}
\begin{subfigure}[t]{\textwidth}
\includegraphics[width=1\textwidth]{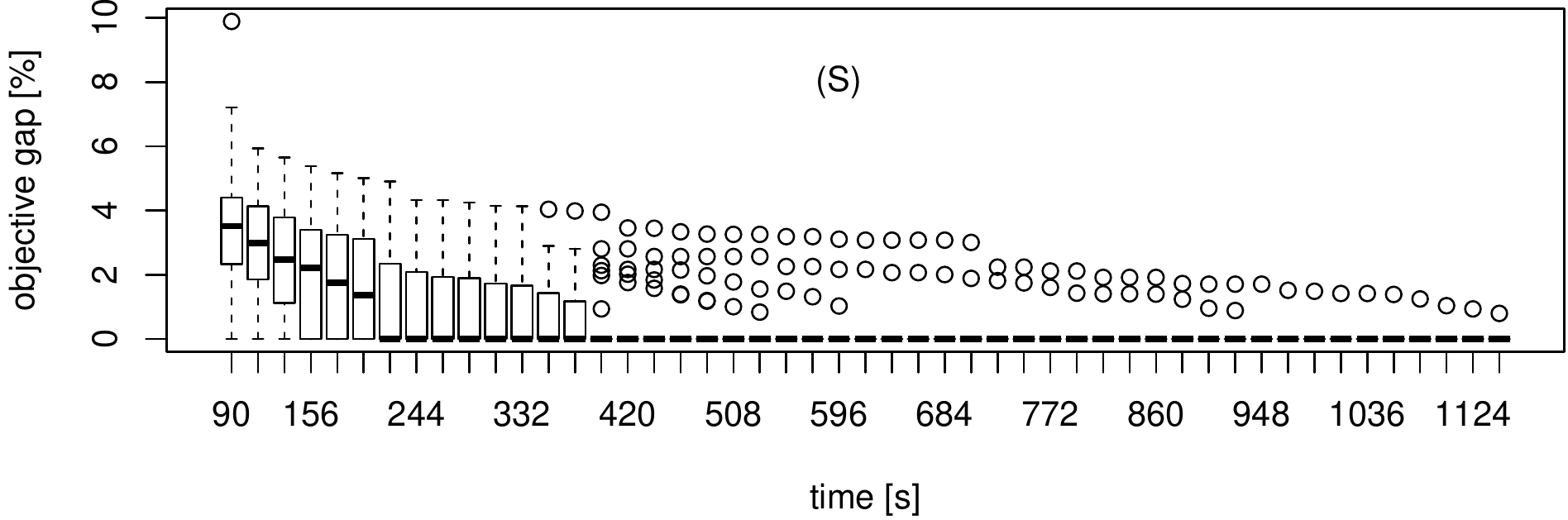}
\label{fig:12-boxplots-gap-over-time-wo-term-opt}
\end{subfigure}
\begin{subfigure}[t]{\textwidth}
\includegraphics[width=1\textwidth]{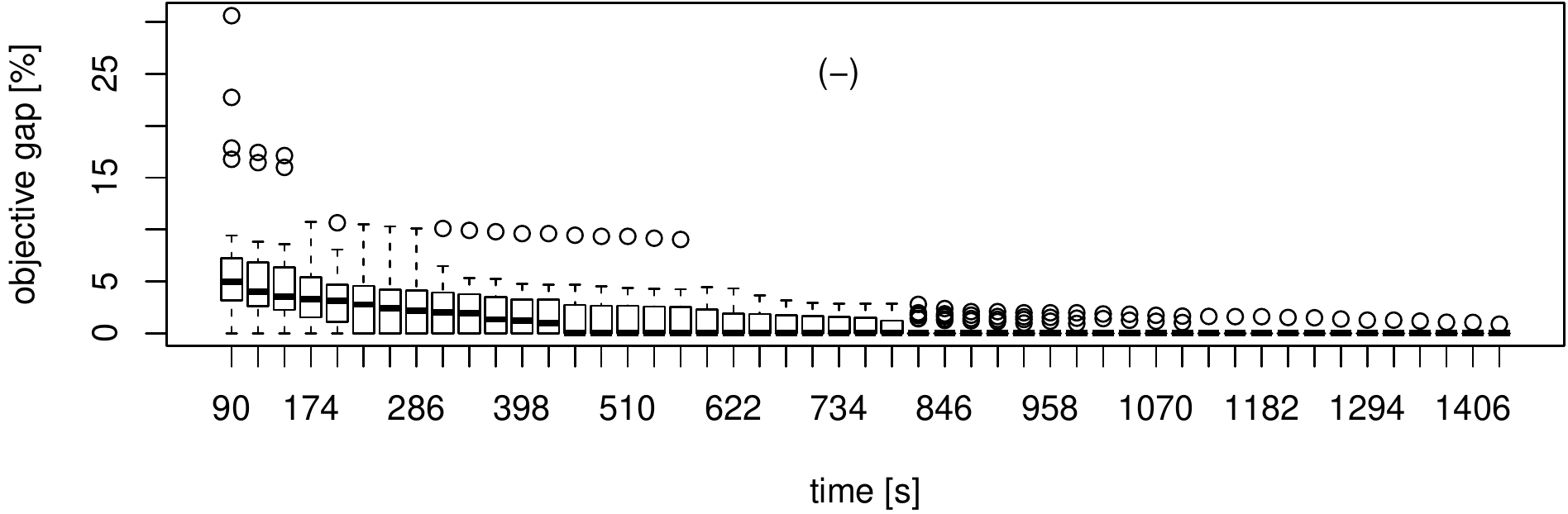}
\end{subfigure}
\caption{The objective gap over time for the same $25$ instances using $12 \times 12$ grids under the four different parameter settings \textbf{(T\&S)}, \textbf{(T)}, \textbf{(S)} and \textbf{(-)} measured at regular time intervals. Note the different time scales.}
\label{fig:12-boxplots-gap-over-time-all}
\end{figure}

We conclude by observing that CVSAP can be solved within minutes to optimality on $12 \times 12$ grids using parameter setting \textbf{(T\&S)} and that the separation of the valid inequalities \ref{IP:CutTerminal} dramatically improves performance. Furthermore, the usage of creep-flow and nested-cuts does improve performance slightly even though incurring further computational costs during the separation procedure. However, as the number of overall generated cuts is reduced, we have chosen to conduct all further experiments using the setting \textbf{(T\&S)}.

 \newpage
\subsection{Main Computational Results}
\label{sec:results-main} 

As shown in Section~\ref{sec:resuls-parameter-validation} instances based on grid graphs for $n=12$ can be solved to optimality within minutes. We will now consider instances based on the larger $n \times n$ grids with $n=16,20$ and as well as instances based on Internet Topologies IGen.1600 and IGen.3200. For each of these problem classes we present results for $25$ independently generated instances (see Section~\ref{sec:results-problem-classes} for parameters). As many of the instances cannot be solved to optimality anymore, we terminate experiments after $2$ hours.

\begin{figure}[p]
\begin{subfigure}[t]{\textwidth}
\includegraphics[width=1\textwidth]{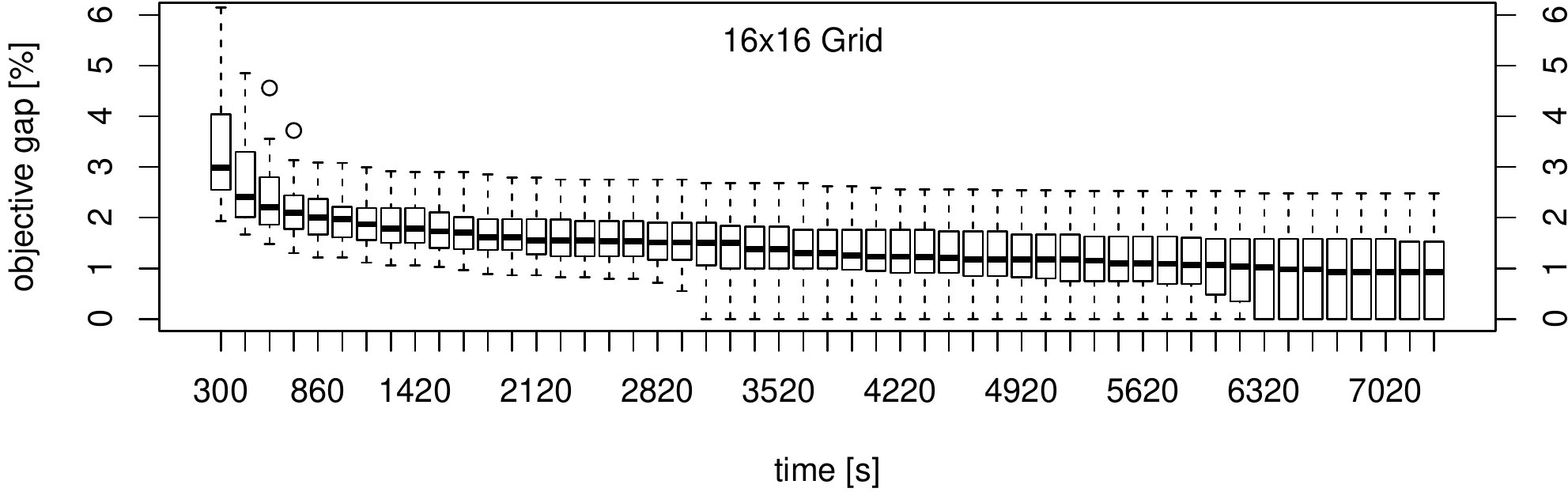}
\end{subfigure}

\bigskip

\begin{subfigure}[t]{\textwidth}
\includegraphics[width=1\textwidth]{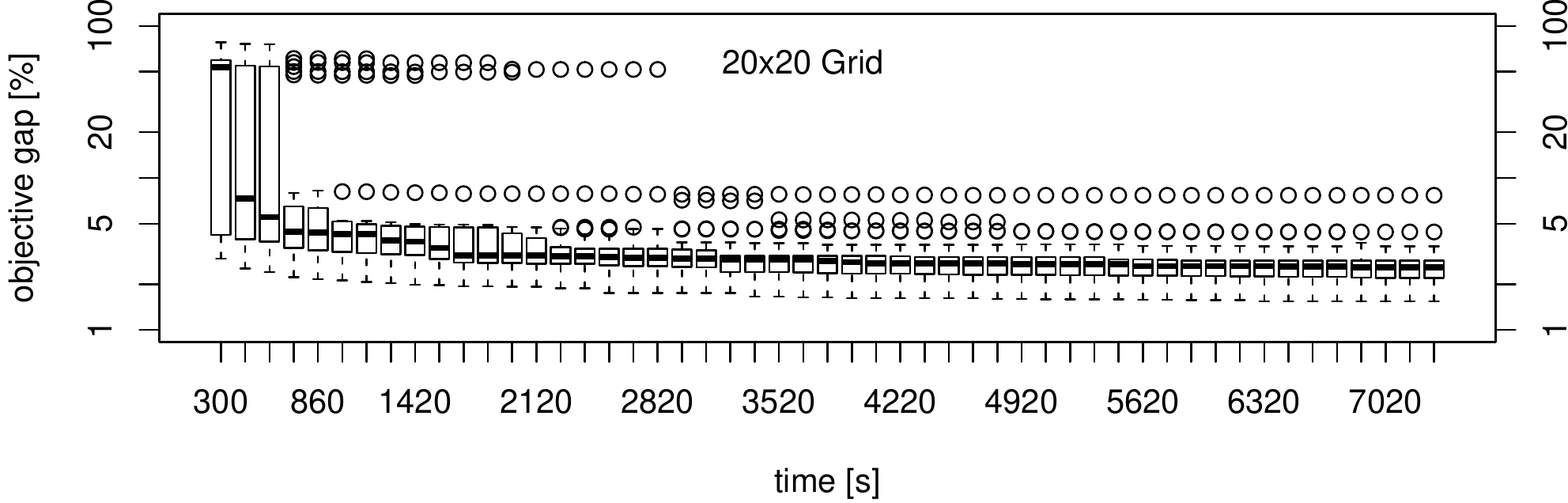}
\end{subfigure}

\bigskip
\begin{subfigure}[t]{\textwidth}
\includegraphics[width=1\textwidth]{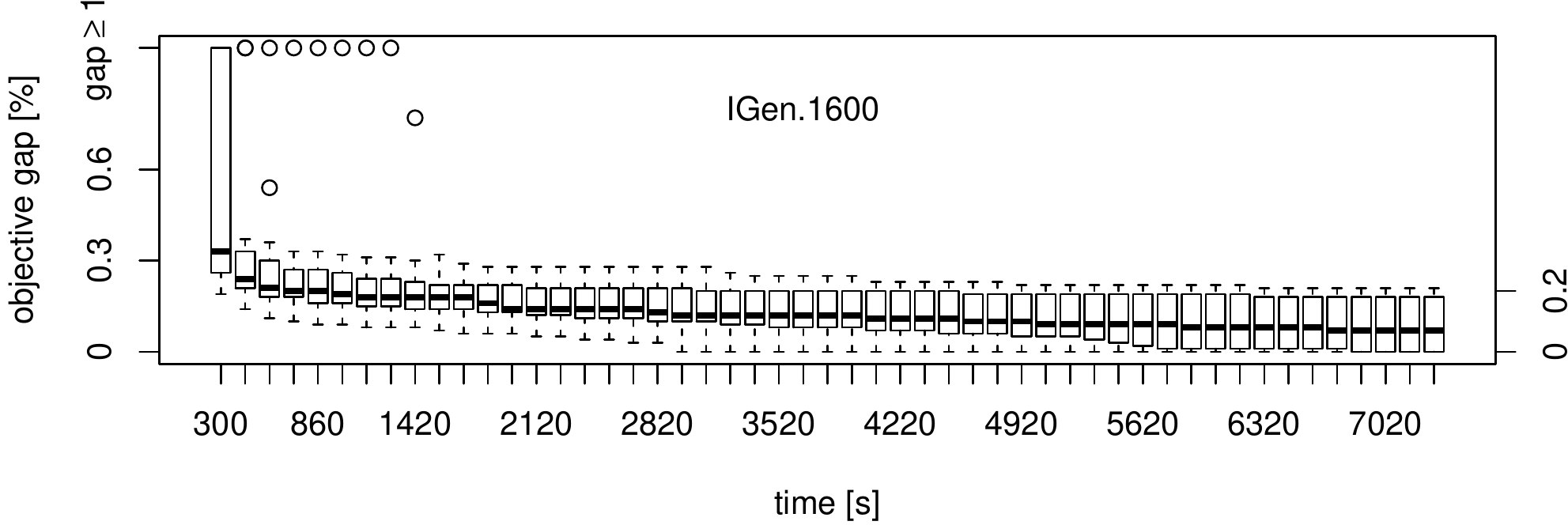}
\end{subfigure}

\bigskip
\begin{subfigure}[t]{\textwidth}
\includegraphics[width=1\textwidth]{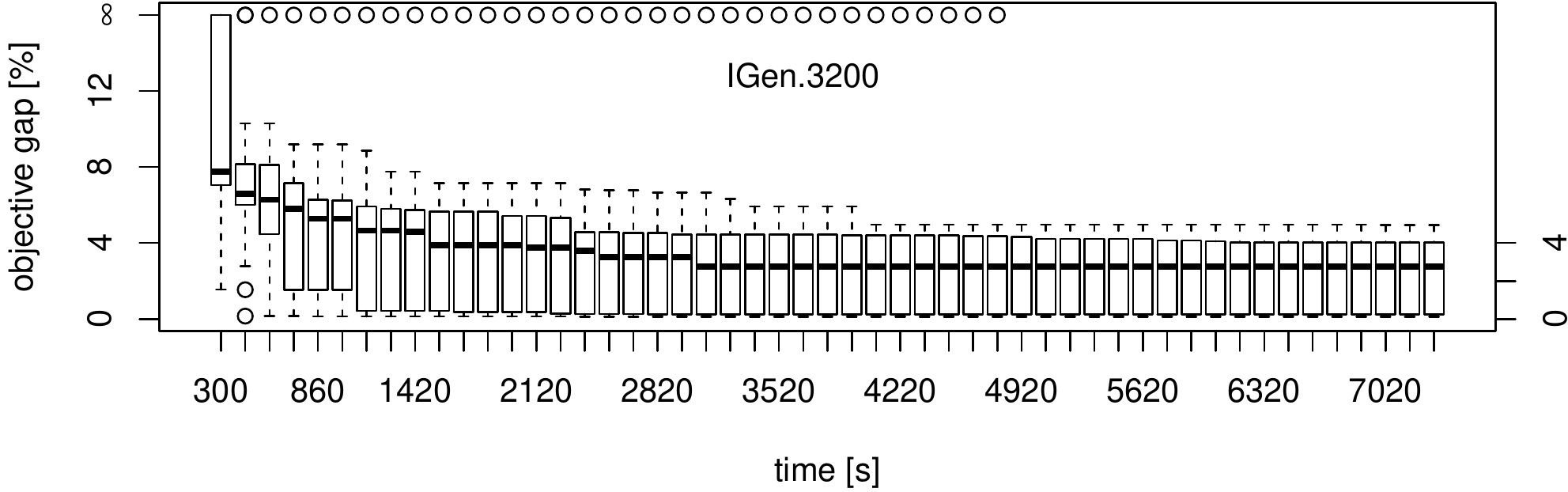}
\end{subfigure}

\caption{The objective gap of $25$ instances for $n\times n$ grids with $n=16,20$ as well as for IGen.1600 an IGen.3200 measured at regular time intervals. Note the logarithmic y-axis for $n=20$. For IGen.1600, objective gaps above $1\%$ are subsumed. For IGen.3200 an objective gap of $\infty$ expresses, that no primal solution has been found.}
\label{fig:main-results-gap}
\end{figure}

\begin{figure}[p]
\begin{subfigure}[t]{\textwidth}
\includegraphics[width=1\textwidth]{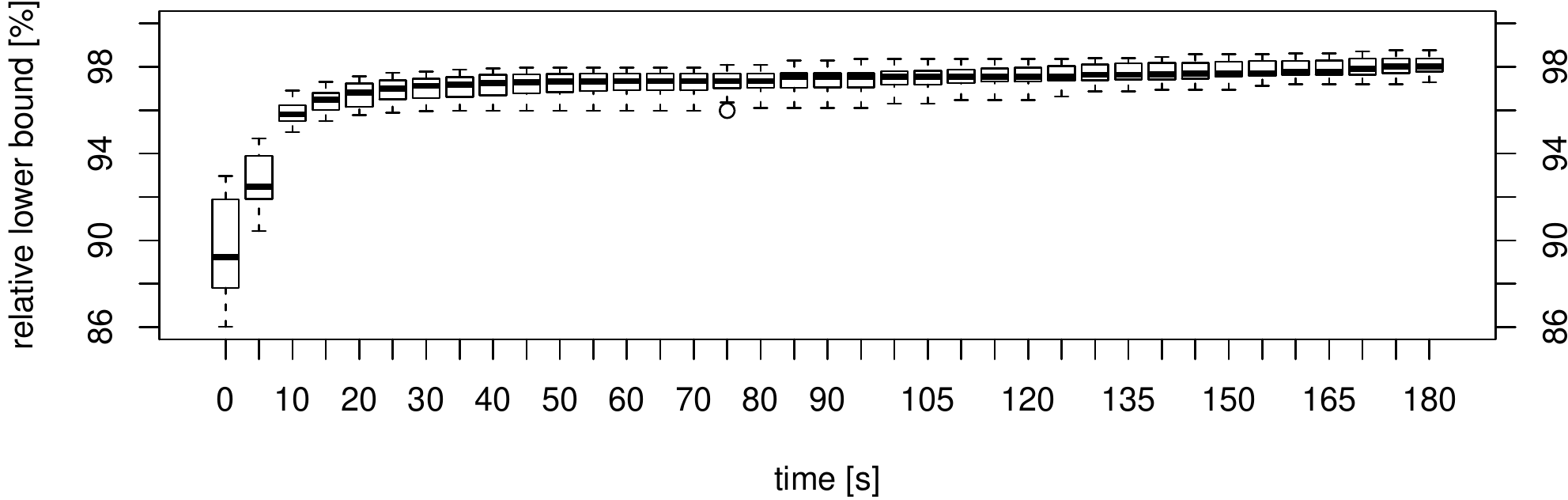}
\end{subfigure}

\bigskip

\begin{subfigure}[t]{\textwidth}
\includegraphics[width=1\textwidth]{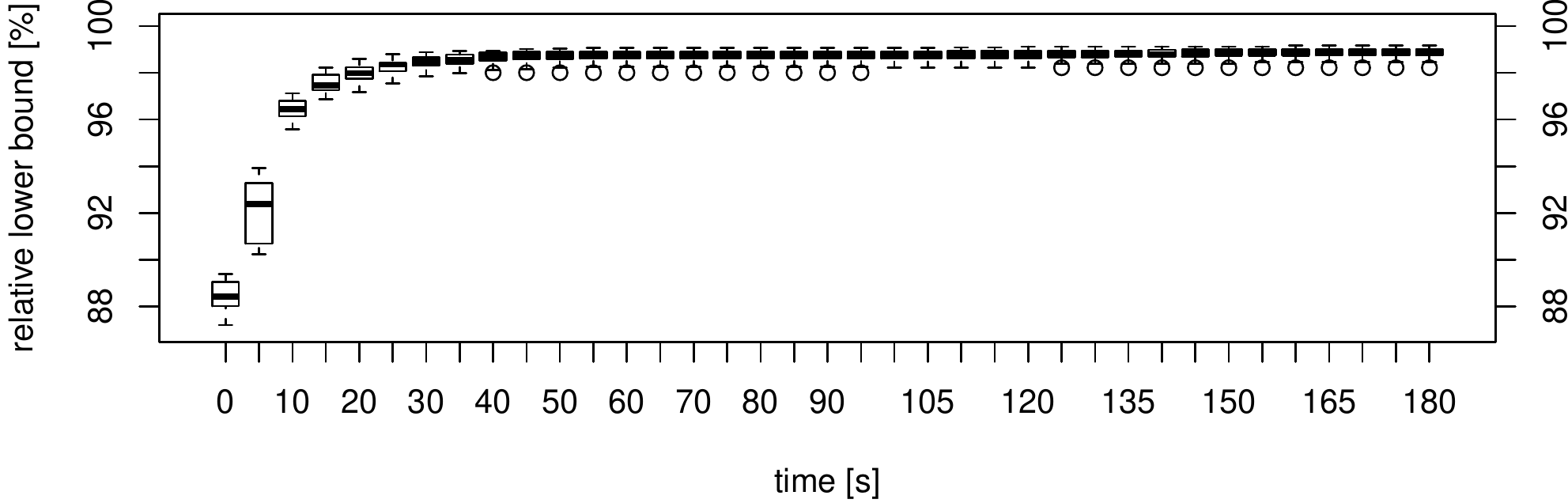}
\end{subfigure}

\bigskip

\begin{subfigure}[t]{\textwidth}
\includegraphics[width=1\textwidth]{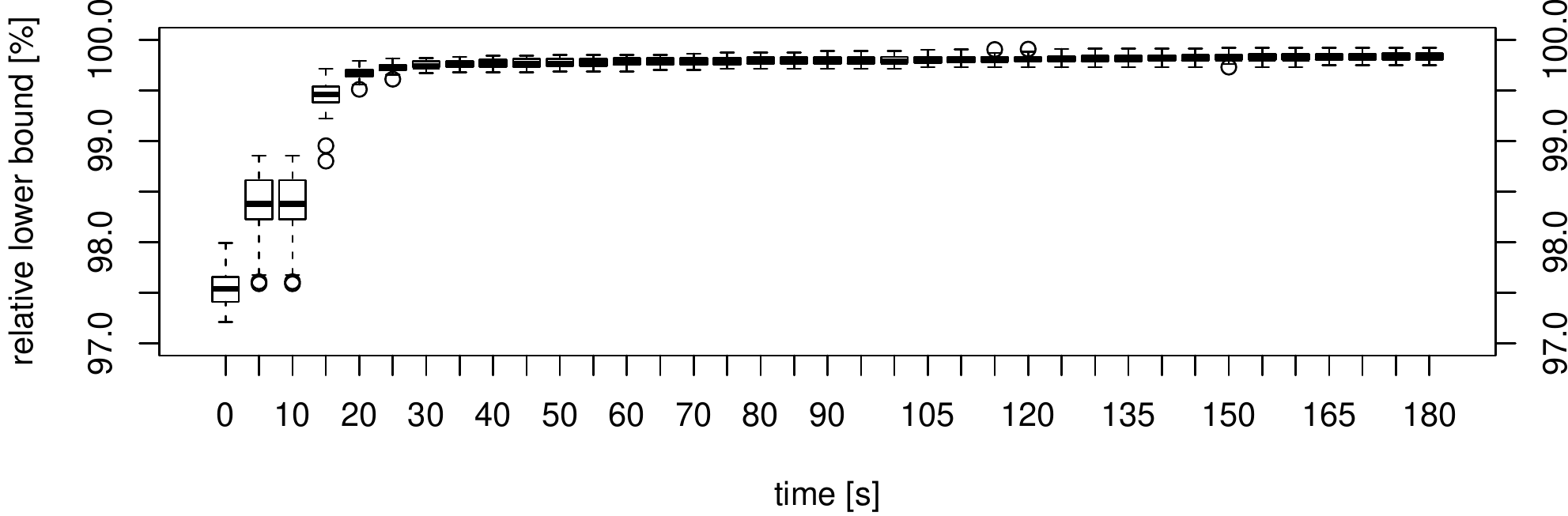}
\end{subfigure}

\bigskip

\begin{subfigure}[t]{\textwidth}
\includegraphics[width=1\textwidth]{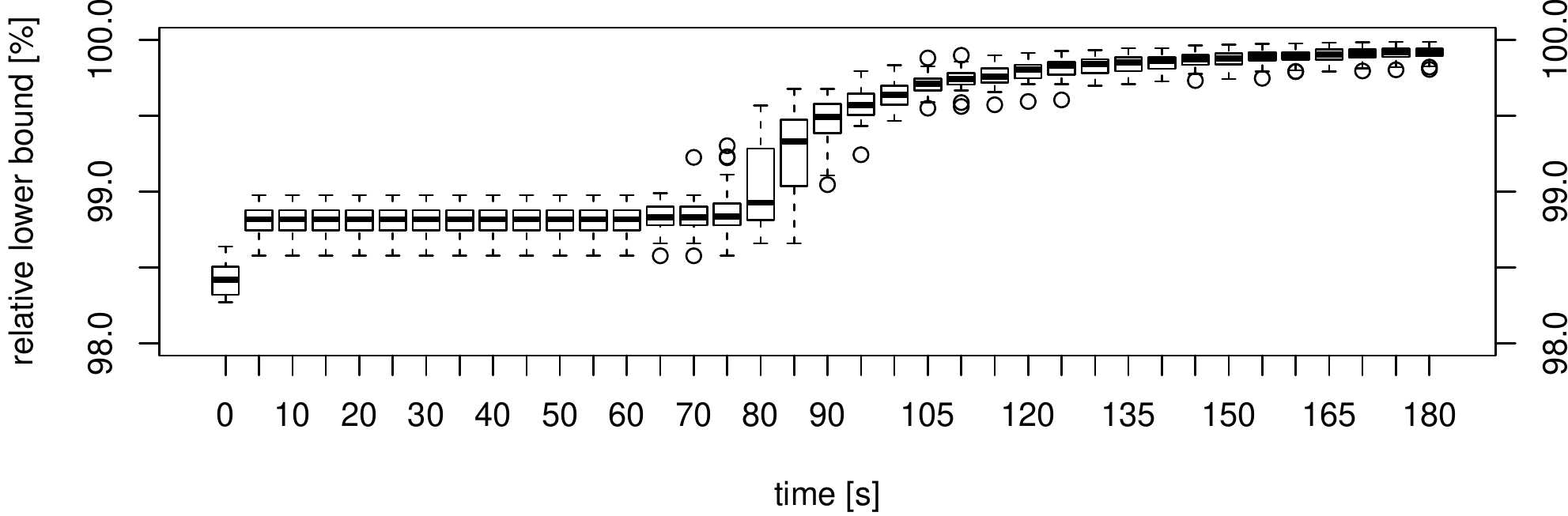}
\end{subfigure}

\caption{The relative lower bound, i.e. the lower bound compared to the best lower bound found after 2 hours, of the $n \times n$ grid graphs for $n=16,20$ as well as for IGen.1600 and IGen.3200 measured at regular time intervals. Please note the log scale for $n=20$.}
\label{fig:main-results-dual}
\end{figure}
\begin{figure}[p]
\centering
\begin{subfigure}[t]{\textwidth}
\includegraphics[width=1\textwidth]{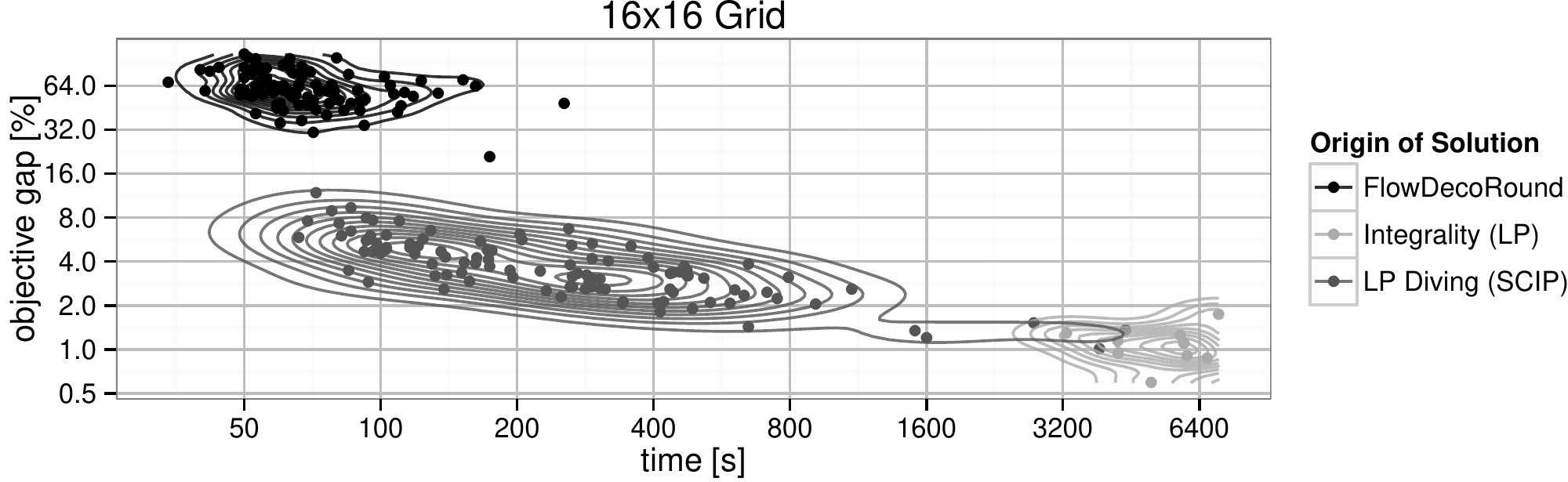}
\end{subfigure}
\medskip

\begin{subfigure}[t]{1\textwidth}
\includegraphics[width=1\textwidth]{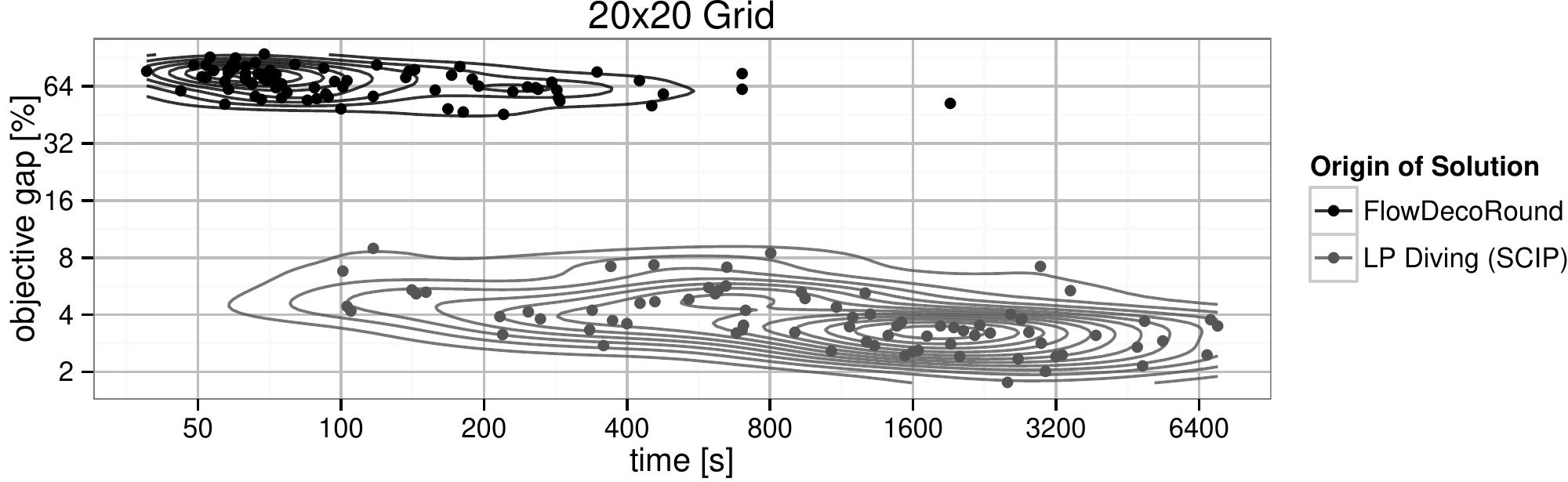}
\end{subfigure}
\medskip

\begin{subfigure}[t]{1\textwidth}
\includegraphics[width=1\textwidth]{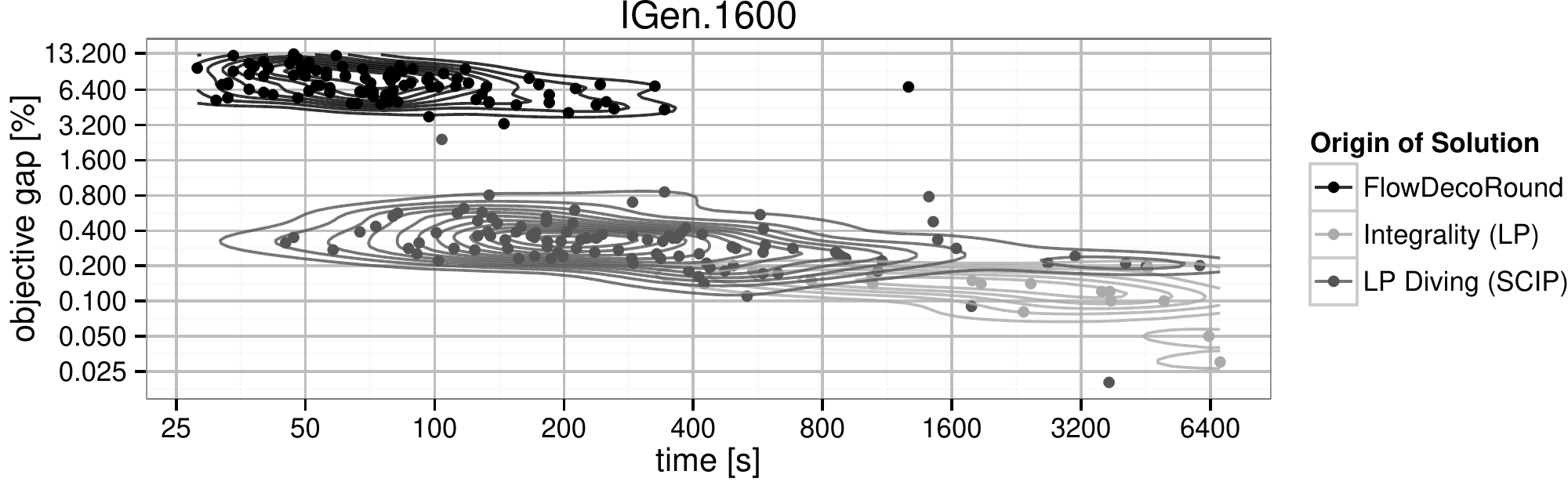}
\end{subfigure}
\medskip

\begin{subfigure}[t]{1\textwidth}
\includegraphics[width=1\textwidth]{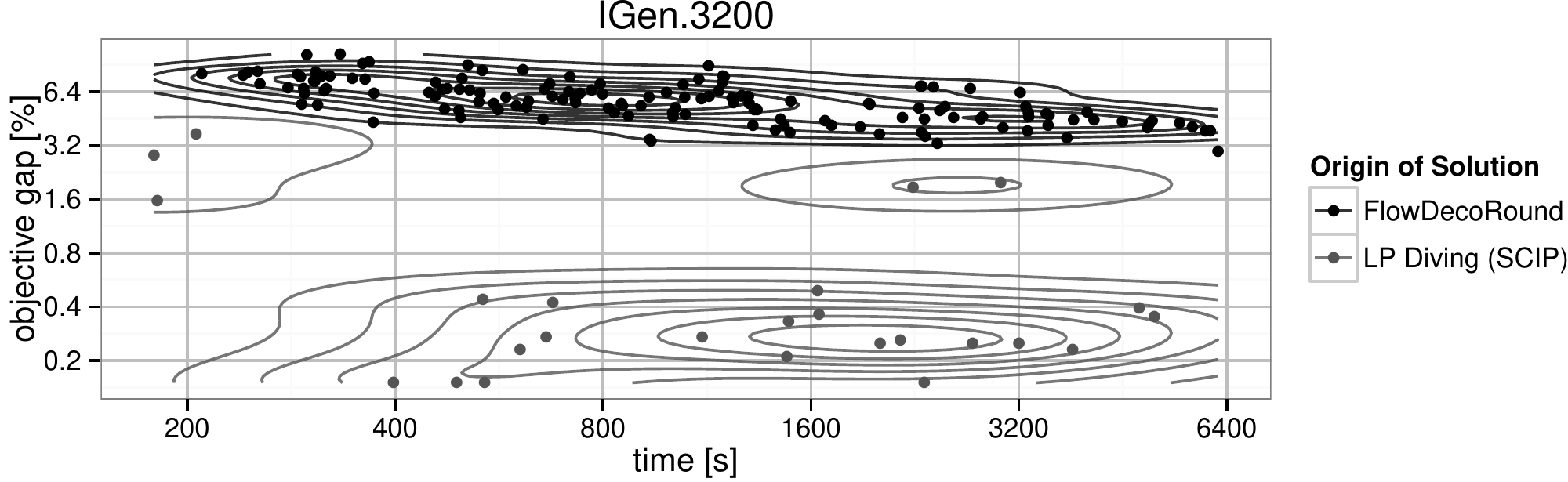}
\end{subfigure}
\caption{Primal solutions found by our solver for $n\times n$ grids with $n=16,20$ and the IGen topologies IGen.1600 and IGen.3200 over time with a borad classification of the solution's origin. The depicted gap corresponds to the lower bound at the time the solution was found. Note the logarithmic x- and y-axis.}
\label{fig:main-results-primal}
\end{figure}

Figure~\ref{fig:main-results-gap} depicts the objective gap over time. First note that, independent of the problem class, the gap decreases substantially during the first $30$ minutes while decreasing only slightly during the last $90$ minutes of execution. Furthermore, while for grid gaphs with $n=16$ some solutions can be solved, this is not the case for $n=20$. Considering the Internet topologies, IGen.1600 instances can be solved very close to optimality within few tens of minutes. While some of the IGen.3200 instances exhibit a similarly low gap, the median gap is generally a magnitude higher.

To better understand the progress in the objective gap, we will investigate both the progress in the lower as well as the primal bound. In Figure~\ref{fig:main-results-dual} the progress of the lower bound during the first $3$ minutes is depicted. As measure we use the relative lower bound, i.e. the lower bound with respect to the best lower bound achieved after $2$ hours. Note that at point in time $0$ the respective lower bound is the root relaxation without any separated cuts stemming from \ref{IP:CutSteiner} and \ref{IP:CutTerminal}. Across the board, the relative lower bound after $3$ minutes is within a margin of less than $0.02\%$. Thus in the remaining $117$ minutes of executions, the lower bound only improves sligthly.

To investigate the primal bound, i.e. the objective value of the best solution found so far, we depict in Figure~\ref{fig:main-results-primal} the origin and quality of solutions found. Primal solutions may either be generated by our heuristic \ref{alg:FlowDecoRound}, a bundled heuristic of SCIP or integral solutions of the LP relaxation. Importantly, all but less than $10$ solutions found by SCIP's heuristics were generated using LP diving, such that we only depict these. Considering grid graphs, our heuristic \ref{alg:FlowDecoRound} does generally construct solutions very quickly with an objective gap of $30\%-100\%$ whereas LP diving heuristics find solutions with a gap of an order less. The objective gap of solutions found by SCIP's diving heuristics are essentially the same for $n=16$ and $n=20$, while for $n=20$ solution generation is clearly delayed. Considering the Internet topologies, the heuristic developed by us achieves a gap of $3\%-13\%$, while LP diving heuristics provide again the best solutions within a gap of less than $0.8\%$. Finally, note that solutions found by integral linear relaxations yield the best objective values but are only found very late in the branch-and-cut process and only for the smaller instances of $16 \times 16$ grids and IGen.1600. We conclude the analysis of our results with the following observations. 
\begin{enumerate}
\item Using formulation \ref{alg:MIP} we can solve realistically sized instances on Internet topologies near optimally. Considering IGen.3200 instances, with a gap of as much as $4\%$. We believe this to be due to the lack of high-quality solutions rather than due to low-quality lower bounds. 
\item The heuristic \ref{alg:FlowDecoRound} presented in Section~\ref{alg:FlowDecoRound} performs quite well on Internet topologies in which costs are not chosen uniformly.
\item $n \times n$ grid instances with uniform costs as presented in Section~\ref{sec:results-problem-classes} seem to be hard to solve even for comparatively small sizes of $n=16,20$.
\end{enumerate}

\subsection{Comparison with MIP-A-CVSAP-MCF}
\label{sec:results-comparison-mcf}
Having presented the computational results for formulation~\ref{alg:MIP} in the above section, we will now examine the performance of~\ref{alg:MCF-MIP} on $20 \times 20$ grid graph instances as well as on the smaller IGen.1600 instances. Already for IGen.1600 the formulation~\ref{alg:MIP} induces around one million binary variables. Therefore, we have chosen to use the commercial MIP solver CPLEX~\cite{cplex} instead of SCIP to solve these instances.  To generate the instances for CPLEX we have modeled~\ref{alg:MCF-MIP} in the GNU Mathematical Programming Language~\cite{glpk}. Both the model files and the corresponding data files are available at~\cite{rostSchmidWeb}. We ran CPLEX with standard parameters making 12 GB of RAM available to it. Again, we terminate the execution after $2$ hours. Log files of the experiments running CPLEX are also obtainable from~\cite{rostSchmidWeb}. Note that CPLEX generally performs better than SCIP (see~\cite{koch2011miplib} for a comparison) in direct comparisons. The experiments are therefore a priori biased in favor of~\ref{alg:MCF-MIP}.

\begin{figure}[b]
\includegraphics[width=1\textwidth]{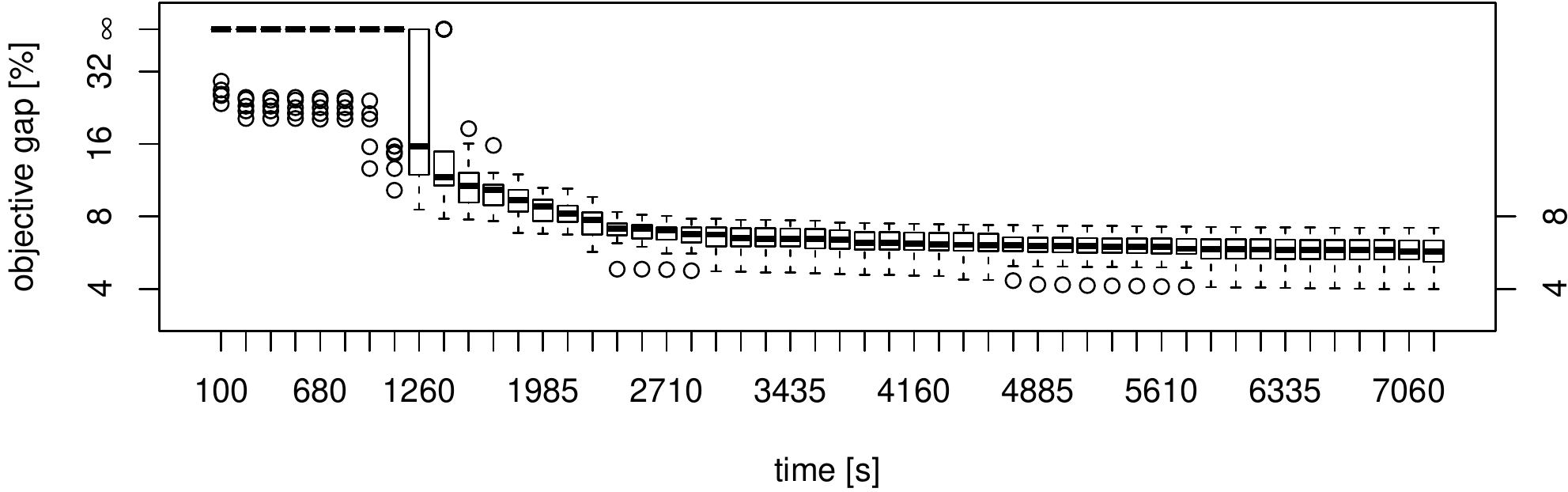}
\caption{Objective gap over time for the $20 \times 20$ grid instances using formulation~\ref{alg:MCF-MIP} which is being solved by CPLEX. A gap of $\infty$ denotes that no primal solution has been found.}
\label{fig:grid-20-cplex-gap}
\end{figure}

In Figure~\ref{fig:grid-20-cplex-gap} the objective gap of the $20 \times 20$ grids using formulation~\ref{alg:MCF-MIP} is shown over time. Compared to the corresponding plot for $n=20$ in Figure~\ref{fig:main-results-gap}, we notice that our approach yields solutions much quicker and that the final gap after $2$ hours using~\ref{alg:MIP} lies approximately $3\%$ below the gap found by CPLEX. CPLEX could only find solutions for $3$ of the $25$ instances of IGen.1600. We therefore do not provide a plot of the gap over time for these instances.

To computationally compare the strength of formulations~\ref{alg:MIP} and~\ref{alg:MCF-MIP}, we will compare the lower bounds by the following metric.

\begin{definition}[Relative Improvement of Formulations]
The relative improvement of formulation $\foM{A}$ over formulation $\foM{B}$ for a minimization problem is defined as 
\[
	I^{\textnormal{rel}}_{\textnormal{dual}}(\foM{A},\foM{B}) = \frac{D_{\foM{A}} - D_{\foM{B}}}{P_{\textnormal{best}} - D_{\foM{B}}}~,
\]
where $D_\foM{A}$ denotes the dual bound of $\foM{A}$, $D_\foM{B}$ denotes the dual bound of $\foM{B}$ and $P_{\textnormal{best}}$ is the objective value of the best solution found overall.
\end{definition}

Note that $I^{\textnormal{rel}}_{\textnormal{dual}}(\foM{A},\foM{B})=1$ holds iff. $D_{\foM{A}} = P_{\textnormal{best}}$ and therefore formulation $\foM{A}$ has proven the optimality of $P_{\textnormal{best}}$. Otherwise,  $I^{\textnormal{rel}}_{\textnormal{dual}}(\foM{A},\foM{B})$ measures the improvement of the dual bound using formulation $\foM{A}$ compared to the absolute gap of formulation $\foM{B}$. Furthermore note that this metric is independent of the process of determining primal solutions, since the best overall solution is chosen indepently of which formulation provided it and when it was found.

Figure~\ref{fig:relative-improvement-cplex-scip} depicts $I^{\textnormal{rel}}_{\textnormal{dual}}(\textnormal{\ref{alg:MIP}},\textnormal{\ref{alg:MCF-MIP}})$ over time. As it takes CPLEX up to $1200$ seconds to determine the root relaxation for IGen.1600 instances, we can only compare the relaxations after this point in time. Clearly, for both problem classes the relative improvement is substantial. We therefore conclude by stating that~\ref{alg:MIP} yields distinctively better lower bounds than~\ref{alg:MCF-MIP}.
\begin{figure}[t]
\begin{subfigure}[t]{\textwidth}
\includegraphics[width=1\textwidth]{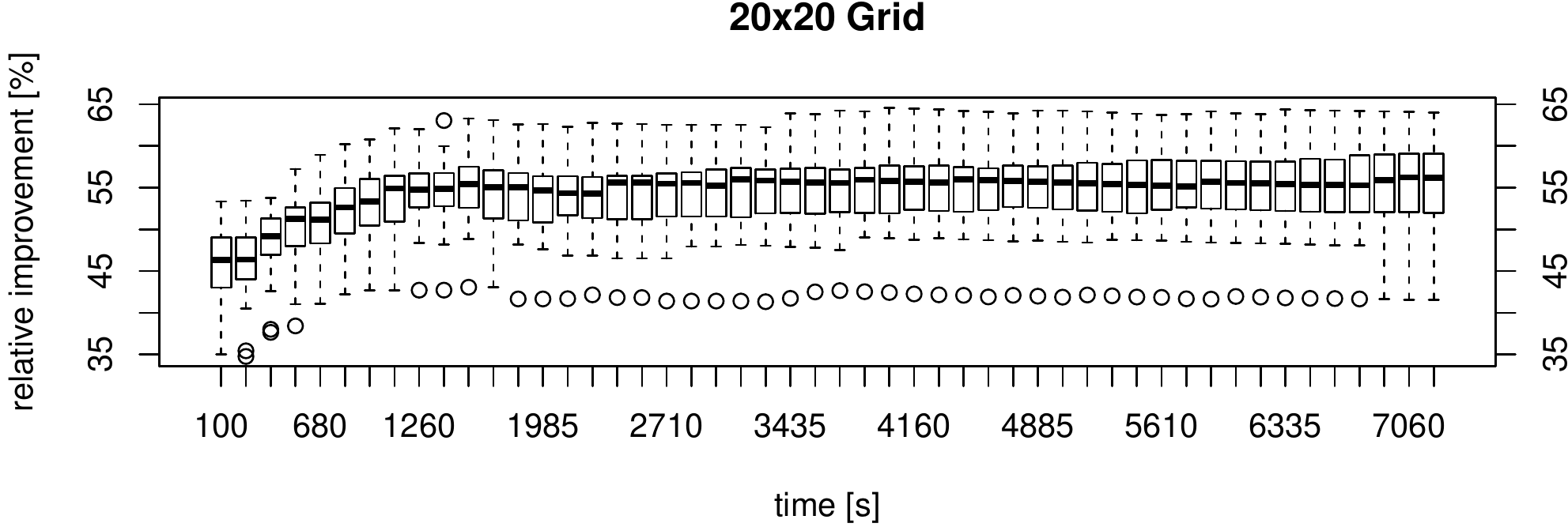}
\end{subfigure}

\medskip

\begin{subfigure}[t]{\textwidth}
\includegraphics[width=1\textwidth]{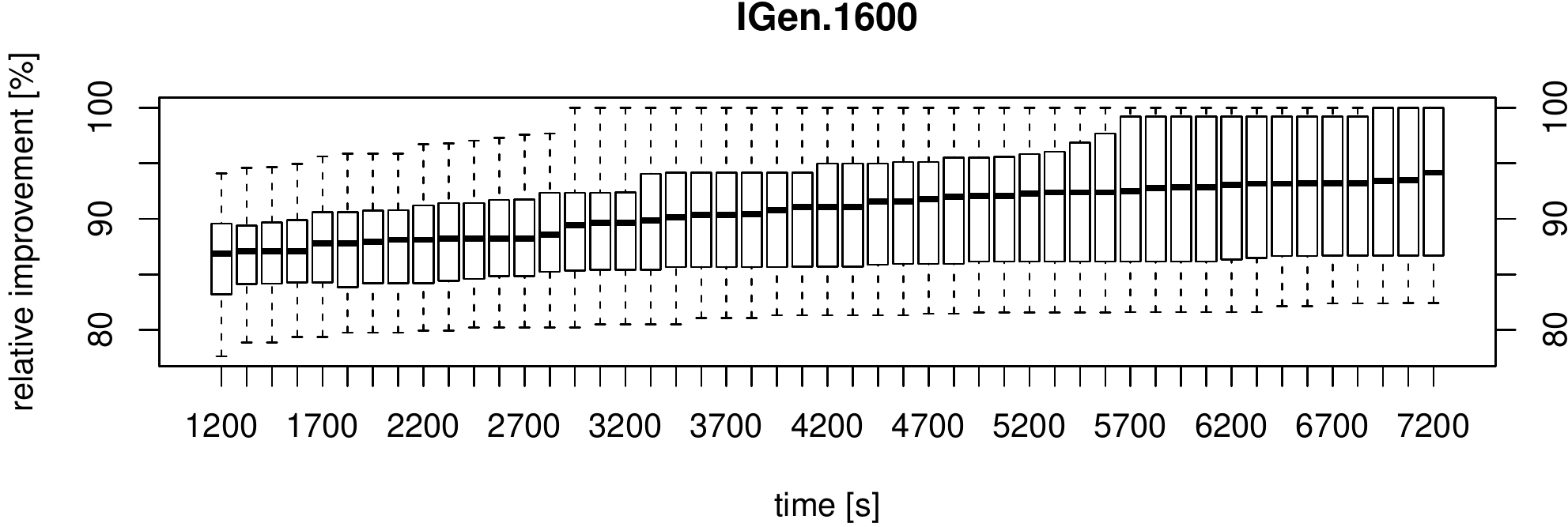}
\end{subfigure}
\caption{$I^{\textnormal{rel}}_{\textnormal{dual}}(\textnormal{\ref{alg:MIP}},\textnormal{\ref{alg:MCF-MIP}})$ on IGen.1600 and $20 \times 20$ grid instances over time.}
\label{fig:relative-improvement-cplex-scip}
\end{figure}

\subsection{Analysis of Runtime Allocation}
\label{sec:results-analysis}
We will now discuss the runtime allocation of different subroutines, to identify computational bottlenecks. In Figure~\ref{fig:runtime-allocation} the runtime allocation for the different problem classes is depicted according to the following classification:

\newenvironment{descriptionWorkaround}[1]
     { \list{}%
           {\settowidth\labelwidth{{#1}}%
            \leftmargin\labelwidth
            \advance\leftmargin\labelsep}%
}{\endlist}

\begin{descriptionWorkaround}{total}
\item[\textbf{B}] Time spent in branching procedures. This represents the time needed for deciding which variables to branch on. In case that, e.g., strong-branching~\cite{achterberg2009scip} is applied, this is a computationally expensive subroutine.\vspace{2pt}
\item[\textbf{H}] Time spent in heuristics. While this includes the execution time of our heuristic~\ref{alg:FlowDecoRound}, its contribution is negligible compared to the time spent in diving heuristics. It must be noted that diving heuristics require the solving of  linear relaxations, but do not trigger separation procedures. \vspace{2pt}
\item[\textbf{L}] Time spent in solving linear relaxations. \vspace{2pt}
\item[\textbf{S}] Time spent in separation procedures to check the validity of ~\ref{IP:CutSteiner} and~\ref{IP:CutTerminal} and to generate cuts where necessary
\end{descriptionWorkaround}

\begin{figure}[b!]
\centering
\begin{subfigure}[t]{0.1762\textheight}
\includegraphics[width=1\textwidth]{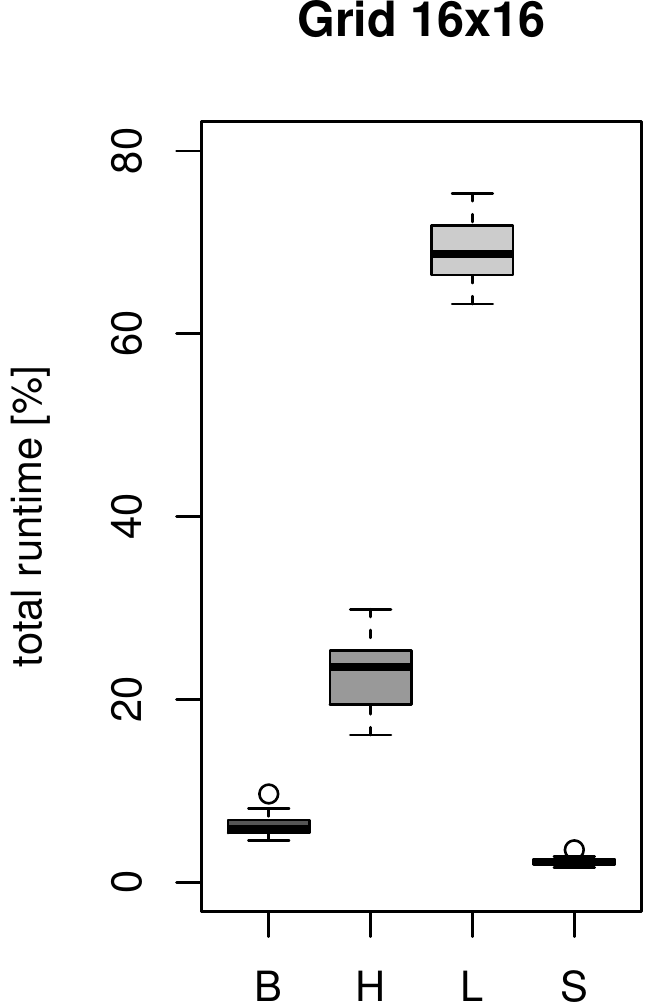}
\end{subfigure}
\begin{subfigure}[t]{0.23\textwidth}
\includegraphics[width=1\textwidth]{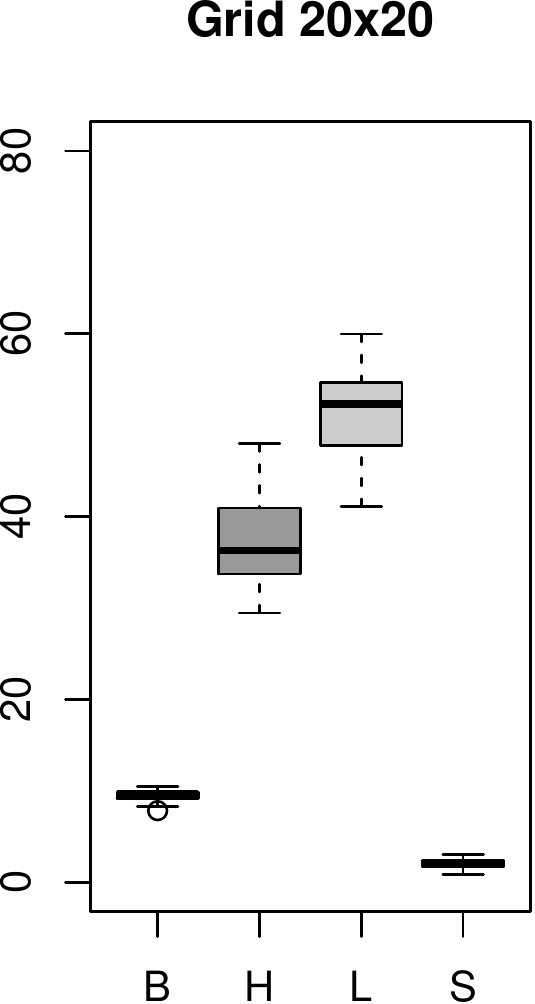}
\end{subfigure}
\begin{subfigure}[t]{0.23\textwidth}
\includegraphics[width=1\textwidth]{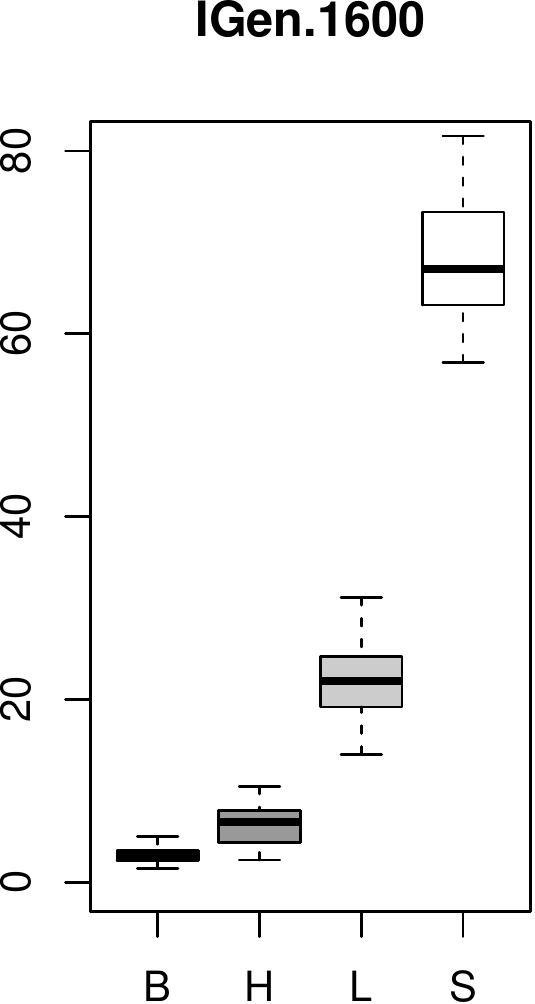}
\end{subfigure}
\begin{subfigure}[t]{0.23\textwidth}
\includegraphics[width=1\textwidth]{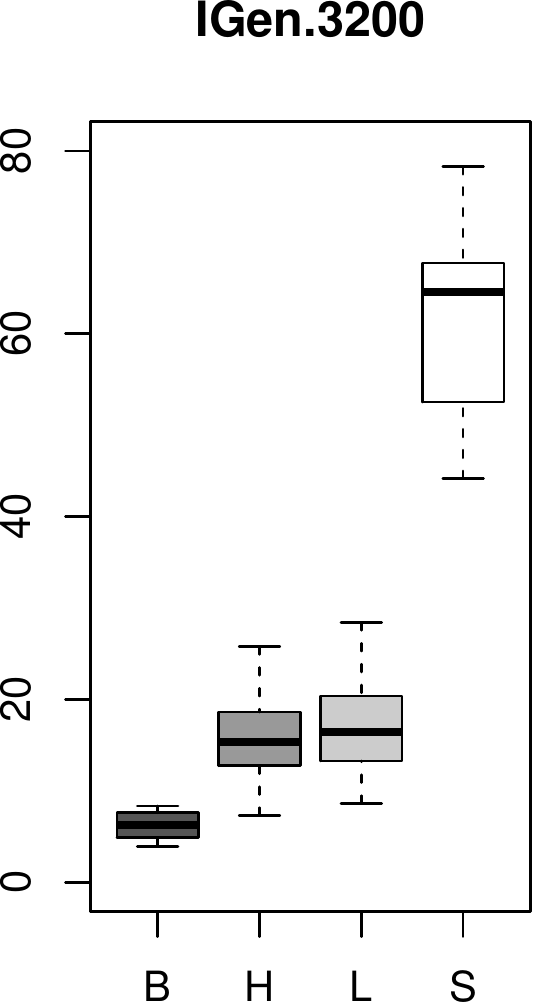}
\end{subfigure}
\caption{Runtime allocation distinguished according to the following categories: execution time spent for branching (B), for heuristics (H), for solving of the linear relaxations (L) and the execution time needed for separating~\ref{IP:CutSteiner} and~\ref{IP:CutTerminal} (S).}
\label{fig:runtime-allocation}
\end{figure}

The runtime allocation varies to a great extent between grid and Internet topology instances. Considering grid experiments, due to the small number of terminals and Steiner sites and as the graphs are comparatively small, the time spent in separation procedures is marginal. In fact, the time spent in solving linear relaxations and executing heuristics clearly dominates the runtime. In stark contrast, the runtime allocation for Internet topologies IGen.1600 and IGen.3200 shows the major influence of the separation procedures as the underlying gaphs are much larger and the number of terminals and Steiner sites has increased by a magnitude. Note that for the larger instances, i.e. grids with $n=20$ and IGen.3200, the time spent in heuristics and the time spent in solving linear relaxations almost equalizes.

The above observations allow the following two conclusions:
\begin{enumerate}
\item	Our choice of using Edmonds-Karp to separate inequalities~\ref{IP:CutSteiner} and~\ref{IP:CutTerminal} limits performance for larger instances with many terminals and Steiner sites. It should therefore be either optimized or replaced by another algorithm, e.g. the one of Hao and Orlin (see~\cite{koch1998solving} for a discussion).
\item 	While LP diving heuristics often find the best heuristic solutions (see Section~\ref{sec:results-main}), their overall runtime comes close to the time spent for solving hte linear relaxations. By devising more advanced combinatorial heuristics to replace these generic heuristics, performance could furthermore be improved.
\end{enumerate}

\section{Related Work} 
\label{sec:related-work}

The CVSAP problem differs from many models studied in the context of IPTV~\cite{BLTJ:BLTJ20217}, sensor networks~\cite{stacy,in-net-sensor}, or fiber-optical transport~\cite{BLTJ:BLTJ20217}, to just name a few, in that the number and placement of processing locations is subject to optimization as well.
The problem is complicated further by the fact that the communication between sender and receiver may be in-network processed \emph{repeatedly}. The result from~\cite{molnar2011hierarchies} on multi-constrained multicast routing also applies to CVSAP: any algorithm limited to (directed) acyclic graphs cannot
solve the problem in general. Generally, while there exist heuristic and approximate algorithms for related problem variants,
we are the first to consider exact solutions.

The two closest models to CVSAP are studied in~\cite{shi2001proposal} and~\cite{olivieraPardalos2011MathematicalAspectsOfNetworkRouting}. However, while~\cite{shi2001proposal} already showed the applicability of selecting only a few processing nodes for multicasting, no concise formalization is given and
the described heuristics do not provide performance guarantees. In a series of publications, Oliviera and Pardalos consider the Flow Streaming Cache Placement Problem (FSCPP)~\cite{olivieraPardalos2011MathematicalAspectsOfNetworkRouting}. Unfortunately, their FSCPP definition is inherently flawed as it does not guarantee connectivity see (Lemma \ref{lem:PardalosIncorrect} in Section~\ref{sec:appendix:proofs}). Interestingly, the authors also provide a correct approximation algorithm, which however only considers the weak model which ignores traffic. 

\subsubsection{Other related problems and algorithms.} The CVSAP is related to several classic problems. For example, CVSAP generalizes the \emph{light-tree} concepts (e.g.,~\cite{cai2005improved})
in the sense that ``light splitting'' locations can be chosen depending on the repeatedly processed traffic; our approach can directly be used to optimally solve
the light-tree problem. In the context of wave-length assignment, Park et al.~\cite{park2010virtual} show that a small number of virtual splitters can
be sufficient for efficient multicasting.
Our formalism and the notion of hierarchy is based on the paper by Molnar~\cite{molnar2011hierarchies} who studies the structure of the so-called multi-constrained multicast routing problem. Unlike in the CVSAP, an edge appears at most once in the solution.
If the cost of in-network processing is zero and all nodes are possible Steiner sites, the CVSAP boils down to a classic Steiner Tree Problem~\cite{goemans1993catalog} and its degree-bounded variants~\cite{lee1993strong}. A closer look shows that CVSAP can be easily modified to
generalize the standard formulation of prize-collecting Steiner trees~\cite{Johnson:2000:PCS:338219.338637} where used edges entail costs, and connected terminals may come with a benefit. However, CVSAP does not directly generalize other STP variants where disconnected nodes yield penalties~\cite{Johnson:2000:PCS:338219.338637} or which need to support anycasts~\cite{Demaine:2009:NST:1577399.1577429}.
Lastly, CVSAP generalizes the standard facility location problem~\cite{gollowitzer2011mip}.

\subsubsection{Mathematical programs.} For a good overview
on Integer Programs and avoiding inefficient relaxations
that lead to high runtimes (e.g.~\cite{karget2000building}), we refer the reader
to~\cite{optint}.
Our mathematical program builds upon the separation approach
by Koch et al.~\cite{koch1998solving} (see also~\cite{goemans1993catalog}).
Unlike~\cite{koch1998solving}, for CVSAP it is not sufficient to
only compute the cuts from the senders, but also additional
cuts are introduced depending on whether aggregation nodes are opened.

\section{Conclusion} 
\label{sec:conclusion}

This paper presented VirtuCastto optimally solve the CVSAP.
We rigorously proved that, although the computed IP solution may contain directed cyclic structures and flows may be merged repeatedly, there
exists an algorithm to decompose the solution into individual routes.
Since the CVSAP problem is related to several classical optimization problems, we believe that our approach is of interest beyond the specific model studied here.

Complementing our theoretical results, we have undertaken an extensive computational evaluation to study the performance of our solution approach. We have validated that  the introduction of Directed Steiner Cuts, nested cuts and creep-flow can speed up computations significantly. Considering the generation of primal solutions, we have introduced a primal heuristic that generates feasible solutions far quicker than generic heuristics that are bundled together with SCIP. Our heuristic performs especially on the non-symmetric IGen instances well, yielding solutions with an objective gap below $14\%$. Lastly, our computational evaluation has demostrated that VirtuCast allows for solving realistically sized instances to within $4\%$ of optimality, while naive multi-commodity formulations perform significantly worse or cannot be used at all on large problem instances.

An interesting direction for future research regards the design of approximation algorithms as an efficient alternative to the rigorous optimization approach proposed in this paper. While in its general form CVSAP cannot be approximated, we believe that there exist good approximate solutions, e.g., for uncapacitated variants or bi-criteria
 models where capacities can be violated slightly.

{
  \bibliographystyle{abbrv} \bibliography{in-net_arxiv}
}


\begin{appendix}

\section{Deferred Lemmas} 
\label{sec:appendix:proofs}

\begin{lemma} \label{lem:IP3StrengthensFormulation}
The directed Steiner cuts (see~\ref{IP:CutTerminal}) can strengthen the formulation~\ref{alg:MIP}, i.e. improve the objective value of its LP relaxation.
\end{lemma}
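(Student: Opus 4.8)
The plan is to prove the lemma by exhibiting a single small CVSAP instance on which the optimal value of the LP relaxation of \ref{alg:MIP} strictly increases once the directed Steiner cuts \ref{IP:CutTerminal} are added. Since the \ref{IP:CutTerminal} are valid inequalities, the feasible region of the strengthened relaxation is contained in that of the relaxation with \ref{IP:CutTerminal} removed, so its optimum can only be larger; it therefore suffices to find one instance where the increase is strict. The structural reason such an instance must exist is the asymmetry between \ref{IP:CutSteiner} and \ref{IP:CutTerminal}: for a set $W$ containing an active Steiner site, \ref{IP:CutSteiner} only forces $\f(\deltaOut{\Enagg}(W)) \geq \xS_s$, a bound that degenerates as $\xS_s \to 0$, whereas \ref{IP:CutTerminal} forces $\f(\deltaOut{\Enagg}(W)) \geq 1$ whenever $W$ meets $\sT$. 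Without \ref{IP:CutTerminal}, a fractional LP solution can thus ``absorb'' a terminal's unit of flow at a Steiner node that is only fractionally opened, routing merely $\xS_s$ units onward towards the root.

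Concretely, I would take the chain instance $\VG = \{r,s,t\}$, $\EG = \{(t,s),(s,r)\}$ with $\sT = \{t\}$, $\sS = \{s\}$ and root $r$, so that the only way to connect $t$ to $r$ passes through $s$. I would set positive edge costs $\cc_{(t,s)} = \cc_{(s,r)} = 1$ and activation cost $\cc_s = 1$, unit edge capacities, and the crucial Steiner capacity $\cu_s = 5 > 1$. In the extended graph the edges leaving $s$ split into $(s,r) \in \Enagg$ and $(s,\sSinkS) \in \EextSm \setminus \Enagg$, which is exactly what makes the cut-set $W = \{t,s\}$ discriminating, since $\deltaOut{\Enagg}(W) = \{(s,r)\}$.

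For the relaxation with \ref{IP:CutTerminal} removed I would exhibit the fractional point $\xS_s = 1/5$, $\f(\sSource,t) = \f(t,s) = 1$, $\f(\sSource,s) = \f(s,r) = \f(r,\sSinkR) = 1/5$ and $\f(s,\sSinkS) = 1$: here the terminal's unit is absorbed into $\sSinkS$ while \ref{IP:CapSteiner} holds with equality ($1 = \cu_s \xS_s$), and its objective is $\cc_{(t,s)} + (\cc_{(s,r)} + \cc_s)/5 = 7/5$. For the full relaxation, the cut on $W = \{t,s\}$ forces $\f(s,r) \geq 1$, so together with the flow-conservation-forced $\f(t,s) = 1$ every feasible point costs at least $\cc_{(t,s)} + \cc_{(s,r)} = 2$, a value attained by leaving $s$ closed ($\xS_s = 0$) and sending the full unit along $s \to r$. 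Hence the optimum rises from at most $7/5$ to exactly $2$, which establishes the claimed strict improvement.

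The only remaining work is the routine verification that the fractional point satisfies \ref{IP:CutSteiner} for \emph{every} node set $W \ni s$ (not merely $\{s\}$ and $\{t,s\}$) as well as the constraints \ref{IP:FlowConservation}--\ref{IP:VarF}; this is immediate, because for each such $W$ the set $\deltaOut{\Enagg}(W)$ reduces to a single edge carrying exactly $1/5 = \xS_s$. I do not expect a genuine obstacle here: the real content is the conceptual identification of the discriminating cut $W = \{t,s\}$ together with the role of $\cu_s > 1$, which is precisely what lets $\xS_s$ (and hence the $\xS_s$-scaled right-hand side of \ref{IP:CutSteiner}) shrink to $1/\cu_s < 1$ while a full unit of terminal flow is still legally absorbed under \ref{IP:CapSteiner}.
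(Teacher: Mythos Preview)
Your proposal is correct and follows essentially the same approach as the paper: both exhibit a three-node chain $t\to s\to r$ in which, absent \ref{IP:CutTerminal}, the terminal's unit of flow is absorbed at a fractionally opened Steiner site (so only $\xS_s$ units continue toward $r$), whereas the cut on $W=\{t,s\}$ forces the full unit onward and raises the LP optimum to $2$. The only differences are cosmetic parameter choices ($\cu_s=5$, $\cc_s=1$ for you versus $\cu_s=10$, $\cc_s=5$ in the paper), and you should remember to specify $\cu_r\geq 1$ so that the strengthened relaxation is feasible.
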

\begin{proof}
Consider the simple example in which the whole network $\G = (\VG,\EG)$ consists only of three nodes on a line: $\VG=\{r,s,t\}$ and $\EG = \{(t,s),(s,r)\}$. We consider the following capacities and costs $\uS(s) = 10$, $\uR(r)=1$, $\cE(t,s) = \cE(s,r) = 1$, $\cS(s)=5$ and that  $r$ is the root, $s$ is the single Steiner location and $t$ the only terminal.
The optimal solution of~\ref{alg:MIP} without~\ref{IP:CutTerminal} and relaxing the constraints~\ref{IP:VarF} and~\ref{IP:VarX} to $\f \in \mathbb{R}^{\Eext}_{\geq 0}$ and $\xS \in [0,1]^{\sS}$ is $\f(t,s)=1$, $\xS_s = 1/10$ and $\f(s,r) = 1/10$ yielding an objective value of $5\cdot \xS_s + \f(t,s) + \f(s,r) = 1.6$
By introducing Constraint~\ref{IP:CutTerminal} $\f(s,r)$ must equal 1 and therefore, the solution obtained by introducing this constraint yields the integral solution $\f(t,s)=f(s,r)=1$ and $\xS_s = 0$ with objective value 2, therefore strengthening the model.
Note that we did not give values for the edges from and to the super sinks which are introduced in $\Eext$ as these do not influence the objective value.
\end{proof}

\begin{lemma}\label{lem:PardalosIncorrect}
The formulation for the Flow Cache Placement Problem (FCCP) given in~\cite{olivieraPardalos2011MathematicalAspectsOfNetworkRouting} is incorrect, as it does not ensure connectivity as claimed by the authors.
\end{lemma}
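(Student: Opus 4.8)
The plan is to prove the lemma by exhibiting an explicit counterexample: a small instance together with an assignment of variables that satisfies every constraint of the FCCP formulation of~\cite{olivieraPardalos2011MathematicalAspectsOfNetworkRouting}, yet encodes a ``routing'' in which some nodes are not connected to the receiver. First I would carefully transcribe the relevant constraints of their formulation --- the flow-conservation equalities at the network nodes, the constraints linking cache activation to absorbed flow, and the capacity bounds --- paying particular attention to which constraints are \emph{present} and, decisively, to the \emph{absence} of any cut- or connectivity-type inequality playing the role of our~\ref{IP:CutSteiner}. The whole argument rests on the observation, already central to the design of~\ref{alg:MIP}, that flow conservation together with local absorption at cache nodes is insufficient to force flow toward the root.

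Concretely, I would construct an instance whose candidate cache nodes contain a short directed cycle, say $s_1 \to s_2 \to s_1$, and route one unit of flow around this cycle. Each $s_i$ then \emph{receives} a unit of flow (so the formulation regards it as activated) and simultaneously \emph{absorbs} or forwards that unit within the cycle, so that flow conservation holds at every node of the graph while \emph{no} flow ever leaves $\{s_1,s_2\}$ toward the receiver. I would then verify, constraint by constraint, that this assignment lies in the feasible region of their formulation: conservation holds by construction, the activation--absorption coupling is satisfied because each cycle node both receives and emits exactly the flow the coupling demands, and all capacities can be met by choosing the instance's bounds generously.

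Having produced such a feasible point, I would argue that it does not correspond to any valid solution of the underlying placement problem, because the component $\{s_1,s_2\}$ (and any terminal whose flow is routed into it) is \emph{disconnected} from the receiver: there is no directed path in the flow-carrying subgraph from these nodes to the root. This is precisely the pathology that the connectivity inequalities~\ref{IP:CutSteiner} were introduced to exclude, and its availability here falsifies the authors' claim that their model guarantees connectivity.

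I expect the main obstacle to be technical rather than conceptual: the combinatorial heart of the counterexample --- a self-consistent cycle of cache nodes that absorbs its own flow --- is immediate, but making the argument airtight requires reconstructing the FCCP formulation faithfully and checking that the cyclic assignment genuinely satisfies \emph{every} one of its constraints. The delicate point is to be sure that no auxiliary activation or ordering constraint in their model (of the kind we ourselves needed, e.g.\ the MTZ constraints~\ref{IPM:SetOrder} in~\ref{alg:MCF-MIP}) happens to forbid the cycle; confirming that no such constraint is present in~\cite{olivieraPardalos2011MathematicalAspectsOfNetworkRouting} is what ultimately establishes the incorrectness.
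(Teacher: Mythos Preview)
Your approach is correct and follows the same strategy as the paper: exhibit a feasible point of the FCCP formulation whose flow-carrying subgraph is disconnected from the root. The difference lies in the specific construction. You propose a two-node cycle $s_1 \to s_2 \to s_1$ among cache nodes, whereas the paper uses a strictly simpler instance: one root, one cache $s$, and two terminals, where one terminal routes directly to the root and the other routes its unit of flow into $s$, which absorbs it via the edge to the super sink without ever being connected to the root. That example needs no cycle at all, so it automatically sidesteps the very worry you flag at the end --- namely, that some auxiliary ordering or anti-cycle constraint in the cited formulation might accidentally rule out your construction. Your cycle idea would also go through (once you explicitly set the activation variables $x_{s_i}=1$ rather than relying on ``receiving flow'' to trigger activation, which is a separate decision variable in the model), but the paper's single-cache example is both shorter and more robust, and it directly exhibits a \emph{terminal} disconnected from the root, which is the sharpest form of the failure.
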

\begin{proof}
In a series of works, the latest being~\cite{olivieraPardalos2011MathematicalAspectsOfNetworkRouting}, Oliviera and Pardalos have introduced the Tree Streaming Cache Placement Problem (TSCPP) and the Flow Streaming Cache Placement Problem (FSCPP). The TSCPP can directly be understood as M-CVSAP in which flow must be routed along a tree. Using our notation, they require $G^{\f}_{\Eext}$, which included each used edge, to be a tree.

They introduce FSCPP as a generalization of TSCPP in which the constraint that $G^{\f}_{\Eext}$ must be a tree is simply removed. Their Integer Program therefore reduces to~\ref{alg:MIP} in which~\ref{IP:CutSteiner} (and~\ref{IP:CutTerminal}) are removed. However, it is easy to see that  $G^{\f}_{\Eext}$ might indeed be disconnected such that there exist caches (activated Steiner nodes) that do not receive any flow and we only informally describe a counterexample. Consider a graph with one source (i.e., root), one possible cache (a Steiner site) and two demand nodes (terminals). Then if one terminal sends flow to the root and the other sends flow to the Steiner node such that their paths do not cross, then all constraints of~\ref{alg:MIP} (except~\ref{IP:CutSteiner} and~\ref{IP:CutTerminal}) are satisfied (assuming appropriate capacities) and one of the terminals is not connected to the root. This contradicts the connectivity requirement.
\end{proof}

\end{appendix}

\end{document}